\newcommand{\qand}{\quad\And\quad}
\newcommand{\inr}[2]{\left\langle#1,#2\right\rangle}
\newcommand{\wrt}{\,\textnormal d}
\newcommand{\wt}{\widetilde}
\newcommand{\magn}[1]{\left\|#1\right\|}
\newcommand{\pare}[1]{\mleft(#1\mright)}
\newcommand{\set}[1]{{\left\{{#1}\right\}}}
\newcommand{\bmat}[1]{\begin{bmatrix}#1\end{bmatrix}}
\newcommand{\alg}[1]{\textnormal{\texttt{#1}}}
\newcommand{\ceil}[1]{\mleft\lceil#1\mright\rceil}
\newcommand{\spliteq}[2]{\begin{equation}#1\begin{split}#2\end{split}\end{equation}}
\newcommand{\eq}[1]{\begin{equation}{#1}\end{equation}}
\DeclareMathOperator*{\poly}{poly}
\DeclareMathOperator*{\spn}{span}
\newcommand{\TODO}[1]{\textcolor{red}{[TODO\@ifnotempty{#1}{: #1}]}}
\newcommand{\sandeep}[1]{\textcolor{purple}{[sandeep\@ifnotempty{#1}{: #1}]}}
\title{Even Faster Kernel Matrix Linear Algebra via Density Estimation}
\author{Rikhav Shah\\MIT \and Sandeep Silwal\\UW-Madison \and Haike Xu\\MIT}
\date{October 1, 2025}
\begin{document}
\maketitle

\begin{abstract}
This paper studies the use of \textit{kernel density estimation} (KDE) for linear algebraic tasks involving the \textit{kernel matrix} of a collection of $n$ data points in $\R^d$.
In particular, we improve upon the best existing algorithms for computing the following up to $(1+\eps)$ relative error for a Gaussian kernel matrix and other kernels: matrix-vector products, matrix-matrix products, the spectral norm, and sum of all entries. The runtimes of our algorithms depend linearly on the dimension $d$, sub-quadratically in the number of points $n$, and polynomially on the target error $\eps$. Importantly, the dependence on $n$ in each case is far lower when accessing the kernel matrix through KDE queries as opposed to reading individual entries. Our improvements over existing best algorithms (particularly those of \cite{b0}  (ICML `21)) for these tasks reduce the polynomial dependence on $\eps$, and additionally decrease the dependence on $n$ in the case of computing the sum of all entries of the kernel matrix. For example, we reduce the power of $1/\eps$ from $\approx 7.7$ to $\approx 3.2$ for a $1-\eps$ relative error estimation of the spectral norm of a Gaussian kernel matrix. We complement our upper bounds with several lower bounds for related problems, which provide (conditional) quadratic time hardness results and additionally hint at the limits of KDE based approaches for the problems we study.

\end{abstract}

\section{Introduction}
Kernel functions and their associated kernel matrices have been extremely influential in practice, for example in ``classical" machine learning via the so called kernel methods \cite{b1,b2,b3}, as well as in ``modern" machine learning, where they lie at the heart of the \emph{attention mechanism} in transformers \cite{b4,b5,b6,b7,b8}.  

One common thread across these two eras is the \emph{computational burden} of working with kernel matrices: given a dataset $X$ as input, initializing kernel matrices exactly requires $\Omega(n^2 d)$ time naively, and assuming standard complexity conjectures such as SETH, one cannot hope to do better than $\Omega(n^2)$ time in the exact or very precise regimes when $d = \omega(\log n)$ \cite{b3}. The quadratic bottleneck is especially salient in the case of large $n$, which is common in practice. Fortunately, SETH-based lower bounds typically only rule out extremely accurate computation (e.g.\ with additive error $e^{- \omega(\log n)}$), so there is still hope of reasonable \emph{approximation} algorithms for kernel matrices, which has led to a long and fruitful line of work; see  \cite{b9,b10,b11,b12,b13,b0,b14,b15,b8} and references therein.

The goals of our paper are to develop much faster algorithms for approximating fundamental quantities related to kernel matrices, which are of interest to both theory and practice, including matrix-vector products, kernel sums, top eigenvector computation, and more (see \cref{a0}). For simplicity, in the main body, we focus on the Gaussian kernel ($e^{-\|x-y\|_2^2}$) and in the appendix we also cover analogous results for other kernels such as the Laplacian or Exponential kernels.

Much of the aforementioned algorithmic work on kernel matrices has been driven by a remarkable datastructure for  \textit{kernel density estimation} (KDE): Specialized to the Gaussian case, we can take as input  $n$ points $X = \{x_1, \cdots, x_n \} \subset \R^d$, $\eps, \mu \in (0, 1)$, pre-process $X$ in $\wt O(dn/\mu^{p_g})$ time, and return a datastructure $\mathcal{D}$ such that for any query point $y \in \R^d$, with probability $1-\frac{1}{\poly(n)}$, we have
\[\sum_{i = 1}^n \frac{e^{-\|y - x_i\|_2^2}}n  \le \mathcal{D}(y)  \le  (1+\eps)\sum_{i = 1}^n \frac{e^{-\|y - x_i\|_2^2}}n + \mu\] where currently $ p_g = 0.173+o(1)$ \cite{b12} is the best known exponent so far. A KDE query is intimately tied to the kernel matrix $K$ whose entries are given by $K_{ij}=e^{-\|x_i-x_j\|_2^2}$. Querying the KDE datastructure with $y = x_i$ approximates the $i$th row sum of $K$. Thus, many algorithmic works have used these queries to simulate downstream linear algebra algorithms for kernel matrices, without paying the naive $\Omega(n^2)$ time to construct them (e.g.\ see many of the aforementioned works).

Our algorithms take in a dataset $X$ and only operate on $K$ through KDE queries, and thus do not initialize $K$ exactly. We also do not make any structural assumptions on $K$ such as having entries bounded away from $0$ \cite{b14} or imposing sparsity assumptions as in \cite{b8}. Since our algorithms indirectly access $K$ through KDE queries, our algorithms are modular, allowing one to plug in any KDE algorithm in theory or in practice.

\section{Results and Comparison to Prior Best}\label{a0}
Our results can be summarized as follows. In terms of upper bounds, we first develop a very useful primitive: we devise a faster algorithm for approximately computing non-negative kernel matrix vector products (\Cref{a1}). We then study the application of this algorithm to two fundamental problems: multiplication of kernel matrices and computation of the spectral norm (top eigenvalue), with a witnessing vector. For matrix multiplication, we obtain an improved error estimate over previous approaches (\Cref{a2}). For spectral norm estimation, we strengthen the analysis of ``noisy'' power iteration appearing in \cite{b0}, resulting in the optimal asymptotic relationship between the error of the top eigenvalue desired and the error of the approximate matrix-vector products required (\Cref{a3}). We also give an improved algorithm for estimating $\mathbf 1^\top K\mathbf1$, i.e. the sum of the entries of the kernel matrix (\cref{a4}).

Lastly, we present lower bounds, based on the SETH hypothesis, which show quadratic time hardness results for problems related to our upper bounds (\cref{a5}). We also give evidence for why our non-negative matrix vector product (\cref{a6})
 and kernel sum algorithms could be (near) the limit of KDE based algorithms (\cref{a4}).

Our upper and lower bound results are summarized in \cref{a7} and \cref{a5} respectively, and details follow. Note that since we are focused on theoretical improvements, we only compare to the best existing algorithms, mostly from \cite{b0}. We now discuss our results further.
\begin{table*}[!ht]
\centering
\small
{\renewcommand{\arraystretch}{1.75}
\begin{tabular}{c|c|c|c}
\textbf{Problem}                                                                                                                                                                         & \textbf{Our Upper Bounds}                                                                                                                    & \textbf{Prior Work}                                                                           & \textbf{Improvement}                                                                                                  \\ \hline
\begin{tabular}[c]{@{}c@{}}Non-negative Matrix-Vector Product:\\ Compute $Ky+e$ where $\|e\|_2 \le \eps \|Ky\|_2$\end{tabular}   & \begin{tabular}[c]{@{}c@{}}$\wt O\left(\dfrac{ n^{1+p_g}}{\eps^{2 + p_g}}\right)$    \\ \cref{a8} \end{tabular}                            & \begin{tabular}[c]{@{}c@{}}$\wt O\left(\dfrac{n^{1+p_g}}{\eps^{3 + 3p_g}}\right)$\end{tabular}    & \begin{tabular}[c]{@{}c@{}}$\approx \dfrac{1}{ \eps^{1 + p_g}} \approx \dfrac{1}{\eps^{1.173}}$\end{tabular}                                                           \\ \hline
\begin{tabular}[c]{@{}c@{}}Top Eigenvalue:\\ Output unit vector $y$ with $y^\top Ky \ge (1-\eps)\lambda_1(K)$\end{tabular}           & \begin{tabular}[c]{@{}c@{}}$\wt O\left(\dfrac{n^{1+ p_g}}{\eps^{3 + p_g}}\right)$      \\ \cref{a9} \end{tabular}                         & \begin{tabular}[c]{@{}c@{}}$\wt O\left(\dfrac{n^{1 + p_g}}{\eps^{7 + 4p_g}}\right)$\end{tabular}    & \begin{tabular}[c]{@{}c@{}}$\approx \dfrac{1}{ \eps^{4 + 3p_g}} \approx \dfrac{1}{\eps^{4.519}}$\end{tabular}                                                           \\ \hline
\begin{tabular}[c]{@{}c@{}}Non-negative Vector-Matrix-Vector Product:\\ Compute a $1+\eps$ approximation to $y^\top Ky$\end{tabular} & \begin{tabular}[c]{@{}c@{}}$\wt O\left(\dfrac{n^{1 + p_g}}{\eps^{2 + p_g}}\right)$ \\ \cref{a10}\end{tabular}                & -                                                                                             & \begin{tabular}[c]{@{}c@{}} - \end{tabular}                                                                  \\ \hline
\begin{tabular}[c]{@{}c@{}}Kernel Sum:\\ Compute a $1+\eps$ approximation to $\mathbf1^\top K\mathbf1$.\end{tabular}                             & \begin{tabular}[c]{@{}c@{}}$\wt O\left( \dfrac{n^{\frac{1 + p_g}2}}{\eps^{4}} \right)$\\ \cref{a11} \end{tabular} & \begin{tabular}[c]{@{}c@{}}$\wt O\left(\dfrac{n^{\frac{2 + 5p_g}{4 + 2p_g}}}{\eps^{\frac{8+ 6p_g}{2 + p_g}}}\right)$\end{tabular} & \begin{tabular}[c]{@{}c@{}}$\approx \dfrac{n^{0.07}}{ \eps^{0.16}}$\end{tabular}
\end{tabular}
}
\caption{\label{a7} Summary of our upper bound results for the Gaussian kernel matrix. $p_g = 0.173 + o(1)$ \cite{b12} denotes the best known exponent for the Gaussian kernel \cite{b12}  The prior results are from \cite{b0}, which are the fastest theoretical algorithms, to the best of our knowledge. The ``Improvement Factor" column highlights the asymptotic reduction in complexity provided by our algorithms. All upper bounds are also multiplied by $d$ (including those of \cite{b0}) which we omit for simplicity. We also omit all $+o(1)$ factors in the exponents for simplicity. We refer to the specific theorems for details.}
\end{table*}

\begin{table*}[!ht]
\centering
{\renewcommand{\arraystretch}{1.75}
\begin{tabular}{c|c}
\textbf{Problem}                                                                                                                                                                                                                                           & \textbf{Our Lower Bounds}                       \\ \hline
\begin{tabular}[c]{@{}c@{}}Number of row/column samples required \\ to estimate $s(K)$ up to $1+\eps$ factor\end{tabular}                                                                                                                                  & \begin{tabular}[c]{@{}c@{}}$\Omega\left( \dfrac{\sqrt{n}}{\eps^2} \right)$      \\ \cref{a12} \end{tabular}                                \\ \hline
\begin{tabular}[c]{@{}c@{}}Compute $v^\top Kw$ for $v,w$ non-negative vectors\\ up to $n^{O(1)}$ factor\end{tabular}                                                                                                                                           & \begin{tabular}[c]{@{}c@{}}$\Omega(n^{2-\alpha})$ for any $\alpha > 0$  (SETH)       \\ \cref{a13} \end{tabular}                           \\ \hline
\begin{tabular}[c]{@{}c@{}}Compute $s(K) = \mathbf1^\top K\mathbf1$ for ``asymmetric" kernel matrix $K$ \\ where rows and columns are indexed by different points\end{tabular}                                                                                               & \begin{tabular}[c]{@{}c@{}}$\Omega(n^{2-\alpha})$ for any $\alpha > 0$  (SETH)       \\ \cref{a14} \end{tabular}                            \\ \hline

\begin{tabular}[c]{@{}c@{}}Compute a $n^{O(1)}$ approximation to $\sigma_1(K)$,\\ the top singular value of an ``asymmetric" kernel matrix $K$\end{tabular}                                                                                                & \begin{tabular}[c]{@{}c@{}}$\Omega(n^{2-\alpha})$ for any $\alpha > 0$  (SETH)       \\ \cref{a15} \end{tabular}                            \\ \hline

\begin{tabular}[c]{@{}c@{}}Compute a $n^{O(1)}$ approximation to a non-negative\\ matrix vector product of an ``asymmetric" kernel matrix $K$\end{tabular}                                                                                                & \begin{tabular}[c]{@{}c@{}}$\Omega(n^{2-\alpha})$ for any $\alpha > 0$  (SETH)       \\ \cref{a16} \end{tabular}

\end{tabular}
}
\caption{\label{a5} Summary of our lower bounds. 1st row: \cite{b0} gave a lower bound of $\Omega(\sqrt{n})$. See the theorems for details.}
\end{table*}

\subsection{Approximate Matrix-Vector Products}
The first algorithmic problem we consider is the following: given a Gaussian kernel matrix $K$ and a vector $y$, output an approximation of the matrix-vector product $Ky$. We specialize to the case where $y$ is an entry-wise non-negative vector, as in prior work \cite{b0}. 

\begin{definition}[$\eps$-Non-negative Matrix-Vector Product, \cite{b0}]\label{a17}
    Given a non-negative vector $y$ and precision $\eps$, return $x$ such that $x = Ky + e$ where $\|e\|_2 \le \eps \|Ky\|_2$ and $e$ is entry-wise non-negative.
\end{definition}

The algorithm of \cite{b0} returns $x$ as in \cref{a17} in $\wt O\left( \frac{dn^{1 + p_g+o(1)}}{\eps^{3 + 3p_g + o(1)}} \right) = \wt O\left( \frac{dn^{1.173+o(1)}}{\eps^{3.346 + o(1)}} \right)$ time for the Gaussian case\footnote{$\wt O(\cdot)$ suppresses factors of the form $\log^{O(1)}(n/\eps)$.}. Our first result obtains a faster algorithm, removing more than a $1/\eps$ factor in the bounds of \cite{b0}.

\begin{theorem}[Special case of \cref{a8}]\label{a18}
    There is an $\eps$-non-negative approximate matrix-vector product algorithm (\Cref{a19}) satisfying \cref{a17} for the Gaussian Kernel matrix which runs in time $\wt O\left( \frac{d n^{1 + p_g+o(1)}}{\eps^{2 + p_g+o(1)}} \right) = \wt O\left( \frac{d n^{1.173+o(1)}}{\eps^{2.173+o(1)}} \right)$.
\end{theorem}

Our theorem achieves a stronger guarantee: it can approximate every coordinate of $Ky$ up to a multiplicative $1+\eps$ and additive $\eps/\sqrt{n}$ factor term. See \cref{a1} for the proof. In \cref{a6} we additionally give an argument, suggesting that among a natural class of algorithms which achieve our per-coordinate error, the running time of $n^{1+p_g}$ maybe necessary, where $p_g$ is the best exponent in a KDE query (currently $p_g = 0.173 + o(1)$ \cite{b12}).

\begin{remark}
While our primary Matrix-Vector Product algorithm focuses on non-negative vectors, we demonstrate that relaxing this constraint to general mixed-sign vectors may encounter conditional quadratic-time hardness barriers: in \cref{a20} and \cref{a21}, we show that allowing general mixed-sign vectors renders a related problem conditionally hard (requiring nearly quadratic time under SETH). 
\end{remark}

Our non-neg.\ matrix vector product algorithm also implies faster approximate matrix-multiplication algorithm of a Gaussian kernel matrix $K$ with any entry-wise non-neg.\ matrix $A$ (e.g. another kernel matrix). In time $n$ times a single non-neg.\ matrix vector query, we can output an approximation $B$ to $KA$ whose frobenius norm error $\|KA-B\|_F$ is bounded by $\eps \|KA\|_F$, rather than $\eps \|K\|_F \|A\|_F$, as is typical of approximate matrix multiplication algorithms; see \cref{a2} for details. Our non-neg. matrix vector product is also central to the following two subsections.

\subsection{Approximating the Top Eigenvalue}
The second (and arguably our main technical focus) algorithmic problem we consider is: given a Gaussian kernel matrix $K$, output a unit vector $u$ such that $u^\top Ku$ approximates the top eigenvalue $\lambda_1(K)$. Many prior works have studied sublinear approximation algorithms in more general settings, such as a symmetric or PSD matrix with bounded entries \cite{b16,b17,b18,b19,b20}, without any kernel structure.
The most relevant work in this line is \cite{b21}, which shows how to achieve an \textit{additive error} estimate of $\lambda_1(K)$. They sample a $O(1/\eps)\times O(1/\eps)$ principal sub-matrix of $K$ and use only those entries to compute a unit vector $u$ which satisfies $u^\top Ku\ge\lambda_1(K)-\eps n$
in time $\wt O\left(1/\eps^2 \right)$. Without access to the underlying kernel structure, \emph{additive error} is the best one can hope for \cite{b21}. In particular, a \textit{relative error} guarantee is ruled out for sublinear algorithms: sampling $o(n^2)$ entries cannot distinguish between $K$ the identity matrix and $K$ the identity matrix plus a symmetric pair of off-diagonal 1s, which have norms 1 and $\sqrt2$.

In our setting, with access to the kernel structure, \emph{relative errors are possible}. That is, one demands the vector $u$ returned by the algorithm satisfy\footnote{This is strictly stronger when the entries of $K$ are bounded by 1 which is our case, as $\lambda_1(K)\le n$}
\eq{\label{a22}u^\top Ku\ge(1-\eps)\lambda_1(K).}
This guarantee was achieved by \cite{b0} in the same setting as our paper. To the best of our knowledge, this is the only prior work that can guarantee a worst-case relative error as \eqref{a22} in subquadratic time. Their idea is to first exploit the kernel structure via KDE queries to devise a fast algorithm for approximating matrix-vector products in the sense of \Cref{a17}. Then, they implement the power method using an approximate matrix-vector product algorithm in place of exact matrix-vector products.

It's important to note that the desired accuracy level of the computation of $\lambda_1(K)$ and the accuracy to which the underlying approximate matrix-vector products are performed may differ. Even though we desire a $1-\eps$ multiplicative error in computing $\lambda_1(K)$, we may potentially need to use a \emph{different} level of error $\delta$ in the approximate non-negative matrix-vector computation of \cref{a17} when simulating the power method (e.g. we could require the error vector $e$ satisfies $\|e\|_2 \le \delta \|Ky\|_2$ on an input $y$). 

The algorithm of \cite{b0} sets $\delta=O(\eps^2)$, i.e. matrix vector products are computed \textit{much} more accurately than the desired accuracy of the top eigenvalue estimate. Since a smaller $\delta$ requires a more expensive running time, meaning the choice of $\delta$ has a large impact in the final running time of the top eigenvector computation. Carefully adjusting some expressions in their argument shows that $\delta=O(\eps^{1.5})$ suffices, though this is the limit of their argument (see \Cref{a23}). Altogether, the final runtime stated in \cite{b0} is $\wt O\left( \dfrac{dn^{1+ p_g+o(1)}}{\eps^{7 + 4p_g+o(1)}} \right) = \wt O\left( \dfrac{dn^{1.173+o(1)}}{\eps^{7.692+o(1)}} \right)$ for the Gaussian kernel, in particular, their $\eps$ dependence is at least $1/\eps^{7.692}$.

We give an entirely different analysis of the power method, departing from the ``classic" way it is understood, and show that $\delta=O(\eps)$ both suffices and is necessary (see \Cref{a3} for a technical overview of this result). When combined with our improvement to matrix-vector products, \textbf{we remove a $\approx 1/\eps^{4.519}$ factor  from the prior best result}.

\begin{theorem}[Special case of \cref{a9}]\label{a24}
    Let $\eps \in (0, 1)$. There exists an algorithm (Algorithm \ref{a25}) which returns a scalar $\lambda$ and unit vector $u$ satisfying $\lambda_1(K)\ge u^\top Ku\ge(1-(5/8)\eps)\lambda_1(K)$ and $(1+\eps/8)\lambda_1(K)\ge\lambda\ge(1-\eps/2)\lambda_1(K).$ The overall runtime of the algorithm is $\wt O\pare{\dfrac{dn^{1 + p_g +o(1)}}{\eps^{3 + p_g+o(1)}}} = \wt O\pare{\dfrac{dn^{1.173+o(1)}}{\eps^{3.173+o(1)}}}$.
\end{theorem}

Ignoring log factors, our running time can be understood as running $1/\eps$ iterations of noisy-power iteration using $\delta = O(\eps)$ precision for a non-negative matrix-vector product, each of which takes $\approx dn^{1+p_g}/\eps^{2+p_g}$ time for $p_g = .173+o(1)$ for the Gaussian kernel. In \cref{a26}, we give a lower bound, showing that $\Omega\pare{\frac{\log n}\eps}$ iterations are necessary for \textit{any} algorithm that works by post-processing the data collected during the noisy power method (noisy matrix vector products starting from a fixed vector).
We leave open the question of whether one can prove this lower bound for more sophisticated adaptive sequences of matrix-vector queries. Lastly in \cref{a27}, we empirically demonstrate that the scaling $\delta = O(\eps)$ is also the appropriate choice in practice.

\subsection{Vector-Matrix-Vector Products and Kernel Sum}
Another type of structured matrix queries that are interesting are vector-matrix-vector queries \cite{b22}: given a vector $v$, output an approximation to $v^\top Kv$. We consider the case where $v$ is entry-wise non-negative (the case of having both positive and negative entries will be discussed in \cref{a28}). A simple reduction shows that computing $Kv$ up to the precision of \cref{a17} and then taking the inner product of the noisy output with $v$ gives the following result, with running time dominated by the noisy matrix vector computation. 

\begin{theorem}[Special case of \cref{a10}]
   Let $v \in \R^n$ be an entry-wise non-negative vector. We can compute $v^\top Kv$ for a Gaussian kernel matrix $K$ up to a $1+\eps$ multiplicative factor in time $\wt O\left( \dfrac{d n^{1+p_g+o(1)}}{\eps^{2 + p_g+o(1)}} \right) = \wt O\left( \dfrac{d n^{1.173+o(1)}}{\eps^{2.173+o(1)}} \right)$.
\end{theorem}

Prior to our work, one could obtain an identical approximation result but with an increase of a $1/\eps$ factor in the running time via the noisy matrix-vector product algorithm of \cite{b0}. However, substantially different algorithms are needed for the special case of vector-matrix-vector products where we wish to compute $\mathbf{1}^\top K \mathbf{1} = s(K)$, the sum of all the entries in $K$. This special case has been highlighted in practice in applications such as kernel alignment (a popular measure of similarity between kernel matrices)  \cite{b23} and in some statistics applications \cite{b24}. In this special case, there is hope of going beyond the $\Omega(n)$ lower bound needed to read general $v$. Indeed, \cite{b0} gave a $o(n)$ time algorithm which approximates $s(K)$ up to a $1+\eps$. Our improved result is the following.

\begin{theorem}[Special case of \cref{a11}]\label{a29}
        In time $\wt O\left( \dfrac{n^{\frac{1 + p_g}{2}}}{\eps^{4}} \right)$, \cref{a30} returns a $1+\eps$ approximation to $s(K)$ with prob. $99\%$.
\end{theorem}

Our exponents in $n$ and $\eps$ are smaller than the bound of \cite{b0}, which obtain a running time of at least $n^{0.659}/{\eps^{4.159}}$ for the Gaussian kernel matrix. Our algorithm of \cref{a11} (\cref{a30}) samples $\Theta(\sqrt{n}/\eps^2)$ points, improving upon \cite{b0} which gave a lower bound of $\Omega(\sqrt{n})$.

\begin{restatable}{thm}{kernelsumsampling}\label{a12}
    Suppose $n \ge \Omega(1/\eps^2)$. Consider an algorithm which specifies a subset $T \subseteq [n]$, receives the set of points $\{x_i, i \in T\}$ and returns a $1+\eps/100$ approximation to $s(K)$, where $K$ is the underlying kernel matrix of the full set of $n$ points, with probability at least $99\%$. Then we must have $|T| \ge \Omega(\sqrt{n}/\eps^2)$.
\end{restatable}

Lastly in \cref{a31}, we give an argument suggesting why our upper bound could be the limit of a certain natural class of KDE based algorithms for computing $s(K)$.

\subsection{Improved Lower Bounds}\label{a28}
 We complement our upper bounds with lower bounds, demonstrating the limits of algorithms for computing on kernel matrices on problems similar to those we have introduced earlier. At a high level, all of our lower bounds encode the orthogonal vectors (OV) problem (\cref{a32}) in kernel computations, which implies they must take nearly quadratic time, assuming SETH. 

First we discuss the case of computing approximate matrix vector products. Clearly, there remains a big gap in our understanding of the complexity of matrix-vector products for kernel matrices: We have near linear upper bounds for the non-negative case with strong relative-error grantees (our \cref{a8}). On the other hand, virtually no non-trivial upper or lower bounds are known for the general case where the input vector can have both positive and negative entries, a question raised in prior works \cite{b0,b8}.

Our aim is to bridge this gap and provide a stronger evidence for the hardness of computing relative error estimates of $Kx$ for general $x$. This task seems out of reach of our techniques, but we show that a related intermediate problem requires $\Omega(n^2)$ time, assuming SETH. The intermediate problem is a generalization of computing $Kx$, where the kernel matrix can be ``asymmetric" up to \emph{one} vector, and we use a slightly different kernel function $e^{-\|x+y\|_2^2}$ (note the plus) for the off-diagonal entries. We also set the diagonal to all ones. More formally, we consider the problem of computing matrix-vector products for matrices $K'$, where the rows correspond to a dataset $X = \{x_1, \cdots, x_n\}$, the columns correspond to the same set $X$ \emph{plus the origin} $0$, the main diagonal entries are all ones, and the off-diagonal entries are given by $K_{i,j} = e^{-\|x_i + x_j\|_2^2}$ Note that in the theorem below, $K' \in \R^{n \times n+1}$. 

 \begin{restatable}{thm}{lbnegmvp}\label{a21}
    Consider the matrix $K'$ defined above. Any algorithm which takes as input dataset $X$, vector $x \in \R^{n+1}$, and returns a $y$ such that $y = K'x+e$ with $\|e\|_2 \le n^{-0.002} \cdot \|K'x\|_2$ requires almost quadratic time, assuming SETH.
\end{restatable}

Note that if the input vector is entry-wise non-negative, then it is easy to achieve the above guarantee in sub-quadratic time; see the discussion in \cref{a20}. Thus, the problem of \cref{a21} is a natural `intermediate' problem where one can obtain sub-quadratic time relative error for matrix-vector products in the case of non-negative inputs, but the same task (conditionally) requires nearly quadratic time, for vectors with mixed signs.

Our next lower bound studies a generalization of our vector matrix vector guarantee: can we compute $v^\top K w$ for a kernel matrix $K \in \R^{n \times n}$ where $v$ may not necessarily be equal to $w$? We show that the problem is hard, even in the case where $v$ and $w$ are assumed to be entry-wise non-negative. 

 \begin{restatable}{thm}{lbvkw}\label{a13}
    For any $d = \omega(\log n)$, computing $v^\top Kw$ for non-negative $v,w$ for the Gaussian kernel matrix $K \in \R^{n\times n}$ up to polynomial relative error requires almost quadratic time, assuming SETH.
\end{restatable}

The underlying construction of the above theorem has broader applications in showing hardness for computations on \emph{asymmetric} (but square) kernel matrices: where the rows and columns may not be indexed by the same point set. Such asymmetric kernel matrices are also common in practice, and we show that essentially none of our prior upper bounds are possible for the asymmetric case. In particular, we show that in the asymmetric case, computing any of the following quantities, even up to a polynomial approximation factor, solves the OV problem and thus requires nearly quadratic time for the  Gaussian kernel: \textbullet\ $\mathbf1^\top K\mathbf1$ (\cref{a14}), \textbullet\ top singular value (\cref{a15}), \textbullet\ computing a non-negative matrix vector product (\cref{a16}).

\section{Preliminaries}\label{a33}
$X = \{x_1, \cdots, x_n\} \subset \R^d$ are our input data points.  To avoid confusion, we let $x(i)$ denote the $i$th coordinate of a vector $x$. Throughout the paper, $\wt O(\cdot)$ hides $\poly(\log (n/\eps))$ terms. For a kernel matrix $K$, we let $s(K)$ denote the sum of its entries. For a subset $S\subseteq [n]$, we let $K_S$ denote the principal submatrix where we keep the rows and columns whose indices are in $S$.

\begin{definition}[Orthogonal Vectors (OV)]\label{a32} Given two sets of vectors $A = \{a_1, \ldots , a_n\} \subseteq \{0, 1\}^d$ and $B = \{b_1, \ldots , b_n\} \subseteq \{0, 1\}^d$ of $n$ binary vectors, decide if there exists a pair $a \in A, b \in B$ such that $\langle a, b \rangle = 0$.
\end{definition}

It is known that OV requires almost quadratic time (i.e., $n^{2-o(1)}$) for any $d = \omega(\log n)$, assuming the Strong Exponential
Time Hypothesis (SETH) \cite{b25,b26,b27}. OV is the only lower bound assumption we use and thus they are conditional on OV.

\paragraph{Organization of the Rest of the Paper}
Due to the page limit, we are only able to present our theorem statements, comparison to prior work, as well as high level intuition for some of our main theorems. In the appendix, we give the full proofs of our theorems. In \cref{a34}, we give a high level intuitive overview of \cref{a18}, our improved matrix-vector product theorem, with proofs appearing in \cref{a1}. In \cref{a35}, we do the same for \cref{a24}, with full details in \cref{a3}. \cref{a36} contains an intuitive overview of \cref{a29} with full details in \cref{a4} including the proofs of \cref{a10} and \cref{a12}. 
\cref{a37} discusses extensions of our upper bounds to other kernel functions.
\cref{a20} and \cref{a38} contain proofs of our lower bound results of \cref{a5}, and we refer to this table with pointers to specific theorems. Lastly, while the main focus of our work is on improved theoretical guarantees, we also present empirical evaluation of our power method analysis in \cref{a27}, where we demonstrate that our theoretical scaling of $\delta =O(\eps)$ is also the correct scaling to use in practice.

\section{Technical Overview of Upper Bounds}
\subsection{Intuition for \cref{a18} (\cref{a19}) }\label{a34}
We start with a discussion of the prior algorithm of \cite{b0}. Recall we can take $p_g=0.173+o(1)$ for the Gaussian kernel \cite{b12}. Given an input vector $y$, the goal is to satisfy \cref{a17}. Since we know that $\|Ky\|_2^2 \ge 1$ using the non-negativity of $y$, it suffices to obtain a $1+\eps$ multiplicative approximation and $\eps/\sqrt{n}$ additive error per coordinate of $Ky$ (so that the total additive error is at most $\eps \le \eps \|Ky\|_2$). Thus, let's focus on the first coordinate. For the first coordinate, we want to estimate the sum  $(Ky)_1 = \sum_{j = 1}^n y_j k(x_1, x_j).$
The sum above looks almost like a standard KDE query of \cref{a39}. However, it is weighted by the coefficients of the vector $y$. While some KDE datastructures can handle positive weights in a white-box manner \cite{b9}, this may not be true for all KDE datastructures, so \cite{b0} give a black-box reduction for computing the above sum given any KDE datastructure satisfying \cref{a39}. 

Their analysis proceeds as follows. First, note that can ignore all coordinates $y_j \le O(\eps /n^{1.5})$ since these values can only contribute value $\le O(\eps/\sqrt{n})$ error to the above sum. \cite{b0} then proceed by bucketing the rest of the coordinates, which are all in the range $[\eps/n^{1.5}, 1]$, into geometrically increasing values by a factor of $1+\eps$. Ignoring logarithmic and $d$ factors for the sake of this discussion, this results in $O(1/\eps)$ total buckets.

Inside every bucket, the coordinates are approximately the same (up to $1+\eps$), so one can effectively `factor out' the coordinates $y_j$ in the desired sum for the first coordinate of $Ky$, and use a standard KDE query. Intuitively, this bucketing allows us to treat elements with similar magnitudes as a single group, reducing the problem to a standard unweighted KDE query for each bucket. \cite{b0} show that setting a \emph{universal} value of $\mu = O(\frac{\eps^2}n)$ in the KDE query suffices for every bucket, leading to a running time of $\wt O(\frac{n^{p_g}}{\eps^{2+2p_g}})$ per bucket and a total running time of $\wt O(\frac{n^{p_g}}{\eps^{3+2p}})$ per coordinate, and hence an overall $\eps$-precision non-negative matrix vector product in $\wt O(\frac{n^{1+p_g}}{\eps^{3+2p_g}})$.

The discussion demonstrates that the extra $1/\eps$ factor is from the bucketing procedure. Our first idea is to essentially get rid of the bucketing procedure (almost) entirely! To be more precise, instead of using $\Theta(\log(n/\eps)/\eps)$ buckets, we only use $\Theta(\log(n/\eps))$ many buckets by partitioning the range $[\eps/n^{1.5}, 1]$ by endpoints that differ by a power of 2 rather than $1+\eps$. This presents one challenge: we no longer have the guarantee that inside  every bucket, the coordinates are all the same up to a $1+\eps$ factor. This is what allows \cite{b0} to reduce the summation inside a bucket (note the summation is weighted by the coordinates of the input vector $y$) into a \emph{single} KDE query by factoring out $1+\eps$. Our second idea is instead to show that a (positively) weighted KDE sum, e.g. for $(Ky)_1 = \sum_{j = 1}^n y_j e^{-\|x_1 - x_j\|_2^2}$ can always be \emph{directly} be reduced to a single KDE query (\cref{a40}). However, an astute reader may wonder why then do we need to perform the bucketing at all? In other words, we can simply just view every entry of $Ky$ as a weighted KDE sum and reduce the problem to $n$ KDE queries. The challenges lies in the fact that if we naively do this, we need to set the additive error parameter to be smaller than $1/n^{1.5}$, which gives a much worse $n$ dependency!

Thus, our third idea is the following. Rather than using one KDE data-structure per bucket (with fixed $\mu$ as done in \cite{b0}), we use an \emph{adaptive} choice of $\mu$, depending on the approximate value of the bucket. We have to handle two regimes: the first is when the bucket has mass more than $1/\sqrt{n}$, i.e., consider coordinates with value $\Theta(2^{t}/\sqrt{n})$ for some parameter $t \ge 0$. Such buckets can only have at most $O(n/2^{2t})$ many coordinates, since the total $\ell_2$ squared sum of the input $y$ is bounded by $1$. For this bucket, we scale the corresponding coordinates in $y$ by a fixed factor of $\Theta(\sqrt{n}/2^t)$ to ensure that all the coordinates are now in the range $(0, 1)$. We then evaluate the weighted KDE sum via one underlying KDE query, using our new transformed weights, using the previous idea. If we set the error for this bucket to be $\mu_t$, then the total additive error we get for the contribution of coordinates in this bucket is $\Theta(2^{t}/\sqrt{n})$, coming from factoring out the normalization that we performed on the coordinates $y_j$ in our weighted KDE sum, times $O(n/2^{2t})$, coming from the size of the bucket since the KDE query gives a scaled sum, times $\mu_t$. Thus, the additive error per bucket can be approximately bounded by $\approx \frac{2^{t}}{\sqrt{n}} \times  \frac{n}{2^{2t}} \times \mu_t$. The right choice is to set $\mu \approx \frac{2^t \eps}{n}$, bounding the overall additive error per bucket by $\wt O(\eps/\sqrt{n})$, which allows us to control the total error across all buckets. Buckets that are smaller than $1/\sqrt{n}$, say $\Theta(\frac{1}{2^t \sqrt{n}})$ for $t \ge 1$ are also shown to require $\mu \approx \frac{2^t \eps}{n}$, although the analysis is slightly different.
Thus, the overall query time is approximately \[\frac{1}{\eps^2} \sum_{t \ge 0} \frac{1}{\mu_t^{p_g}} = \sum_{t \ge 0} \left(\dfrac{n}{2^t \eps} \right)^{p_g}  = O\left( \dfrac{n^{p_g}}{\eps^{2+p_g}} \right)\] across \emph{all} buckets for a fixed coordinate, avoiding the $1/\eps$ overhead. One important point to note is that we have to be careful and bound the overall running time of creating all the different KDE datastructures as well. 
Note that \cref{a8} only holds for non-negative vectors. This is not a limitation of our algorithm or our analysis, but could come from a fundamental barrier based on SETH, which we quantify in \cref{a21} and \cref{a20} by studying a related family of matrices where obtaining relative error for non-negative matrix vector is possible in sub-quadratic time, but is not possible with mixed signs under SETH.

\subsection{Intuition for \cref{a24} ( \cref{a25})}\label{a35}
 In our algorithm, exact matrix-vector products in the usual power method are replaced by a approximate matrix-vector products with error satisfying \Cref{a17}.
This method, stated precisely in \Cref{a25}, produces a sequence of unit vectors $z_0,z_1,z_2,\ldots$ satisfying the following definition.
\begin{definition}[$\delta$-approximate power iteration]\label{a41}
Let $z_t,\wt z_t$ be the sequence of normalized and unnormalized unit vectors produced by $\delta$-approximate power iteration starting with $\inr{v_1}{z_0}\ge1/\sqrt n$. That is,
\eq{\label{a42}
\wt z_{t+1}=Kz_t+\magn{Kz_t}\delta u_t
\qand
z_{t+1}=
\frac{\wt z_{t+1}}{\magn{\wt z_{t+1}}}
}
for some unit vectors $u_t$ satisfying $\inr{v_1}{u_t}\ge0$.
\end{definition}
(Note that \cref{a17} guarantees the returned answer is entry-wise non-negative and the top eigenvector of our kernel matrices is also entry-wise non-negative via the Perron-Frobenious theorem).
In this definition, $\delta$ denotes the error to which each matrix-vector product is performed. When $\delta=0$, this is the exact power method and $z_t$ approaches $\spn\set{v_j:\lambda_j=\lambda_1}$. For $\delta>0$, the best available analysis in the literature is due to \cite{b0}.
They show that if $\delta=O(\eps^2)$, then the power method produces a unit vector $u$ satisfying the relative error guarantee, $u^\top Ku\ge(1-\eps)\lambda_1,$ in $O(\log(n/\eps)/\eps)$ iterations.
The smaller $\delta$ is, the more expensive the approximate matrix-vector products become. It's therefore desirable to improve upon the condition $\delta=O(\eps^2)$. The classical analysis of the power method as well as the analysis of \cite{b0} expresses each iterate $z_t$ in the eigenbasis of $K$ and considers the mass this vector places on the top eigenvector and the eigenvalues above versus below the threshold $(1-\eps)\lambda_1$.
The analysis then shows that the mass placed below $(1-\eps)\lambda_1$ is driven to 0. By adjusting some of the expressions in \cite{b0}, one can improve $\delta=O(\eps^2)$ to $\delta=O(\eps^{1.5})$, but
the argument fundamentally breaks down for $\delta=\Omega(\eps^{1.5})$ as demonstrated by \Cref{a23}.

Rather than track the mass that each iterate $z_t$ places on eigenvalues above and below $(1-\eps)\lambda_1$, \textbf{our analysis tracks the mass placed on just the top eigenvector $v_1$}, that is, $\inr{v_1}{z_t}$, and the norm of the vector $\magn{Kz_t}$. This can be thought of as a kind of weighted sum of the components $\inr{v_j}{z_t}$ in the eigenbasis where the components corresponding to small eigenvalues are weighted less. Notably, we do not make a hard cutoff between eigenvalues that are above or below $(1-\eps)\lambda_1$. We directly show that $\magn{Kz_t}$ is driven to $\lambda_1$ \textit{without} appealing to an argument that shows the components in the direction of small eigenvectors become small. Our tight analysis of how MVP error affects $\lambda_1$ relative error in the power method may also be of independent interest; see \cref{a3}.

\begin{example}[Limitation of previous argument for $\delta=\Omega(\eps^{1.5})$]\label{a23}
Suppose one runs $\delta$-approximate power iteration for $\delta\ge(2\eps)^{1.5}/(1-5\eps^2)$ on a kernel matrix $K\in\R^{3\times 3}$
with the goal of achieving the relative error guarantee \eqref{a22} in any number iterations.
Suppose $K=VDV^\top$ has the eigenvalues $\lambda_1>0,\quad\lambda_2=(1-2\eps)\lambda_1,\quad \lambda_3=0.$
Let $x_t=V^\top z_t$, $\wt x_t=V^\top \wt z_t$. This example shows that one \textit{cannot} conclude convergence by only considering the quantities $x_t(1)$ and $\magn{x_t(2:3)}=\sqrt{x_t(2)^2+x_t(3)^2}$.
In particular, the set $S=\set{x\in\R^3:x_t(1)=\sqrt{1-2\eps},\,\magn{x_t(2:3)}=\sqrt{2\eps}}$
both contains vectors that do not witness $\lambda_1$, and vectors that are fixed by the noisy power method.
Consider $x_t=\bmat{\sqrt{1-2\eps}&\sqrt{2\eps}&0}^\top\in S.$ Then it is possible for the next iterate $\wt x_{t+1}$ to be equal to
\begin{align*}
&\bmat{\lambda_1\sqrt{1-2\eps}&\lambda_2\sqrt{2\eps}+\delta\sqrt{\lambda_1^2(1-2\eps)+\lambda_2^2\cdot2\eps}&0}^\top \\
&= \lambda_1\bmat{\sqrt{1-2\eps}&\sqrt{2\eps}-(2\eps)^{1.5}+\delta\sqrt{1-8\eps^2+8\eps^3}&0}^\top.
\end{align*}
If $\delta\ge\frac{(2\eps)^{1.5}}{\sqrt{1-8\eps^2+8\eps^3}}$, then it's possible for $x_{t+1}=x_t$, meaning that the power method makes no additional progress.
On the other hand,
consider $x_t=\bmat{\sqrt{1-2\eps}&0&\sqrt{2\eps}}^\top\in S.$
The quadratic form $z_t^\top Kz_t
=x_t(1)^2\lambda_1=(1-2\eps)\lambda_1,$
does not give an $\eps$ relative error approximation of $\lambda_1$.
\end{example}

\subsection{Intuition for \cref{a29} (\cref{a30})}\label{a36}
 The algorithm (and proof) proceeds in three steps. Here $p_g = 0.173 + o(1)$ in the Gaussian kernel.
\\
\noindent \textbf{Step 1:} In the first step, we show that a random $m \times m$ principal submatrix $K_A$ for $m = \Theta(\sqrt{n}/\eps^2)$ preserves the kernel sum. More precisely, we show via a concentration bound calculation in \cref{a43} that it suffices to compute the sum of the \emph{off-diagonal} entries of $K_A$, denoted by $s_o(K_A)$, up to additive $\text{poly}(1/\eps)$ to get a $1+\eps$ multiplicative approximation of $\textbf{1}^\top K\textbf{1}$. We also prove this is the optimal sampling complexity; see \cref{a12}, which proves a stronger version of the idea of \cite{b0} who only sampled a $\approx \sqrt{n} \times \sqrt{n}$ submatrix.

\noindent \textbf{Step 2:} Now we approximate the off-diagonal sum of $K_A$. A natural strategy is to just initiate $m$ KDE queries with an appropriate additive error. It turns out that one needs to set the additive error $\mu$ quite low, leading to an overall $n^{1/2 + p_g}$ running time (note that we are aiming for $n^{1/2 + p_g/2}$). Instead, we use the high-level idea of \cite{b0} in obtaining their final bound: we filter out all ``heavy" rows of $K_A$ using KDE queries with a relatively large $\mu$ (which is faster than the prior strategy). Then for the remaining ``light" rows, we know their individual contribution to the off-diagonal sum is bounded by $\approx \mu \sqrt{n}$. This means that instead of computing their contribution via KDE queries, one can further \emph{subsample} light rows to determine their total contribution to the off-diagonal sum of $K_A$. Optimizing this, \cite{b0} prove an overall running time of $n^{\frac{2+5p_g}{4+2p_g}} $  where $\frac{2+5p_g}{4+2p_g} = \frac{1}2 + \frac{2p_g}{p_g+2}$. Since $p_g$ is a relatively small constant, the improvement over $1/2 + p_g$ is relatively minor.

\noindent \textbf{Step 3:} This is the main technical step that is substantially different than \cite{b0}. Once we subsample light rows, we have no choice but to either compute their sum \emph{exactly} (which is what \cite{b0} opt to do). We could use KDE queries since the sample size can be shown to be $\ll m$, which actually would lead to our improved running time. However, the issue here is that \emph{creating} the KDE datastructure is too costly now since there are still $m$ vectors in the KDE sum (corresponding to the columns since only the rows have been sub-sampled). Intuitively, what is happening is that KDE queries don't really outweigh the benefit of exact sum unless one makes $\Omega(m)$ queries, where $m$ is the total number of vectors in the KDE datastructure. This does not appear to be the case here since there are relatively very few light rows, meaning few queries, (after sub-sampling) compared to the number of columns. 
    
Thus, we employ a very different sampling strategy. Once we filter out the heavy rows, we note that it also filters out heavy \emph{columns} since the matrix $K_A$ is symmetric. Then we sample a square \emph{principal submatrix} of $K_A$, rather than just sampling rows as done in \cite{b0}. Thankfully, the same concentration inequality (\cref{a43}) for Step 1 can be used to analyze this sampling as well. The advantage of our sampling is now the number of rows/columns of the sub-sampled matrix, which recall we sub-sampled from only light rows and columns, is \emph{balanced}, meaning this is exactly the regime where KDE queries are useful (the number of KDE queries will be approximately equal to the total umber of vectors in the datastructure)! Thus, we initialize an appropriate KDE datastructure on the columns of this sampled square matrix and query all the rows to compute a sufficiently fine-grained contribution of the light rows to $s_o(K_A)$. By carefully balancing parameters across the three steps, we are able to prove a final running time of $n^{1/2 + p_g/2}$, which we additionally argue in \cref{a4} is a ``natural" limit of KDE based approaches.

\section{Open Questions}
Our work makes significant theoretical progress on natural algorithmic problems on kernel matrices. We believe the following are the most interesting questions left open.

\begin{question}
    What is the smallest $\eps \in (0, 1)$ such that $s(K)$ can be computed up to relative error $1\pm \eps$ in subquadratic time? In particular, is $\eps < \frac{1}n$ possible in subquadratic time? \end{question}

We conjecture the answer to the above question to be no. Note that it is possible to prove SETH based lower bounds for extremely high precision (computing $s(K)$ up to $1+e^{-\omega(\log n)}$ error) \cite{b3}), but it seems new lower bound assumptions maybe needed to prove hardness for ``moderately large" $\eps$. The second question is a natural follow up to our non-negative matrix vector product upper bound of \cref{a8}.

\begin{question}
    Can we prove a quadratic time hardness for computing a matrix vector product $Kx$ for a general $x$ (i.e. not necessarily non-negative) with guarantees similar to \cref{a8}?
\end{question}

We conjecture the answer is yes based on the hardness of our intermediate problem given in \cref{a21}. A related question concerns approximating the eigenvalues of $K$ beyond the top eigenvector. Algorithmically, a bottleneck in obtaining an upper bound by using something subspace iteration or Arnoldi iteration is those algorithms require general matrix-vector queries. This was also an open question given by \cite{b0}.
\begin{question}
   What is the complexity of computing $\lambda_2(K)$? In particular, can we compute a factor of $2$ approximation of $\lambda_2(K)$ in sub-quadratic time?
\end{question}
What specifically makes this problem difficult is that we do not have control over the inner product between the error vector of computing matrix-vector products and the second eigenvector. Our \Cref{a9} strongly used that both the error vector and top eigenvector had a positive dot product; \Cref{a44} showed that the method fails without this assumption.
Our last question concerns the effectiveness of the power method for just the top eigenvalue.
\begin{question}
How many $\delta$-approximate non-negative matrix-vector queries are required to estimate $\lambda_1(K)$ up to $\eps$ relative error? 
\end{question}
Our lower bound \Cref{a26} only applies to the sequence of queries made by noisy power iteration, and only says something non-trivial when $(\delta/\eps)\sqrt n\gg1$. What about other (possibly adaptive) queries? In particular, the two matrices described in \Cref{a26} that are not distinguished by the power method \textit{are} distinguished by the single query $K\mathbf1+(\text{noise})-\mathbf1$. When $\delta=0$ (when the matrix vector products are \emph{exact}), it's clear one can use Krylov subspace / Chebyshev polynomials to reach $\wt O(1/\sqrt\eps)$ iterations \cite{b17}. Can one obtain suitable control over the error when $\delta>0$ to extrapolate this bound?

\section{Extension of Our Upper Bound Results to Other Kernel Functions}\label{a37}
Our main upper bound results also hold for other kernels beyond the Gaussian kernel. In this section, we survey our general results before proving them in the subsequent sections. First, we state a general definition of kernel density estimation (KDE), which generalizes the functionality to other kernels. Note the exponent $p_g$ is replaced by a kernel specific exponent; see \cref{a45}, and we state all of our theorems depending on the kernel specific exponent $p$.

\begin{definition}[Fast KDE]\label{a39}
A kernel function $k:\R^d\times\R^d\to\R$ admits a `Fast KDE' algorithm, if given a set of $n$ points $X = \{x_1, \cdots, x_n \} \subset \R^d$, $\eps, \mu \in (0, 1)$, there exists a $p \in (0, 1)$ (depending on $k$) such that we can process $X$ in $\wt O(dn/\mu^p)$ time and return a datastructure $\mathcal{D}$ such that for any query point $y \in \R^d$, with probability $1-\frac{1}{\poly(n)}$, we have
\begin{equation}\label{a46}
\frac{1}n \sum_{i = 1}^n k(y, x_i)  \le \mathcal{D}_X(y)  \le  \frac{1+\eps}n \sum_{i = 1}^n k(y, x_i)  + \mu.
\end{equation}
The query time of $\mathcal{D}_X$ is $O(\frac{d \log n}{\eps^2 \mu^p})$. 
\end{definition}
As a baseline, it can be easily checked that random sampling can guarantee the above definition for $p = 1$ and a substantial body of research has been done to develop algorithms to obtain $p < 1$ for many popular kernel functions such as the Gaussian and Exponential kernel in high dimensions; this is summarized in \cref{a45} below.

\begin{table*}[!ht]
\centering
{\renewcommand{\arraystretch}{1.75}
\begin{tabular}{c|c|c|c}
Kernel                          & $k(x,y)$ & $p$                   & Reference                 \\ \hline
Gaussian           & $e^{-\|x-y\|_2^2}$     & $0.173 + o(1)$ &   \cite{b12}   \\
Exponential   & $e^{-\|x-y\|_2}$      & $0.1 + o(1)$ &  \cite{b12}     \\
Laplacian            &$e^{-\|x-y\|_1}$  & $0.5$    &   \cite{b11}   \\
Rational Quadratic              & $\frac{1}{(1+ \|x-y\|_2^2)^\beta}$      & $0$   &    \cite{b10}
\end{tabular}
}
\caption{\label{a45} Results for kernel density estimation queries that can be used in \cref{a39}. Note that $p < 1$ in all cases. The stated result in the last row assumes $\beta \ge 0$ and $\beta = O(1)$. We note that there is an alternate data structure for the rational quadratic kernel given in \cite{b15} which achieves query time $O(\frac{\text{poly}(\log(n/\eps) \cdot d)}{\eps})$, i.e. it has a $\eps^{-1}$ dependency at the cost of a larger polynomial dependence on $d$.}
\end{table*}
\begin{remark}[Remark on KDE guarantees]
We note that usually, the KDE guarantees are written in a slightly alternate form \cite{b12,b0}: Given a lower bound $\mu$, the datastructure returns a $1+\eps$ approximation to the sum $ \frac{1}n \sum_{i = 1}^n k(y, x_i)$ assuming it is at least $\mu$. The datastructure also reports the case where $\mathcal{D}_X(y) < \mu$. By adding $+\mu$ to the output in the second case, we can also obtain the weaker guarantee stated in \eqref{a46}, which will be convenient in our analysis.
\end{remark}

Given the setup, we can now state our general results for other kernels, assuming access to a suitable KDE datastructure. Thus our theorems depend on $p$, which is a kernel dependent quantity (see \cref{a45}). 

Our first result concerns non-negative matrix-vector products. The proof of the theorem appears in \cref{a1}.

\begin{restatable}{thm}{mvp}\label{a8}
    Let $k$ be the Gaussian ($e^{-\|x-y\|_2^2}$) or the Laplacian ($e^{-\|x-y\|_1}$) kernels and let $p$ be the corresponding exponent in the KDE datastructure stated in \cref{a45} ($p = 0.173+o(1)$ and $p = 0.5$ respectively). Let $K \in \R^{n \times n}$ be the associated kernel matrix for $n$ points in $d$ dimensions. There is an $\eps$-non-negative approximate matrix-vector product algorithm (\cref{a19}) satisfying \cref{a17} for $K$ running in time $\wt O\left( \dfrac{d n^{1+p}}{\eps^{2+p}} \right)$.
\end{restatable}

In contrast, the algorithm of \cite{b0} returns such a $x$ in \cref{a17} in $\wt O\left( \frac{dn^{1+p}}{\eps^{3+2p}} \right)$ time\footnote{$\wt O(\cdot)$ suppresses factors of the form $\log^{O(1)}(n/\eps)$.}. Thus our result removes a $1/\eps^{1+p}$ factor in the bounds of \cite{b0} for the Gaussian and Laplacian case. Our improved algorithm (\cref{a19}) for Theorem \cref{a8} is only valid for these two kernels. However, we note that the  $\wt O\left( \frac{dn^{1+p}}{\eps^{3+2p}} \right)$ time algorithm of \cite{b0} is also valid for the Exponential and Rational Quadratic kernel. 

An application of our non-negative matrix-vector product result of \cref{a21} given in \cref{a2} is a faster algorithm to (approximately) multiply a kernel matrix with another entry-wise non-negative matrix, for example another kernel matrix.

\begin{restatable}{thm}{mult}\label{a47}
Suppose that for the kernel matrix $K \in \R^{n\times n}$, we have a subroutine which answers a $\eps$-non-negative MVP (\cref{a17}) in time $T(n, \eps)$.   Let $A$ be any entry wise non-negative matrix (e.g. another kernel matrix). In time $\wt O\left( n \cdot T(n, \eps) \right)$, we can compute a matrix $B$ such that 
    \[\|KA - B\|_F \le \eps \|KA\|_F \le \eps \cdot \min(\|K\|_2 \cdot \|A\|_F,  \|K\|_F \cdot \|A\|_2) \]
    with high probability.
\end{restatable}

 Note that standard techniques for approximate matrix multiplication multiplication, based on sketching or row sampling \cite{b28,b29,b30}, roughly say that by sampling $\wt O(1/\eps^2)$ columns of $K$ and $\wt O(1/\eps^2)$ rows of $A$, in time $\wt O(n^2/\eps^2)$ we can compute a $B$ satisfying 
 \[\|KA-B\|_F \le \eps \|K\|_F \cdot \|A\|_F. \]
This result is able to replace a frobenius norm factor with a spectral norm factor, e.g. $\|K\|_F$ can be converted to $\|K\|_2$, which can be up to $\Omega(\sqrt{n})$ times smaller.

Now we focus on our main technical result, which is about computing a \emph{relative error} approximation to the top eigenvalue of a kernel matrix $K$, given access to a non-negative matrix-vector product sub-routine of \cref{a17}.

\begin{restatable}{thm}{topeig}\label{a9}
Let $\eps \in (0, 1)$. Suppose there exists a $\eps$-non-negative MVP algorithm for a kernel matrix $K$, running in time $T(n, \eps)$. Given this sub-routine, algorithm \ref{a25}) returns a scalar $\lambda$ and unit vector $u$ satisfying $\lambda_1(K)\ge u^\top Ku\ge(1-(5/8)\eps)\lambda_1(K)$ and $(1+\eps/8)\lambda_1(K)\ge\lambda\ge(1-\eps/2)\lambda_1(K).$ The overall runtime of the algorithm is $\wt O\pare{T(n, \eps/8) \cdot \frac{1}{\eps}}$.
\end{restatable}

Theorem \ref{a9} works in a black-box manner with respect to an underlying $\eps$-non-negative MVP sub-routine. Thus, if we use our improved subroutine of \cref{a8} for the Gaussian and Laplace kernel case, we get an overall runtime of $\wt O\pare{\frac{dn^{1+p}}{\eps^{3+p}}}$ (where again $p$ is their respective value in \cref{a45}). For the Exponential and Rational Quadratic kernel, we can instead use the slower existing subroutine of \cite{b0} and obtain an overall runtime of $\wt O\left( \frac{dn^{1+p}}{\eps^{4+2p}} \right)$ to compute a $1+\eps$ relative error approximation to $\lambda_1$, where again $p$ is kernel dependent. In comparison, the running time of \cite{b0} for obtaining a relative error approximation is $\wt O\left( \frac{dn^{1+p}}{\eps^{7+4p}} \right)$, in particular, the $\eps$ dependence is $1/\eps^{7 + 4p}$. Thus for the Gaussian and Laplacian case, we remove a factor of $\approx 1/\eps^{4+3p}$ and for the Exponential and Rational Quadratic kernel, we remove a factor of $\approx 1/\eps^{3+2p}$, where again we note that the $p$ value is kernel dependent. This result is proved in \cref{a3}.

For the case of computing the sum of all the entries of the kernel matrix up to a $1+\eps$ relative error, our improved result is the following.

\begin{restatable}{thm}{kernelsum}\label{a11}
    Let $k$ be a kernel function admitting a KDE datastructure of \cref{a39} (any in \cref{a45}). Let $K \in \R^{n\times n}$ be the associated kernel matrix for $n$ points in $d$ dimensions, and $\eps \in (0, 1)$. In time $\wt O\left( \frac{n^{\frac{1}2 + \frac{p}{2}}}{\eps^{4}} \right)$, \cref{a30} the sum of the entries of $K$ up to a $1+\eps$ factor with probability $\ge 0.99$.
\end{restatable}
\cref{a48} shows that we improve over \cite{b0} across all the interesting range of $p \le 1$. We refer to \cref{a4} for a technical discussion. 

\begin{figure}
\centering
  \begin{minipage}[t]{.51\textwidth}
  \centering
   \resizebox{\linewidth}{!}{
   \begin{tikzpicture}
  \begin{axis}[
    xlabel={$p$},
    ylabel={},
    xmin=0, xmax=1,
    ymin=0.25, ymax=1.15,
    legend pos=south east,
    grid=both,
    major grid style={line width=.2pt,draw=gray!50},
    minor grid style={line width=.1pt,draw=gray!50},
    width=8cm,
    height=5cm
  ]

    \addplot[red, domain=0:1, samples=100, thick] {(2+5*x)/(4+2*x)};

    \addplot[blue, domain=0:1, samples=1000, thick] {1/2 + x/2};

  \end{axis}
\end{tikzpicture}

 \begin{tikzpicture}
  \begin{axis}[
    xlabel={$p$},
    ylabel={},
    xmin=0, xmax=1,
    ymin=3, ymax=4.7,
    legend pos=south east,
    grid=both,
    major grid style={line width=.2pt,draw=gray!50},
    minor grid style={line width=.1pt,draw=gray!50},
    width=8cm,
    height=5cm
  ]

    \addplot[red, domain=0:1, samples=100, thick] {(8+6*x)/(2+x)};
    \addlegendentry{Exponent of \cite{b0}}

    \addplot[blue, domain=0:1, samples=1000, thick] {4};
    \addlegendentry{Our Exponent}

  \end{axis}
\end{tikzpicture}

}
  \end{minipage}

  \caption{Exponent of $n$ (left plot) and exponent of $\eps$ (right plot) for the problem of computing $s(K)$ up to a $1+\eps$ factor. For the left plot, note that there is never a reason to spend more than $\omega(n)$ time since a simple Chernoff  bound calculation shows that sampling $\tilde{O}(n/\eps^2)$ uniformly random entries also suffices to $1+\eps$ approximate $s(K)$. Thus, we may always take the minimum of the exponent of \cite{b0} and $1$. Note that the interesting regime of $p$ is $0 \le p \le 1$, with the best $p$ values given in \cref{a45}. \label{a48}}

\end{figure}

\section{Improved approximate Kernel Matrix Vector Multiplication for Gaussian and Laplacian Kernels}\label{a1}

The goal of this section is to prove \cref{a8}. First we establish some helpful lemmas. 

\begin{lemma}[Transformation]\label{a40}
Let $k$ be the Gaussian ($e^{-\|x-y\|_2^2}$), Laplacian ($e^{-\|x-y\|_1}$), or the Rational Quadratic $\left( \frac{1}{ (1 + \|x-y\|_2^2)^{\beta}}\right)$ kernels. For every $x,y \in \R^d$ and $c \in (0,1)$, there exists $ y' \in \R^d$ such that
\[ c \cdot k(x,y) = k(x', y') \]
where $x' = [x; 0]$.
\end{lemma}
\begin{remark}
    Since only the transformation of one of the input points depends on $c$, we say that the weight $c$ is associated with $y$. 
\end{remark}
\begin{proof}
    For the Gaussian case, write $c = e^{-\log(1/c)}$ and note that $\log(1/c) > 0$. Define $x' = [x; 0], y' = [y; \sqrt{\log(1/c}]$, which gives
    \[ c \cdot e^{- \|x-y\|_2^2} = e^{- \left( \|x-y\|_2^2 + \log(1/c) \right)} = e^{- \|x' - y'\|_2^2}. \]
    The Laplacian case is similar: $x'$ stays the same but now we let $y' = [y; \log(1/c)]$. 
\end{proof}

The main algorithm in this section is \Cref{a19}, which we show admits the following guarantee.

\begin{algorithm}[h]
\caption{\label{a19}Non-negative Matrix Vector Product}
\begin{algorithmic}[1]
\Procedure{Non-neg-MVP}{$y \in \R^n, \|y\|_2 = 1, \eps \in (0, 1)$}
\State $b \gets C \log(n/\eps)$ for a sufficiently large constant $C > 1$
\State Round every coordinate of $y < \frac{\eps}{10n^{3/2}}$ to zero
\State Partition the coordinates of $y$ into buckets, where the $t$-th bucket $\mathcal{Y}_t$ is for coordinates with value between $2^t/\sqrt{n}$ (inclusive) and $2^{t+1}/\sqrt{n}$ (exclusive). 
\Comment{\textit{Note $t$ can be negative.}}
\State $z \gets 0^n$
\For{$ i = 1$ to $n$}
\For{$t \in \mathbb{Z}$} \Comment{\textit{Note we only loop over the non-empty buckets}}
\State $\mu_t \gets \frac{2^{|t|} \eps' }{Cn}$ for a large constant $C > 1$ where $\eps' = \frac{\eps}{b}$
\State $\mathcal{X}_t \gets \{x_j \in X \mid j \in \mathcal{Y}_t\}$
\State $C_t \gets$ multiplicative scaling such that all coordinates in $\mathcal{Y}_t$ are in $(0, 1)$ \Comment{\textit{Note $C_t = \Theta(\sqrt{n}/2^t$)}}
\State Initialize data structure $\mathcal{D}_t$ of \cref{a39} on points $\mathcal{X}_t$, where every $x_j \in \mathcal{X}_t$ is transformed as in \cref{a40} with the weight $C_t y_j$. Set the relative error to $1+\eps$ and additive error to $\mu_t$.
\State $z(i)  \gets z(i) + |\mathcal{Y}_t| \cdot C_t^{-1} \cdot \mathcal{D}_t([x_i; 0])$
\EndFor
\EndFor

\State \textbf{Return} $z$ as the approximation to $Ky$ 
\EndProcedure
\end{algorithmic}
\end{algorithm}

\mvp*

The subsequent sections \Cref{a2} and \Cref{a3} analyze applications of this algorithm. See \cref{a34} for an intuitive explanation of the approach. The formal algorithm and proof follow.

In our analysis below, we assume that every query to the KDE data structure satisfies the guarantees of \eqref{a46}. This happens with high probability. The first two lemmas show that removing all sufficiently small coordinates of the input $y$ and then rounding its values has a tolerable effect on the error of the final output.

\begin{lemma}
    Rounding the coordinates in Step 3 only introduces $\eps \|Ky\|_2$ error.
\end{lemma}
\begin{proof}
    Note that 
    \begin{equation}\label{a49}
        \|Ky\|_2^2 \ge \|y\|_2^2 = 1
    \end{equation}
    since the diagonal of $K$ is all ones and the entries of $K-I$ and $y$ are all non-negative. Thus the difference between the original vector and after rounding down just contains all the entries of $y$ that are smaller than $\frac{\eps}{10n^{3/2}}$. Since every entry of $K$ is at most $1$, this introduces error at most $\eps/\sqrt{n}$ per coordinate, or an overall error of $\eps \le \eps \|Ky\|_2$.
    \end{proof}

Now let $y^t$ correspond to the entries of $y$ which are in $\mathcal{Y}_t$ (and the rest $0$ such that $y = \sum_t y^t$). The following is our main lemma proving our approximation bound. 
\begin{lemma}\label{a50}
    We have $\|Ky - z\|_2 \le O(\eps) \cdot \|Ky\|_2$.
\end{lemma}
\begin{proof}
Consider a fixed entry $i \in [n]$. Note that the exact value is given by:
\[(Ky^t)(i) = \sum_{\ell \in \mathcal{Y}_t} y_{\ell}K(x_i, x_{\ell}):= \Delta \]
In the $i$ case of loop 6 in \cref{a19}, the query $\mathcal{D}_t([x_i; 0])$ satisfies (via \cref{a46})
\[ \left| \mathcal{D}_t(x_i) -  \frac{C_t y_\ell }{|\mathcal{Y}_t|}\sum_{\ell \in \mathcal{Y}_t} K(x_i, x_{\ell}) \right| \le \frac{\eps \cdot C_ty_{\ell}}{|\mathcal{Y}_t|} \cdot  \sum_{\ell \in \mathcal{Y}_t} K(x_i, x_{\ell}) + \mu_t.\]
Multiplying through by $|\mathcal{Y}_t| \cdot C_t^{-1}$ gives us
\[  \left| |\mathcal{Y}_t| \cdot C_t^{-1} \cdot\mathcal{D}_t(x_i) -  y_\ell \sum_{\ell \in \mathcal{Y}_t} K(x_i, x_{\ell}) \right| \le  \eps \Delta+ |\mathcal{Y}_t| \cdot C_t^{-1} \cdot \mu_t.\]
We need to bound the additive error in the RHS above. We consider two cases:
if $t \ge 0$ (i.e. the coordinates are larger than $1/\sqrt{n}$), then $|\mathcal{Y}_t| \le {n}/2^{2t}$ (since the total $\ell_2$ norm of $y$ is 1). This gives us 
\[  |\mathcal{Y}_t| \cdot C_t^{-1} \cdot \mu_t = O\left( \frac{n}{2^{2t}} \cdot  \frac{2^{t}}{\sqrt{n}}  \cdot \frac{2^{t}\eps'}{n} \right) = O\left( \frac{\eps'}{\sqrt{n}} \right). \]
Otherwise if $t < 0$ (i.e. the coordinates are smaller than $1/\sqrt{n}$), we can only guarantee $|\mathcal{Y}_t| \le n$,  but we know that $C_t^{-1}$ is sufficiently small, giving us
\[  |\mathcal{Y}_t| \cdot C_t^{-1} \cdot \mu_t =  O\left( n \cdot \frac{1}{2^{|t|}\sqrt{n}} \cdot  \frac{2^{|t|} \eps'}{n} \right) = O\left( \frac{\eps'}{\sqrt{n}} \right).\]

Summing across all $t$  (of which there are only $O(b)$ many), we have for every $i \in [n]$, the $i$th entry of $Ky - z$ is bounded in absolute value by $O(\eps) \cdot [Ky](i) + O\left( \frac{\eps}{\sqrt{n}} \right)$, which proves the approximation guarantee.

Finally, we note that we can always easily guarantee the error $e$ is entry-wise non-negative as well since the KDE query of \cref{a39} can  be made to always overestimate the value and we can add an extra $O(\eps/\sqrt{n})$ to every coordinate of $z$ to take care of the rounding step we performed in line 3.
\end{proof}

The next lemma bounds the overall running time, including the query and datastructure construction times.

\begin{lemma}\label{a51}
Assuming a fast KDE data structure of \cref{a39}, the total runtime of \cref{a19} is bounded by $\wt O\left( \frac{dn^{1+p}}{{\eps}^{2+2p}} \right)$.
\end{lemma}
\begin{proof}
    We bound the data structure construction and query times separately. First note that for the case of $p > 0$, the sum of $1/\mu^p$ overall $\mu$'s possibly considered in line 8 of \cref{a19} can be bounded up to constant factors by
    \[ T = O\left(\sum_{t \ge 0} \left( \frac{n}{2^t \eps'}\right)^p\right) = O\left(\frac{n^p}{\eps'^p} \sum_{t \ge 0} \frac{1}{2^{tp}}\right) = O\left( \frac{n^p}{\eps'^p} \right).\]
    Now  since $|\mathcal{Y}_i| \le n$, the total data structure construction times across all instances of $\mathcal{D}_i$ is bounded by
    \[ \frac{1}{\eps^2} \cdot O \left(\sum_{i = 1}^b \frac{d |\mathcal{Y}_i|}{\mu_i^p}\right) \le  O\left( \frac{nd T}{\eps^2} \right) = O\left(\frac{n^{1+p}d}{\eps^2 \cdot \eps'^p} \right).\]

    Now we bound the total query time. There is always an overhead of $n$ coming from the loop in line 6 of \cref{a19} since we query each $\mathcal{D}_t$ $n$ times. The cost of a single query summed across all $\mathcal{D}_t$ is bounded by 
    \[ \wt O\left( \sum_t \frac{d}{\eps^2 \mu_t^p} \right) = \wt O\left( \frac{dn^p}{\eps^{2+p}} \right), \]
    giving a total running time of $\wt O\left( \frac{dn^{1+p}}{\eps^{2+p}} \right)$, as desired.
\end{proof}

Putting everything together proves our main theorem.

\begin{proof}[Proof of \cref{a8}]
    \cref{a50} shows the desired $\eps \cdot \|Ky\|_2$ error bound (after scaling $\eps$ by a constant) and \cref{a51} proves the running time, which proves \cref{a8}. 
\end{proof}

\begin{remark}\label{a52}
    Note that the output of \cref{a19} satisfies a slightly stronger guarantee than stated in \cref{a8}: Every entry of $z$, which is the approximation to $Ky$ for an input non-negative unit vector $y$, satisfies $(Ky)(i) \le z(i) \le (1+\eps)(Ky)(i) + O\left(\frac{\eps}{\sqrt{n}}\right)$.
\end{remark}

\subsection{Limits of KDE-Based Algorithms for Non-Negative Matrix Vector Product?}\label{a6}
We give some evidence that a running time of the form $n^{1+p}$ maybe necessary, if want the `per coordinate' guarantee of our non-negative matrix vector product gives, namely that every coordinate of $Ky$, for a unit vector $y$, is approximated up to $1+\eps$ multiplicative error and $\eps/\sqrt{n}$ additive error (see \cref{a52}). 

\begin{lemma}\label{a53}
    Suppose there is an algorithm which computes a non-negative matrix vector product on a dataset of size $n$ with the same guarantees as \cref{a52} ($1+\eps$ multiplicative error and $\eps/\sqrt{n}$ additive error per coordinate) in $T(n, \eps)$ time. Then there exists an algorithm which given a set $X, |X| = n$ and a set of $n$ queries $Z = \{z_1, \cdots, z_n\}$, answers all $n$ KDE queries $\frac{1}n \sum_{x \in X} k(z_i, x)$ with $1+\eps$ multiplicative and $\eps/n$ additive error in $T(O(n), \eps)$ time. That is, the amortized time per KDE query to get a $1+\eps$ multiplicative and $\eps/n$ additive error is $T(O(n), \eps)/n$ time.
\end{lemma}

Before the proof, let's see how the lemma relates to our bound of $\wt O\left( \frac{dn^{1+p}}{\eps^{2+p}} \right)$ (say for the Gaussian Kernel case). Ignoring logarithmic factors and the $d$ term, our bound can be decomposed into 
\[ n \times \frac{n^p}{\eps^{2 + p}}. \]
Thus the lemma implies that, based on the second term, we can get an amortized KDE query time of $n^p/\eps^{2+p}$ for multiplicative error $1+\eps$ and relative error $\eps/n$. This is essentially the best KDE query time one can obtain for this desired level of precision (almost by definition from \cref{a39}). 

Thus one cannot improve upon our algorithm for computing a non-negative matrix vector query, with the same coordinate wise approximation guarantees, unless there is a different family of KDE algorithms different than the form of \cref{a39} (which we are not aware of), or amortization helps (which we are also not aware of). There is one more caveat we must mention: it may also possible to obtain an $\eps$-non-negative matrix vector product without having the stronger coordinate wise guarantee. We leave these as interesting questions for future work. 

\begin{proof}[Proof of \cref{a53}]
    Consider the kernel matrix on the dataset $Z \cup X$ and the query vector $y = [0 , \ldots, 0, 1/\sqrt{n}, \ldots, 1/\sqrt{n}]$ where there are $n$ copies of $1/\sqrt{n}$ and $n$ copies of $0$. Then if we compute a non-negative matrix vector product using this query $v$, then the $i$th coordinate of the output is exactly a $1+\eps$ multiplicative and $\eps/n$ additive approximations to 
    \[ \frac{1}{\sqrt{n}} \sum_{x \in X} k(y_i, x).\]
    We need to divide by $1/\sqrt{n}$ to get the correct scaling for a KDE query, so we are left with a $1+\eps$ multiplicative and $\eps/n$ additive error approximation to each of the $n$ KDE queries of interest, as desired.
\end{proof}

\subsection{Approximate Matrix Multiplication}\label{a2}

\mult*

\begin{proof}
    We simply view the columns of the matrix $A$ as separate vectors and apply our non-negative matrix vector product guarantee. We have that the $i$th column of our output $B$ is the same as the $i$th column of $KA$, denoted as $(KA)_i$, up to error which has norm at most $\eps \|(KA)_i\|_2$. Then
    \[ \|B\|_F^2 \le \eps^2 \|(KA)_i\|_2^2 = \eps^2 \|KA\|_F^2, \]
    as desired. The last step follows from a well known fact, which we quickly sketch here. Letting $A_i$ denote the $i$th column of $A$, 
    \[ \|K(A_i)\|_2 \le \|K\|_2 \|A_i\|_2 \implies \sum_i \|K(A_i)\|_2^2 \le \|K\|_2^2 \sum_i \|A_i\|_2^2 = \|K\|_2^2 \|A\|_F^2, \]
    as desired. The running time follows from $n$ invocation of Theorem \ref{a8}.
\end{proof}

\section{Faster Sub-quadratic Time Top Eigenvalue}\label{a3}
We now analyze the noisy power method for approximating the top eigenvalue of a given kernel matrix. See \cref{a35} for an intuitive overview of our method.

\begin{algorithm}[H]
\caption{\label{a25}Top Eigenpair Computation}
\begin{algorithmic}[1]
\State\textbf{Input: } A kernel matrix $K$, a matrix-vector product method $\texttt{Non-neg-MVP}$ satisfying \Cref{a17}, an error parameter $\eps\in(0,1)$.
\State\textbf{Output: } A scalar $\lambda$ and non-negative unit vector $u$.
\Procedure{Top-Eigenpair}{}
\State $T \gets\ceil{10\log(n)/\eps}$ \Comment{Number of iterations of power method}
\State Initialize $u=z_0 \gets\frac1{\sqrt n}\mathbf{1}\in\R^n$ 
\State $\lambda \gets z_0^\top Kz_0$ 
\For{$t = 0$ to $T$}
\State $\wt z_{t+1} \gets \texttt{Non-neg-MVP}(K, z_t, \eps/8)$ \Comment{Algorithm \ref{a19} with precision $\eps/8$} 
\If{$\inr{z_t}{\wt z_{t+1}}>\lambda$}
\State $u\gets z_t$
\State $\lambda\gets\inr{z_t}{\wt z_{t+1}}$
\EndIf
\State $z_{t+1} \gets \wt z_{t+1}/\magn{\wt z_{t+1}}$
\EndFor
\State \textbf{Return} $\lambda,u$
\EndProcedure
\end{algorithmic}
\end{algorithm}

The key lemma is that one of the $z_t$ produced by \Cref{a25} will satisfy a relative error guarantee.
\begin{lemma}\label{a54}
    Let $z_t$ be defined by $\Cref{a41}$ for $\delta=\eps/8\le0.01$. Let $T=10\log(n)/\eps$. Then there exists $t\in[0,T]$ for which $z_t^\top Kz_t\ge(1-\eps/2)\lambda_1(K)$.
\end{lemma}
\begin{proof}
By \Cref{a41}, we may express the component in the direction of the top eigenvector as
\[
\inr{v_1}{z_{t+1}}
=\frac{\inr{v_1}{\wt z_{t+1}}}{\magn{\wt z_{t+1}}}
=\frac{\inr{v_1}{Kz_t + \magn{Kz_t}\delta u(t)} }{\magn{\wt z_{t+1}}}.\]
Now use that $v_1$ is a left-eigenvector of $K$ and that $\inr{v_1}{u(t)}\ge 0$ to obtain
\spliteq{\label{a55}}{
\inr{v_1}{z_{t+1}}
=\frac{\lambda_1\inr{v_1}{z_t}+\delta\magn{Kz_t}\inr{v_1}{u(t)}}{\magn{\wt z_{t+1}}}
\ge\frac{\lambda_1\inr{v_1}{z_t}}{\magn{\wt z_{t+1}}}.
}
The denominator can be upper bounded by the triangle inequality as
\eq{\label{a56}
\magn{\wt z_{t+1}}\le(1+\delta)\magn{Kz_t}\le(1+\delta)\sqrt{z_t^\top Kz_t\cdot\lambda_1},}
where the last inequality follows from writing $z_t$ in the eigenvector basis of $K$ and factoring out $\lambda_1$. Now suppose that
\[z_t^\top Kz_t\le(1-\eps/2)\lambda_1\]
for all $t\le T$. Combining this with \eqref{a55} and \eqref{a56}, we arrive at
\[{\inr{v_1}{z_{t+1}}}\ge\frac{\inr{v_1}{z_t}}{(1+\delta)\sqrt{1-\eps/2}}.\]
Since $\inr{v_1}{z_0}\ge n^{-1/2}$, we have
\[
\inr{v_1}{z_T}
\ge\frac{\inr{v_1}{z_0}}{\pare{(1+\delta)\sqrt{1-\eps/2}}^T}
\ge\frac{n^{-1/2}}{\pare{(1+\delta)\sqrt{1-\eps/2}}^T}
.\]
Since $\delta=\eps/8\le0.01$, when $T=10\log(n)/\eps$, we have $\inr{v_1}{z_T}>1$ which is a contradiction.
\end{proof}
The previous lemma shows that there exist a ``good'' vector $z_t$. The remainder of the proof of our main theorem is that the algorithm will pick up on this property and return it or a similarly good vector.
\topeig*
\begin{proof}
Note that by the guarantee of \Cref{a17}, the sequence $z_t$ produced by \Cref{a25} satisfies \Cref{a41}.
\Cref{a25} returns $\lambda$ and $z_r$ where
\[r=\argmax_{t\in[0,T]}\inr{z_t}{\wt z_{t+1}},\quad
\lambda=\inr{z_r}{\wt z_{r+1}}.\]
Note that by \Cref{a41},
\[z_t^\top Kz_t\le\inr{z_t}{\wt z_{t+1}}=z_t^\top Kz_t+\delta\magn{Kz_t}\inr{z_t}{u_t}\le z_t^\top Kz_t+\delta\lambda_1.\]
Let $\alpha=\max_{t\in[0,T]}z_t^\top Kz_t$ so that
\[
\alpha
\le
\max_{t\in[0,T]}\inr{z_t}{\wt z_{t+1}}=\lambda
\le
\alpha
+\delta\lambda_1
\]
and
\[
\alpha
-\delta\lambda_1
\le z_r^\top Kz_r.\]
But
\Cref{a54} ensures that
\[(1-\eps/2)\lambda_1\le\alpha\le\lambda_1\]
which gives the desired approximation guarantee for $\delta=\eps/8$.
The running time of our algorithm can be decomposed into the following:
    \[ \underbrace{O\left( \frac{\log(n/\eps)}{\eps} \right)}_{\# \text{ of iterations}} \times \underbrace{T(n, \eps/8)}_{\text{Running time for a single $\eps/8$-precision MVP}}. \]
  
\end{proof}

\subsection{Lower Bounds Related to \Cref{a9}}

In this section, we argue that \Cref{a9} is optimal in a couple different senses. The first is that the precision requirement $\delta$ for each matrix-vector product cannot be much larger than is set by \Cref{a25}. The second is that \textit{any} algorithm produces an output based on the $z_t$ requires $\Omega(\log(n)/\eps)$ iterations. Notably, this is \textit{not} the case when $\delta=0$. When $\delta=0$, the span of the $z_t$ forms a Krylov space, and so by computing the norm of the compression of $A$ onto this space, one recovers the norm of $A$ in just $\wt O(1/\sqrt\eps)$ iterations.

\begin{proposition}[Optimality of precision]
If $\delta>\eps/(1-\eps)$, then
$\delta$-approximate power iteration does not always result in a unit vector $x$ with $x^\top Kx\ge(1-\eps)\lambda_1(K)$, even for infinitely many iterations.
\end{proposition}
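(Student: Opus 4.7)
The plan is to construct an explicit adversarial instance. The first step would be to observe that $\delta > \eps/(1-\eps)$ is equivalent to $\eps < \delta/(1+\delta)$, so I can fix any $\gamma \in (\eps, \delta/(1+\delta))$, and choose $n$ large enough that $(1-\gamma) + \gamma/n < 1-\eps$ — possible because $\gamma > \eps$. I would then take $K = \diag(1, 1-\gamma, 0, \ldots, 0) \in \R^{n\times n}$ (so $\lambda_1(K) = 1$, $v_1 = e_1$, $v_2 = e_2$), initialize $z_0 = (1/\sqrt n)\, e_1 + \sqrt{1-1/n}\, e_2$, and pick the adversarial noise $u_t = v_2 = e_2$ at every step. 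This $z_0$ satisfies $\inr{v_1}{z_0} = 1/\sqrt n$ and each $u_t$ is a unit vector with $\inr{v_1}{u_t} = 0 \ge 0$, meeting both preconditions of \Cref{a43}.

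Since $K$ and each $u_t$ are supported on the first two coordinates, the iterates remain in $\spn\{e_1,e_2\}$, reducing the analysis to two dimensions. Writing $z_t = (\alpha_t, \beta_t)$ and tracking the ratio $r_t := \alpha_t/\beta_t$ (invariant under normalization), a direct substitution into \eqref{a44} using $\magn{Kz_t} = \beta_t\sqrt{r_t^2 + (1-\gamma)^2}$ gives the scalar recurrence
\[r_{t+1} = \frac{r_t}{(1-\gamma) + \delta\sqrt{r_t^2 + (1-\gamma)^2}}.\]
The denominator is bounded below by $(1-\gamma)(1+\delta)$, which strictly exceeds $1$ precisely because our choice $\gamma < \delta/(1+\delta)$ rearranges to $\gamma/(1-\gamma) < \delta$. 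So $r_t$ is strictly monotone decreasing, and since $\alpha_t^2 = r_t^2/(1+r_t^2)$ is increasing in $r_t$, the Rayleigh quotient $z_t^\top K z_t = (1-\gamma) + \gamma\alpha_t^2$ is also monotone decreasing. Hence for all $t \ge 0$,
\[z_t^\top Kz_t \le z_0^\top Kz_0 = (1-\gamma) + \gamma/n < (1-\eps)\lambda_1(K),\]
so no iterate satisfies \eqref{a28}, as required.

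The main obstacle I expect is the parameter selection rather than the recurrence analysis. Working purely in two dimensions fails when $\delta$ sits just above $\eps/(1-\eps)$: the forced initial overlap $\inr{v_1}{z_0} \ge 1/\sqrt 2$ would make $\alpha_0^2 \ge 1/2$ and hence $z_0^\top Kz_0 \ge 1-\gamma/2$, which is too large to fall below $1-\eps$ when $\gamma$ is only marginally above $\eps$. Embedding into a sufficiently high ambient dimension $n$ is what drives $\alpha_0^2 = 1/n$ down arbitrarily small, providing the slack needed to fit $\gamma$ into the narrow window $(\eps, \delta/(1+\delta))$. Once that choice is in place, the rest is a short monotonicity argument on the scalar sequence $r_t$.
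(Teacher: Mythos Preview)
Your proof is correct, but it takes a genuinely different route from the paper's.

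The paper constructs a \emph{fixed point}: it takes $K=\diag(1,\lambda,\ldots,\lambda)$ with $\lambda\approx 1-\eps$ and the algorithm's actual starting vector $z_0=n^{-1/2}\mathbf1$, then exhibits a single error vector $e$ (supported on coordinates $2,\ldots,n$) for which $z_0=Kz_0+e$ exactly. A short computation shows $\rmagn{e}/\rmagn{Kz_0}\to\eps/(1-\eps)$ as $n\to\infty$, so for $n$ large enough this ratio drops below $\delta$ and the adversary may freeze the iteration at $z_0$ forever. Since $z_0^\top Kz_0=1-\eps-\wrt\eps$, no iterate ever witnesses $\lambda_1$.

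You instead embed a rank-two matrix $K=\diag(1,1-\gamma,0,\ldots,0)$ in high dimension, choose a two-coordinate starting vector, and push all the adversarial noise into the $e_2$ direction. Rather than a fixed point, you obtain a scalar recurrence for the ratio $r_t=\alpha_t/\beta_t$ and show it is strictly decreasing, so the Rayleigh quotient only gets worse over time.

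Both arguments are clean. The paper's fixed-point construction is marginally stronger in that it uses the exact initialization $z_0=n^{-1/2}\mathbf1$ of \Cref{a30}, so it rules out the algorithm as actually implemented rather than just the abstract iteration of \Cref{a43}. Your construction, on the other hand, makes the parameter dependence fully explicit (no informal ``infinitesimal'' $\wrt\eps$) and isolates the mechanism to a two-dimensional dynamical system, which is arguably more transparent.
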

\begin{proof}
We give an example of a $K$ for which power iteration with adversarial noise immediately stagnates, that is, the adversary may cause $z_t=z_0$ for all $t$. Set
\[\lambda=1-\frac{n(\eps+\wrt\eps)}{n-1}\]
where $\wrt \eps$ is infinitesimal and consider the matrix $n\times n$
\[K=\bmat 1\oplus\lambda I.\]
Note that the starting vector $z_0=\frac1{\sqrt n}\mathbf1$ does not produce the desired approximation as $z_0^\top Kz_0=1-(\eps+\wrt\eps)<1-\eps$. Furthermore,
\[z_0=Kz_0+e\]
when $e=\frac{1-\lambda}{\sqrt n}\bmat{0&1&\cdots&1}^\top$. If $\magn e\le\delta\magn{Kz_0}$, then $z_1=z_0$ is a possible choice for the adversary. To this end, we compute
\[\magn e
=(1-\lambda)\cdot\sqrt{\frac{n-1}n}=(\eps+\wrt\eps)(1+O(1/n))
\]
and
\[\magn{Kz_0}=\sqrt{\frac1n+\frac{n-1}n\lambda^2}=(1-\eps)(1+O(1/n)).\]
Thus, for $n$ sufficiently large, the ratio $\magn e/\magn{Kz_0}$ drops below every value larger than $\eps/(1-\eps)$, including $\delta$.
\end{proof}
The following proposition shows that any method which uses a Krylov subspace type algorithm, starting with the initial vector of all ones, must use approximately the same number of iterations that our analysis of power method does, showing that the iteration count is nearly tight. Note that if we the matrix vector product was \emph{exact} then we would only need \emph{sublinear} in $1/\eps$ number of iterations \cite{b17}, showing a separation between exact and approximate matrix vector product settings. 

\begin{proposition}[Optimality of iteration count]\label{a26}
Any algorithm which estimates the largest eigenvalue of $K$ using the sequence
\[z_{t+1}=\alg{Non-neg-MVP}(K,z_t,\delta)\]
for $z_0\in\spn(\mathbf1)$ must use
\[\frac{\log(\frac\delta{2\eps}\sqrt n)}{\log(1+2\eps)}=\Omega\pare{\frac{\log((\delta/\eps)^2n)}\eps}\]
iterations.
\end{proposition}
\begin{proof}
    Let $K=I$ and $K'=I+2\eps e_1e_1^\top $. Let $z_t,z_t'$ be defined by $z_0=z_0'=\mathbf1$ and
    \[
z_{t+1}=\alg{NN-MVP}(K,z_t)
\qand
z_{t+1}'=\alg{NN-MVP}(K',z_t').\]
Then the adversary may make
\[z_t=z_t'\quad\forall t\le\log(\sqrt n/2)/\log(1+2\eps).\]
We prove this by induction. Note the adversary can make $z_{t+1}'=K'z_t'$ exactly. Suppose $z_t=z_t'$. Then
\spliteq{}{
z_{t+1}
  &=Kz_t'+e
\\&=(K'-2\eps e_1e_1^\top)z_t'+e
\\&=z_{t+1}'-2\eps e_1e_1^\top z_t'+e.}
Now the adversary may set $e=2\eps e_1e_1^\top z_t'=2\eps (1+2\eps)^te_1$ whenever $2\eps(1+2\eps)^t\le\delta\sqrt n$, which is equivalent to $t\le\log(\frac\delta{2\eps}\sqrt n)/\log(1+2\eps)$. Thus, the algorithm must give the same output on input $K$ and $K'$, but the range of acceptable outputs for those inputs are disjoint.
\end{proof}

\paragraph{The non-negativity property.}
In our analysis, we crucially used the fact that the top eigenvector $v_1$ has non-negative entries and that the error vector $e$ can be made to have non-negative entries. This implies that all times $t$, $e(t)^\top v_1 \ge 0$. Intuitively, this means that the adversary can only `cancel out' the contribution of the top eigenvalue by boosting coordinates associated with other eigenvalues; it cannot directly reduce the coordinate associated with the top eigenvalue. This fact is crucially used throughout our analysis. Here we give a simple example showing why without this property, power method cannot work in our (adversarial) noise setting, unless the precision $\delta$ is extremely small.

\begin{example}\label{a44}
\[ K=    \begin{pmatrix}
        1\\&1/2\\
        &&\ddots\\
        &&&1/2\\
    \end{pmatrix}\in \R^{n \times n} \]
    and let the starting vector in the power method be $\begin{pmatrix} \frac{1}{\sqrt{n}} & \cdots & \frac{1}{\sqrt{n}}\end{pmatrix}$. 
Then 
\[Kx = \begin{pmatrix} \frac{1}{\sqrt{n}} & \frac{1}{2\sqrt{n}} & \cdots & \frac{1}{2\sqrt{n}}\end{pmatrix}.\]
 Now suppose we no longer have the property that $e(t)^\top v_1 \ge 0$. Then if $\delta = \omega(1/\sqrt{n})$, then in the \emph{very next} iteration, the noisy power method can instead return the vector
  \[\begin{pmatrix} \textcolor{red}{0} & \frac{1}{2\sqrt{n}} & \cdots & \frac{1}{2\sqrt{n}}\end{pmatrix},\]
  which satisfies the requirement that the norm of the error is bounded by $\delta \|Kx\|_2 = O(\delta)$. However, now the resulting vector lies completely in the orthogonal complement of the top eigenvector, so no additional iterations of the power method, even without any noise, can give a $1-\eps$ approximation of the top eigenvalue.
\end{example}

\section{Improved Vector-Matrix-Vector Product and Kernel Sum}\label{a57}

An immediate application of our faster non-negative matrix-vector product is a fast algorithm for estimating a vector-matrix-vector product $v^\top Kv$ for an entry-wise non-negative vector $v$. Note that we must read all the entries of the vector $v$ (otherwise we cannot detect the all zeros vector versus say a basis vector), meaning that the problem has a $\Omega(n)$ lower bound.

\begin{theorem}\label{a10}
Let $\eps \in (0, 1)$. Suppose there exists a $\eps$-non-negative MVP algorithm for a kernel matrix $K$, running in time $T(n, \eps)$. Let $v \in \R^n$ be an entry-wise non-negative vector. We can compute $v^TKv$ up to a $1+\eps$ multiplicative factor in time $O(T(n, \eps) + n)$.
\end{theorem}

\begin{proof}
    Without loss of generality, assume $v$ is a unit vector. First note that 
    \[ v^\top Kv = v^\top (K-I)v + \|v\|_2^2 \ge 1, \]
    since all the entries of $K-I$ are non-negative. We simply compute an approximation $y$ to the matrix vector product $Kv$ using the algorithm of \cref{a8}, where $ y = Kv + e$ where $e$ is entry-wise non-negative and $\|e\|_2 \le \eps \|Kv\|_2$. Then we compute $v^\top Ky$. The running time is dominated by the matrix-vector product, which requires $\wt O\left( \frac{d n^{1+p}}{\eps^{2+2p}} \right)$ time from \cref{a8}. 

    For the approximation guarantee, note that  $v^\top Ky = v^\top Kv + v^\top e$,
    where $v^\top e \ge 0$. We additionally have (see \cref{a52}),
    \[v^\top e = \sum_{i = 1}^n v(i)e(i) \le \sum_{i=1}^n v(i)\left( \eps v(i) + O\left( \frac{\eps}{\sqrt{n}}\right) \right) \le \eps + O\left( \frac{\eps}{\sqrt{n}} \right) \sum_{i=1}^n v(i) \le O(\eps) \le O(\eps) \cdot v^\top K v,\]
    implying that $v^\top Ky \in (1\pm O(\eps)) v^\top Kv,$ as desired.

\end{proof}

\subsection{Improved Kernel Matrix Sum}\label{a4}
In this section, we give a more involved algorithm for computing $\textbf{1}^\top K\textbf{1},$ which is the sum of the entries of the kernel matrix. Recall our notation that $s(K) = \textbf{1}^\top K\textbf{1}$ and $s_o(K)$ denotes the off-diagonal sum. Note that the almost trivial $\Omega(n)$ lower bound of computing a general $v^\top Kv$ does not apply here since we already know what all the entries of $\textbf{1}$ are. Indeed, in this special but practically relevant case (e.g. it can be used to compute the \emph{kernel alignment} problem, a similarity measure between kernel matrices themselves in practice \cite{b23}, and it is a commonly used U-statistic \cite{b24}), much faster algorithms are possible, as discussed in \cref{a0}. Our main theorem is the following:

\kernelsum*

See \cref{a36} for an intuitive overview of the proof.

\subsubsection{Main Sampling Lemma}

In this subsection, we prove our main sampling workhorse lemma. For a subset $A \subseteq [n]$, let $K_A$ be the principal submatrix of the kernel matrix where we keep the rows and columns indexed by $A$. The following lemma generalizes Lemma 6 in \cite{b0}.

\begin{lemma}\label{a43}
    Let $K$ be a symmetric (not necessarily kernel) matrix with all entries in $[0,\alpha]$ for a parameter $0 \le \alpha \le 1$. Further suppose that all the row and column sums of $K$, ignoring the diagonal, are bounded by a parameter $\beta \ge 0$. Let $A \subseteq [n]$ be a random set where we pick every element of $[n]$ to be in $A$ with probability $q$ independently. Let $Z = \textup{diag}(K) + s_o(K_A)/q^2$, where $s_o(K_A)$ denotes the sum of the off diagonal entries of $K_A$, and $\textup{diag}(K)$ denotes the sum of the diagonal of $K$. Then
    \begin{itemize}
        \item $\E[Z] = s(K)$,
        \item $\Var[Z] \le O(\alpha q^{-2} + \beta q^{-1}) \cdot s(K)$.
    \end{itemize}
\end{lemma}
\begin{proof}
    The expectation is clear so we just have to bound the variance. Since variance is preserved under additive shifts, we instead consider $\Var[Z - \text{diag}(K)] = \Var[s_o(K_A)/q^2]$. The expectation of the square is:
    \begin{equation}\label{a58}
        \E\left[ \left( \frac{s_o(K_A)}{q^2} \right)^2\right] = \frac{1}{q^4} \left( q^2 \sum_{i \ne j} K_{i,j}^2 + 2q^3 \sum_{|\{i,j,j' \}| = 3} K_{i,j}K_{i,j'} + q^4 \sum_{|\{i,j,i', j' \}| = 4} K_{i,j} K_{i', j'} \right).
    \end{equation}
We bound each term separately. The first term satisfies 
    \[ q^{-2} \sum_{i \ne j} K_{i,j}^2 \le q^{-2} \alpha  \sum_{i,j} K_{i,j} = q^{-2} \alpha s(K). \]
    For the third term, we have
    \[ \sum_{|\{i,j,i', j' \}| = 4} K_{i,j} K_{i', j'}  \le  \sum_{i,j,i',j', i \ne j, i' \ne j'} K_{i,j} K_{i', j'} \le (s(K)-\text{diag}(K))^2. \]
Since 
\[ \E[s_o(K_A)/q^2]^2 = (s(K)-\text{diag}(K))^2, \]
this term directly cancels with the third term in \cref{a58} in the variance calculation. Thus, it remains to bound the second term in \cref{a58}  We have 
\[  \sum_{|\{i,j,j' \}| = 3} K_{i,j}K_{i,j'} \le O\left( \sum_i s_{o,i}(K)^2 \right) \le O( \max_i s_{o,i}(K) \cdot s(K)),\]
where $s_{o,i}(K)$ denotes the sum of the $i$th row of $K$, ignoring the diagonal. This is because every entry in $K$ is only multiplied by other entries sharing the same row and column, and $K$ is symmetric. 
This implies that the second term in \cref{a58} can be bounded, up to constant factors, by 
\[ q^{-1} \cdot \max_i s_{o,i}(K) \cdot s(K) \le  q^{-1} \beta s(K),\]
by our definition of $\beta$. The lemma follows from putting together our calculations.
\end{proof}

A useful corollary is the case of the previous lemma applied to a kernel matrix $K$.

\begin{corollary}\label{a59}
    Consider the setting of \cref{a43} and let $K$ be be a kernel matrix (as defined in \cref{a33}) with $q = C(\eps^2 \sqrt{n})^{-1}$ for a large constant $C \ge 1$. Consider the random variable $Z = n + s_o(K_A)/q^2$ from \cref{a43}. We have
    \begin{itemize}
        \item $\E[Z] = s(K),$
         \item $\Var[Z] \le 0.001 \cdot \eps^2 \cdot s(K)^2$.
    \end{itemize}
\end{corollary}
\begin{proof}
    We set $\alpha = 1$ in \cref{a43}. Now Lemma 5 in \cite{b0} gives
\[ \max_i s_{o,i}(K) \le  O(\sqrt{n} + \sqrt{s_o(K)}),  \]
where $s_o(K) = \sum_i s_{o,i}(K)$ denotes the sum of all the off diagonal entries of $K$. Denote this quantity by $\beta$. We have
\[ \alpha q^{-2} + \beta q^{-1} \le O(\eps^4 n + \eps^2 \sqrt{n} \cdot ( \sqrt{n} + \sqrt{s_o(K)} )) = O(\eps^4 n + \eps^2 n + \eps^2 \sqrt{n s(K)}). \]
Using the fact $s(K) \ge n$ gives us that the above quantity is bounded by $O(\eps^2 s(K))$, so $( \alpha q^{-2} + \beta q^{-1} ) \cdot s(K) \le O(\eps^2 s(K)^2)$. Increasing the constant $C$ in the definition of $q$ as necessary completes the proof.
\end{proof}

\subsection{A Faster Algorithm for the Kernel Sum}

Now we present our faster algorithm for computing $s(K)$, the sum of all the entries of the kernel matrix, up to $1+\eps$ multiplicative factor.
\begin{algorithm}[H]
\caption{\label{a30}Faster algorithm for Kernel Sum}
\begin{algorithmic}[1]
\Procedure{Fast-KernelSum}{$X=\{x_1, \ldots ,x_n\} \subset \R^d$}
\State $q_1 \gets  \min\left(\frac{C}{\eps^2 \sqrt{n}}, 1 \right)$ for a large constant $C \ge 1$

\State Sample every $i\in [n]$ with probability $q_1$ to form subset  $A=\{a_1,\ldots,a_m\}\subseteq\{x_1,\ldots, x_n\}$.
\Comment{\textit{Defines a principal submatrix $K_A$ of $K$.}}
\State Let $B = \{a_i \in A \mid \mathcal{D}_{A \setminus a_i}(a_i) \ge \tau \}$, where we set $\tau = C/(m \eps^3)$ and additive error $\mu = \eps \tau/C$ in the construction of the KDE datastructure \Comment{\textit{Note the query $\mathcal{D}_{A \setminus a_i}(a_i)$ \textbf{excludes} $a_i$. \cref{a60} details how this can be achieved with a logarithmic overhead via a standard trick. }}
\State $\hat{S_1} \gets \sum_{b \in B} \mathcal{D}_{B \setminus b}(b)$ where we set additive error $\mu$ \Comment{\texttt{Estimate of $s_o(K_B)$}}
\State $\hat{S_2} \gets \sum_{b \in B}\mathcal{D}_{A \setminus b}(b)$ where we set additive error $\mu$\Comment{\texttt{Estimate of $s_o(K_{B \times A})$}}
\State $\hat{S}_3 \gets 2 \hat{S}_2 - \hat{S}_1$ 
\State $q_2 \gets \min(C \eps^{1.5} \sqrt{m\tau},  1)$ for $\tau$ defined line Line 4
\State Sample every row and column in $K_{A \setminus B}$ with probability $q_2$ to form subset $B' \subseteq A \setminus B$
\State $\hat{S}_4 \gets \sum_{b' \in B'} \mathcal{D}_{B' \setminus b'}(b')$ with KDE additive error $\mu' = \wt O\left( \frac{\sqrt{\tau}}{\eps^{1.5}\sqrt{m}}\right)$\Comment{\texttt{Estimate of $s_o(K_{B'})$ }}
\State \textbf{Return} $n+q_1^{-2}\cdot (\hat{S}_3 + q_2^{-2} \cdot  \hat{S}_4)$ as the approximation to $s(K)$, the sum of entries of $K$

\EndProcedure
\end{algorithmic}
\end{algorithm}

\begin{proof}[Proof of \cref{a11}]
First we show the approximation guarantee. The proof will be broken down in three steps which we outline before presenting the full details. The first step samples a $\approx \frac{\sqrt{n}}{\eps^2} \times \frac{\sqrt{n}}{\eps^2}$ principal submatrix $K_A$ of $K$, and we show that it suffices to approximate the off-diagonal sum $s_o(K_A)$ up to $1+\eps$ multiplicative and $\text{poly}(1/\eps)$ additive error. To approximate $s_o(K_A)$, we divide the rows and columns of $K_A$ into ``heavy" and ``light" groups, where heavy means their off-diagonal sum is at least $m \tau$. Thus, the second step removes all the heavy rows and columns of $K_A$, detected using appropriately-tuned KDE queries, and computes a $1+\eps$ multiplicative approximation of their contribution to $s_o(K_A)$. Finally, we need to deal with the light principal submatrix left. Here, we have the guarantee that their row and column sums are bounded by $m \tau$. So in the third step, we perform another sampling to substantially reduce the size of the light submatrix, which is only possible due to the bounded row/column sum guarantee. Crucially, the sub-sampled matrix remains square, which allows us to take full advantage of KDE queries again.
\\

\noindent \textbf{Step 1.} Define the random variable $Z = n + s_o(K_A)/q_1^2$, where $K_A$ is the $m\times m$ matrix with rows and columns indexed by the random set $A = \{a_1, \ldots, a_m\}$ sampled in Line 3 of \cref{a30}, and $s_o(\cdot)$ denotes the off-diagonal sum. Corollary \ref{a59}, along with Chebyshev's inequality, implies that 
\[ |Z - \E[Z]| = |Z - s(K)| \le \eps s(K), \]
with probability $ \ge 99\%$. We condition on this event for the rest of the proof. Since $1/q_1^2 = O(\eps^4 n)$, it suffices to estimate $s_o(K_A)$ up to a multiplicative $1+\eps$ factor and additive error $O(1/\eps^3)$. The latter is because this will imply our additive error in computing $Z$ is bounded by $O\left(\frac{1}{q_1^2} \cdot \frac{1}{\eps^3}\right) \le O(\eps n) \le O(\eps s(K))$, noting that $s(K) \ge n$ due to the diagonal.

Thus, for the rest of the proof, we will show that the algorithm estimates $s_o(K_A)$ up to $1+\eps$ multiplicative factor and additive error $O(1/\eps^3)$ (by increasing the constant $C$ in $q_1$, we can make the additive error any small constant multiple of $1/\eps^3$). 
\\

\noindent \textbf{Step 2.} Towards this end, we have 
\[s_o(K_A) = s_o(K_B) + s_o(K_{A \setminus B}) + 2s, \]
where $s_o(K_B)$ is the sum of the off-diagonal entries of the submatrix $K_B$ and similarly $s_o(K_{A \setminus B})$ is the sum of the off-diagonal entries of the submatrix $K_{A \setminus B}$. The $2s$ term comes from the two rectangles left over from carving out $K_B$ and $K_{A \setminus B}$ from $K_A$ (see Figure \ref{a61}).

Note that $B$ contains all the rows of $K_A$ where the off-diagonal sum is at least $m\tau/2$, for $\tau$ defined in Line 4 of \cref{a30}, since the additive error in the KDE estimate is $O(\eps \tau)$ (and we multiply by $m$ since a KDE query returns an approximation to the scaled sum). Thus, every row of $K_A$ in the set $A \setminus B$ must also have its off-diagonal sum bounded by $m \tau/2$. In particular, if we consider the submatrix $K_{A \setminus B}$, every off-diagonal row and column sum of this matrix is bounded by $m \tau/2$. 

Now we show how to obtain a $1+\eps$ approximation to $s_o(K_B)$ and $2s$. Note that $\sum_{a \in B} \mathcal{D}_{A \setminus a}(a)$, where the additive error in the KDE queries are all set to $\mu$ as defined in Line 4 of \cref{a30}, is a $1+O(\eps)$ approximation to $s_o(K_B) + s$. This is because the additive error $\mu$ is at most $O(\eps)$ times the sum of the off-diagonal values of the rows in $B$, since $B$ consists of all the $a_i \in A$ where $\mathcal{D}_{A \setminus a_i}(a_i) \ge \tau$, meaning the additive error can be absorbed into the multiplicative one. Furthermore, the sum $\sum_{b \in B} \mathcal{D}_{B \setminus b}(b)$,
where again we use the same additive error $\mu$, is a multiplicative  $1+\eps$ and additive $\mu |B|$ approximation to $s_o(K_B)$, and thus 
\[ 2 \cdot \left(\sum_{a \in B} \mathcal{D}_{A \setminus a}(a) \right) -  \sum_{b \in B} \mathcal{D}_{B \setminus b}(b) \]
is a $1+O(\eps)$ multiplicative approximation to $s_o(K_B) + 2s$, since the additive error $O(\mu |B| m) = O(\eps \tau |B|m )$ is at most $\eps$ fraction of $ \Omega(\tau |B|m ) \le s_o(K_B) + s \le s_o(K_B) + 2s$.
\\

\noindent \textbf{Step 3.} It remains to deal with the term $s_o(K_{A \setminus B})$. Towards that end, note that all the rows and columns of $K_{A \setminus B}$ have their entries bounded by $O(m \tau)$ (since we removed all "large" rows and columns in $B$). Then \cref{a43} implies that if we sample the rows and columns of $K_{A \setminus B}$ with probability $q_2$ (as defined in Line 8 of \cref{a30}) and compute the (scaled) sum of off-diagonal entries of the resulting matrix $K'$, then 
\[\Var\left[\frac{s_o(K')}{q_2^2}\right] \le O(m \tau q_2^{-2} + m \tau q_2^{-1}) \cdot s_o(K_{A \setminus B}), \]
where we upper bound  both $\alpha = \beta = O(m \tau)$ in the lemma statement and note that we can ignore the diagonal contributions since we are effectively setting the diagonal as zeros by only considering the off-diagonal sum. We have 
\[  O(m \tau q_2^{-2} + m \tau q_2^{-1}) \cdot s_o(K_{A \setminus B}) = O(m \tau q_2^{-2}) \cdot s_o(K_{A \setminus B}) \le O\left( \frac{s_o(K_{A \setminus B})}{\eps^3} \right).\]
By Chebyshev's inequality, we have
\begin{align*}
    \Pr\left( \left| \frac{s_o(K')}{q_2^2} - s_o(K_{A \setminus B})\right| \ge \eps s_o(K_{A \setminus B}) + \frac{1}{\eps^3} \right) &\le \frac{\Var\left[\frac{s_o(K')}{q_2^2}\right] }{\left( s_o(K_{A \setminus B}) + \frac{1}{\eps^3}\right)^2}\\
    &\le \frac{\eps^3 \Var\left[\frac{s_o(K')}{q_2^2}\right]}{2s_o(K_{A \setminus B})}, 
\end{align*}
where the last step is due to 
\[ \left( s_o(K_{A \setminus B}) + \frac{1}{\eps^3}\right)^2 \ge \frac{2s_o(K_{A \setminus B})}{\eps^3}.\]
Plugging in 
\[\Var\left[\frac{s_o(K')}{q_2^2}\right] \le O\left(\frac{s_o(K_{A \setminus B})}{\eps^3}\right), \]
and adjusting the constant $C$ in the definition of $q_2$ to be sufficiently large, shows that 
\[ \left| \frac{s_o(K')}{q_2^2} - s_o(K_{A \setminus B})\right| \le \eps s_o(K_{A \setminus B}) + \frac{1}{\eps^3}\]
with probability at least $0.999$.

Thus, it suffices to compute the value $s_o(K')$. We will do this with additive error $\mu' = \wt O\left( \frac{\sqrt{\tau}}{\eps^{1.5}\sqrt{m}} \right)$. With this setting, the overall additive error in computing $s_o(K')$ is $\mu'm'$ and we have 
\[ \frac{\mu' m'}{q_2^2} \le \wt O\left(\frac{\sqrt{\tau}}{\eps^{1.5} \sqrt{m}} \cdot \frac{m'}{\eps^3 m \tau}\right) = \wt O\left(\frac{1}{\eps^3} \cdot \frac{m'}{\eps^{1.5} m \sqrt{m \tau}} \right),\]
and we exactly have $m' = \wt O(\eps^{1.5}m \sqrt{m \tau})$ with high probability. Thus by decreasing the value of $\mu'$ by logarithmic factors, we have 
\[ \frac{\mu' m'}{q_2^2} \le \frac{1}{\eps^3},\]
so this value of $\mu'$ suffices for an additive $1/\eps^3$ error (which contributes to a $1/\eps^3$ additive error for our estimate of $s_o(K_A)$ which we can tolerate).
\\

\noindent \textbf{Putting Everything Together.} We know $s_o(K_A) = s_o(K_B) + s_o(K_{A \setminus B}) + 2s$, and all three terms are non-negative. We obtained a $1+\eps$ multiplicative approximation to $s_o(K_B)  + 2s$ in Step 2 and a $1+\eps$ multiplicative and $1/\eps^3$ additive approximation to $s_o(K_{A \setminus B})$ in Step 3, which gives us a $1+\eps$ multiplicative and $1/\eps^3$ additive approximation to $s_o(K_A)$, as desired. This proves the approximation guarantee.
\\

\noindent \textbf{Proving the Running Time.} For the running time, we know Step 2 takes $\wt O\left( \frac{m}{\eps^2 (\eps \tau)^p} \right)$ time. For Step 3, $K'$ is a $m' \times m'$ matrix with $m' = \wt O(\eps^{1.5}m \sqrt{m \tau})$ with high probability. We compute the sum of the entries of $K'$ using $m'$ KDE queries with additive error $\mu' = \wt O\left( \frac{\sqrt{\tau}}{\eps^{1.5}\sqrt{m}}\right)$, giving a runtime of 
\[\wt O\left( \frac{m'}{\eps^2 (\mu')^p} \right) = \wt O\left( m^{p/2+3/2}\tau^{1/2-p/2} \eps^{3p/2 - 1/2} \right).\]
To balance, if we set the expressions
\[ m^{p/2+3/2}\tau^{1/2-p/2} \eps^{3p/2 - 1/2} = \frac{m}{\eps^2 (\eps \tau)^p},\]
where the first term corresponds to the running time of computing $s_o(K_{A\setminus B})$ and the second term corresponds to the running time of computing $s_o(K_B)$ (ignoring logarithmic factors),
this would solve to 
\[ \frac{1}{\tau} = \left( m^{1/2 + p/2} \eps^{3p/2 + 3/2} \right)^{\frac{2}{p+1}} = m \eps^3, \]
which is (asymptotically) our choice of $\tau$. Thus, neither term dominates, meaning that the runtime can be bounded by 
\[ \wt O\left( \frac{m}{\eps^{2+p}} \cdot   \left( m^{1/2 + p/2} \eps^{3p/2 + 3/2} \right)^{\frac{2p}{p+1}} \right) = \wt O\left( \frac{m^{1 + p}}{\eps^{2-2p}} \right). \]
Noting that $m = \Theta(\sqrt{n}/\eps^2)$ with high probability, we have that the total runtime is bounded by 
\[  \wt O\left( \frac{n^{\frac{1}2 + \frac{p}{2}}}{\eps^{4}} \right). \]
Finally, note that by using the standard trick of repeating $O(\log n)$ times and taking the median, we can boost the success probability of \cref{a30} to $1-1/n^{\Theta(1)}$. 
    
\end{proof}

\begin{remark}\label{a60}
We briefly mention how to construct a KDE datastructure $\mathcal{D}'$ over a dataset $X = \{x_1, \ldots, x_n\}$ (with some desired relative error $1+\eps$ and additive error $\mu$), capable of answering queries $\mathcal{D}_{X \setminus x_i}(y)$ for a given query $y$ and $x_i \in X$. In other words, the sum excludes any desired vector $x_i$. Note that this same construction appears in many prior works, including \cite{b0} and \cite{b14}, but we briefly mention it here for completeness.

Assume without loss of generality that $n$ is a power of 2. Let $\mu' = \mu/(100 \log n)$. We build a data structure for points $x_1, \ldots , x_{n/2}$ and another for
$x_{n/2+1}, \ldots , x_n$, both with additive error $\mu'$. We additionally build 4 data structures for sets $\{x_1, \ldots , x_{n/4} \}, \{x_{n/4+1}, \ldots , x_{n/2}\}, 
\{x_{n/2+1}, \ldots , x_{3n/4}\}$, and $\{x_{3n/4+1}, \ldots , x_n\}$.  We continue this pattern for $O(\log n)$ levels which naturally corresponds to a binary tree with the interval $\{1, \cdots, n\}$ at the top and every interval has two children representing its first and second half. Then to estimate the sum 
$\frac{1}n \sum_{j = 1, j \ne i}^n k(y, x_j)$ for a given query $y$ and index $i$ to exclude, we represent $\{1, \ldots, i-1\} \cup \{i+1, \ldots, n\}$ as the disjoint union of $O(\log n)$ intervals represented in the binary tree. All these values are positive and the total additive error is $O(\mu' \cdot \log n) \le \mu$. It can be easily computed that the total construction and query times only blow up by $\text{poly}(\log n)$ factors.    
\end{remark}

\begin{figure}[!h]
\centering 
\begin{tikzpicture}

\fill[cyan!50,opacity=0.15] (0,2) rectangle (2,4);
\draw[black, thick] (0,2) rectangle (2,4);
\node at (1,3) {\textbf{$s(K_{B})$}};

\fill[pattern=north east lines, pattern color=magenta!40!white] (2,2) rectangle (6,4);
\draw[black, thick] (2,2) rectangle (6,4);

\fill[pattern=north east lines, pattern color=magenta!40!white] (0,-1) rectangle (2,2);
\draw[black, thick] (0,-1) rectangle (2,2);

\fill[green!70!black, opacity=0.15] (2,-1) rectangle (6,2);
\draw[black, thick] (2,-1) rectangle (6,2);
\node at (4,0.5) {\textbf{$s(K_{A \setminus B})$}};
\node at (4,3) {\textbf{$x$}};
\node at (1,0.5) {\textbf{$x$}};
\draw [decorate,decoration={brace,amplitude=5pt,mirror,raise=4ex}]
  (2,3.5) -- (0,3.5) node[midway,yshift=3em]{$A$};
  \node at (4,0.5) {\textbf{$s(K_{A \setminus B})$}};
\draw [decorate,decoration={brace,amplitude=5pt,mirror,raise=4ex}]
  (6,3.5) -- (2,3.5) node[midway,yshift=3em]{$A \setminus B$};
\end{tikzpicture}
\caption{In \cref{a30}, we need to compute the sum $s_o(K_A)$, which can be broken down as $s_o(K_B) + s_o(K) + 2x$, where $x$ represents one of the pink rectangles. \label{a61}}
\end{figure}

\subsection{Limits of KDE-Based Algorithms for the Kernel Sum?}\label{a31}
We briefly argue why $\Omega(n^{1/2+p/2}/\eps^{4-p})$ is a natural limit of our ideas. \cref{a12} shows that we must sample $m = \Omega(\sqrt{n}/\eps^2)$ points in the input set $X$; otherwise we have no hopes of obtaining a $1+\eps$ multiplicative approximation. However, this is only a sample complexity lower bound. We now argue how the extra term $n^{p/2}/\eps^{2-p}$ is a natural limitation for processing these $m$ sampled points.

After we sample, obtaining a $1+\eps$ multiplicative approximation is equivalent to obtaining a $1/\eps^3$ additive error of the off diagonal entries of the sampled set, since we multiply the estimator by $\eps^4 n$ and the original kernel sum is at least $n$. Thus, we must detect every row with off-diagonal sum at least $1/\eps^3$, otherwise our additive error is already off by too much. The only reasonable way to do this seems to be test every row with a KDE query, which requires setting $\mu = 1/(m \eps^3) \le 1/(\sqrt{n} \eps)$, giving a running time of at least $n^{p/2} \eps^p/\eps^2$ per row, (using our definition of a KDE query in \cref{a39}) and an overall running time of at least $n^{1/2 + p/2}/{\eps^{4-p}}$. Thus, to improve upon our result by more than an $\eps^p$ factor, one would need a substantially different algorithm for detecting all rows with sum larger than $1/\eps^3$, besides computing $m$ KDE queries with the appropriate additive error.

\subsection{Optimal Sampling Lower Bound}
Our algorithm for computing $\mathbf{1}^\top K \mathbf{1}$ (\cref{a30}) samples $\Theta(\sqrt{n}/\eps^2)$ data points to compute the sum, which we show is optimal. This is a strengthening of a similar result in \cite{b0} which proved a $\Omega(\sqrt{n})$ lower bound.

We consider the following distribution $\mathcal{D}(p)$ over vectors in $\R^n$ parameterized by $p \in [0, 1]$:
\begin{itemize}
    \item With probability $1-p$, return $n^{100}$ times a uniformly random basis vector.
    \item With probability $p$, return the origin $0$.
\end{itemize}

The same proof below applies to the Exponential, Gaussian, and Laplacian kernels.

\begin{lemma}\label{a62}
Let  $p_1 = (1+\eps)/\sqrt{n}, p_2 = 1/\sqrt{n}$ and assume $n$ is sufficiently large. Consider the kernel matrices $K_1$ and $K_2$ corresponding to $n$ points drawn from $\mathcal{D}(p_1)$ and $\mathcal{D}(p_2)$ respectively. Let $s(\cdot)$ denote the sum of the entries. With probability at least $1 - \exp(-  \Omega(\eps^2 \sqrt{n}))$, we have:
\begin{enumerate}
    \item $\E[s(K_1)] \ge (3 + \eps/2)n$.  
    \item $\E[s(K_2)] \le (3+ \eps/4)n$.
    \item $\Pr( |s(K_i) - \E[s(K_i)]| \ge \eps n/100) \le 10^{-3}.$
\end{enumerate}
\end{lemma}
\begin{proof}
    Among the $n$ points (either from $\mathcal{D}(p_1)$ or $\mathcal{D}(p_2)$, for $x \in \{0, e_1, \ldots, e_n\}$, let $c_x$ denote the number of points in the $n$ points that are copies of $x$. Then the kernel sum is equal to 
    \[ \sum_{x \in \{0, e_1, \ldots, e_n\}} c_x^2 + o(e^{-n}).\]
    Thus, 
    \begin{align*}
        \E[s(K_i)] &= n\E[c_{e_1}^2] + \E[c_o^2]  + o(e^{-n}) \\
        &=  n^2(n-1)\left( \frac{1-p_i}{n} \right)^2 + n(1-p_i) + n(n-1)p_i^2 + np_i + o(e^{-n}) \\
        &= 2n + n^2 p_i^2 + o(n),
    \end{align*}
 and the first two claims follow.

We now show concentration of the random variable $\sum_{x \in \{0,e_1, \ldots, e_n\}} c_x^2$. For $x \in \{e_1, \ldots, e_n\}$, write 
\[c_x^2 = c_x^2 \cdot \mathbf{1}\left\{c_x \le 100\log n \right\} + c_x^2 \cdot \mathbf{1}\left\{c_x > 100\log n \right\}.\]
Then since each fixed $c_x$ is a binomial random variable, the Chernoff  bound implies that in either case ($i = 0$ or $i=1$),
\[ \Pr( \exists c_x > 100 \log n) \le n \Pr(c_1 > 100 \log n) \le n \cdot e^{-50\sqrt{n} \cdot \frac{1}{\sqrt{n}} } < \frac{1}{n^{10}}.\]
Thus,
\[ \E\left[\sum_x c_x^2 \cdot \mathbf{1}\left\{c_x > 100\log n \right\} \right] = o(1),\]
and the probability that $\sum_x c_x^2 \cdot \mathbf{1}\left\{c_x > 100\log n \right\}$ deviates by more than $\eps n/200$ from its expected value is at most $1/n$ by Markov's inequality. 

To handle $c_x^2 \cdot \mathbf{1}\left\{c_x \le 100\log n \right\}$, we note that changing any one of the sampled points can only change $\sum_x c_x^2\cdot \mathbf{1}\left\{c_x \le 100\log n \right\}$ by $\text{poly}(\log n)$ factors, so McDiarmid's Inequality inequality (applied to the appropriate Doob martingale) implies 
\[ \Pr\left(\left|\sum_x c_x^2\cdot \mathbf{1}\left\{c_x \le 100\log n \right\} - \E\left[\sum_x c_x^2\cdot \mathbf{1}\left\{c_x \le 100\log n \right\} \right]\right| \ge \frac{\eps n}{200}\right) \le e^{- \Omega\left(\frac{\eps^2 n^2}{n \cdot \text{poly}(\log n)}\right)} \le \frac{1}n \]
for sufficiently large $n$. Combining our two tail bounds with the triangle inequality and the union bound proves claim (3).
\end{proof}

\kernelsumsampling*

\begin{proof}
   Suppose we flip an unbiased coin and either give the algorithm points drawn from $\mathcal{D}(p_1)$ or $\mathcal{D}(p_2)$. If $n$ is sufficiently large, then from Lemma \ref{a62}, we know that with probability at least $99\%$ that the kernel sum under case $1$ is at least a $1+\eps/2$ factor larger than the kernel sum under case $2$. Thus, an algorithm which returns a $1+\eps/100$ approximation
   can determine, with probability at least $99\%$, which distribution we are drawing from. 

    Let $x_1$ and $x_2$ be a sample from $\mathcal{D}(p_1)$ and $\mathcal{D}(p_2)$ respectively. Then the Hellinger distance, $d_H$, between $x_1$ and $x_2$ is bounded up to constant factors by
    \begin{align*}
        \sqrt{n} \cdot (\sqrt{1-p_1} - \sqrt{1-p_2})^2 + (\sqrt{p_1} - \sqrt{p_2})^2. 
    \end{align*}
The first term can be bounded by 
\[\left(\sqrt{\sqrt{n} - (1+\eps)} - \sqrt{\sqrt{n}-1}\right)^2 = \left( \frac{\sqrt{n}-(1+\eps) - \sqrt{n}+1}{\sqrt{\sqrt{n} - (1+\eps)} + \sqrt{\sqrt{n}-1}} \right)^2 =  O\left( \frac{\eps^2}{\sqrt{n}}\right).\]
The second term is bounded by 
\[ \left( \sqrt{\frac{1+\eps}{\sqrt{n}}}  - \sqrt{\frac{1}{\sqrt{n}}}\right)^2  = O\left(\frac{\eps^2}{\sqrt{n}} \right), \]
so 
\[d_H(x_1, x_2) \le O\left( \frac{\eps^2}{\sqrt{n}} \right). \]
Then by a standard relationship between total variation and Hellinger distance, we have that the total variation distance between $|T|$ samples from $\mathcal{D}(p_1)$ and $\mathcal{D}(p_2)$ is bounded by $O\left( \sqrt{|T|} \cdot \frac{\eps}{n^{1/4}} \right)$. Thus if $|T| \ll \sqrt{n}/\eps^2$, then the total varaition distance is bounded by $0.001$, contradicting the fact that we can distinguish the two distributions with probability $> 99\%$. 
   
\end{proof}

\section{Towards a Lower Bound for Matrix-Vector Products for Mixed Sign Vectors?}\label{a20}

The guarantees of Theorem \ref{a8} naturally prompt us to ask if one can obtain such a guarantee for \emph{arbitrary} vectors (with both positive and negative coordinates). Prior works which also studied matrix-vector products for kernel matrices  (\cite{b9,b0,b8}) also require the input vector to be entry-wise non-negative to obtain a relative-error guarantee. Furthermore, it has been previously conjectured that the arbitrary vector case may require nearly quadratic time. Indeed, \cite{b8} state that for this task ``in general it may require computing $Kx$ exactly, which takes $\Omega(n^2)$ time". Unfortunately, the evidence given in \cite{b8} is not formal\footnote{note that the authors are not claiming a formal statement, rather giving a heuristic derivation, so the following discussion does not contradict any of their formal theorems in the paper}. More subtly, the evidence of \cite{b8} actually \emph{does not} require $\Omega(n^2)$ time, nor does it require computing $Kx$ using an $x$ with negative entries. In more detail, the argument of \cite{b8} proceeds in two steps. 
\begin{enumerate}
    \item Having error $\eps \|Kx\|_2$ actually implies $0$ error if it is the case that $Kx = 0$.
    \item In general, computing $Kx$ exactly requires $\Omega(n^2)$ time.
\end{enumerate}
There are issues with both steps of the reasoning. For (1), it may not always be the case that such an $x$ exists. Furthermore, finding such an $x$ could be computationally difficult itself. And lastly, if it is known that $Kx = 0$ beforehand, then computing $Kx$ is very easy, we can simply return $0$! 

For step $(2)$, the hard instance used in \cite{b8} is a matrix of all zeros with a hidden $1$. This cannot happen in Kernel matrices since the diagonals are all 1's. Furthermore, detecting a $1$ is easy to do in linear time since it implies two vectors are identical (due to the fact that in all the kernels we study and all the natural kernels that we are aware of, $k(x,y) = 0 \iff x = y$). It \emph{is} true that computing $Kx$ exactly in general should require $\Omega(n^2)$ time, but this also holds for non-negative $x$. Indeed, \cite{b3} show that computing the sum of entries of $K$ \emph{exactly} requires $\Omega(n^2)$ time assuming SETH, and the sum can be easily deduced if we can compute $K\textbf{1}$ exactly. 

Thus, there remains a big gap in our understanding of the complexity of matrix-vector products for kernel matrices. On one hand, we have near linear upper bounds for the non-negative case with strong relative-error grantees. On the other hand, virtually no non-trivial upper or lower bounds are known for the general case where the input vector has mixed signs.

Our aim is to bridge this gap and provide more evidence for the hardness of computing relative error estimates of $Kx$ for general $x$. This task seems out of reach of our techniques, but we show that a related intermediate problem requires nearly quadratic time, assuming SETH. To define the intermediate problem, we first define a related version of a Gaussian kernel matrix as follows. Given a dataset $X = \{x_1, \cdots, x_n\}$, define the $n \times (n+1)$ matrix $K'$ (we use the notation $K'$ as to not confuse with our original notion of a kernel matrix) as
\begin{itemize}
    \item $K'_{i,i} = 1 \, \forall i$,
    \item $K'_{i,j} = e^{-\|x_i + x_j\|_2^2}$ for $1 \le i , j \le n$ with $i \ne j$,
    \item $K'_{i, n+1} = e^{- \|x_i\|_2^2}$.
\end{itemize}
Some explanation is in order. $K'$ is like a Gaussian kernel matrix in many ways: it's main diagonal is all $1$'s. For the other entries, instead of using the kernel function $e^{-\|x-y\|_2^2}$, we use the kernel function $e^{-\|x+y\|_2^2}$. Lastly, the matrix is not a square matrix and there is one extra column for the all zeros vector. That is, $K'$ is ``asymmetric" up to only \emph{one vector}: the rows are indexed by $X$ and the columns are indexed by $X \cup \{0\}$.

We note that for $K'$ it is easy to output a relative error matrix-vector product for non-negative vectors $x$: that is, given the dataset $X$ and a vector $x \in \R^{n+1}$, we can (in the same running time as \cref{a18}) output $y = K'x+e$ satisfying $\|e\|_2 \le \eps \|K'x\|_2$: indeed, we can calculate the contribution of the last column exactly in $O(n)$ time. For the ``square'' part of $K'$, one can easily check that the same algorithm of the general theorem \ref{a8} works, since it reduces the MVP computation to KDE queries, which we can also instantiate for the kernel $e^{-\|x+y\|_2^2}$ by simplify multiplying dataset points by $-1$ for which the KDE data structure is constructed.

Thus, this family of matrices (where we use the Gaussian kernel with a $+$ sign), we can obtain strong relative error guarantees. Our main theorem below shows that this is not the case if we instead query $K'$ with a vector $x$ that may have mixed signs.

\lbnegmvp*

To prove the theorem, we use reduction from the Orthogonal Vectors (OV) problem, showing that the OV problem can be solved if such a hypothetical algorithm of Theorem \ref{a21} exists. First, the following lemma helps transform the input of OV (Definition \ref{a32}) into a more convenient form. 

\begin{proposition}\label{a63}
   Let $(A,B)$ with $A = \{a_1, \cdots, a_n\}$ and $B = \{b_1, \cdots, b_n\}$ be the input to the Orthogonal Vectors problem in dimension $d = \omega(\log n)$. By increasing the dimension $d$ by a constant multiplicative factor, we can ensure the following:
    \begin{itemize}
        \item $\forall i, j \in [n], \langle a_i, a_j \rangle \ne 0$ and $\langle b_i, b_j \rangle \ne 0$; that is, none of the pairs within $A$ and within $B$ are orthogonal.
        \item $\forall i, j \in [n], \|a_i\|_1 = \|b_j\|_1,$ that is, all vectors have the same number of $1$'s.
    \end{itemize}
\end{proposition}
\begin{proof}
   We pad the vectors with $2d+2$ extra coordinates as follows: for the first extra coordinate, all vectors in $A$ will get a $1$ and all vectors in $B$ will get a $0$. Similarly in the second extra coordinate, all vectors in $B$ will get a $1$ and all vectors in $A$ will get a $0$. For the next block of $d$ coordinates, all vectors in $B$ get all zeros. For vectors $a \in A$, we give them $d - \|a\|_1$ many $1$'s (say consecutive), and the rest zeros (where we use the original number of $1$s). We do the symmetric operation on the second block of $d$ coordinates for vectors in $B$. This ensures the two stated guarantees
\end{proof}

We now show that a sub-quadratic algorithm for computing matrix vector products for arbitrary vectors for the matrix $K'$ allows us to solve OV.

\begin{proof}[Proof of \cref{a21}]
    Let $X = A \cup B$ where $A,B$ are the input point sets for the OV problem, with the transformation specified in \cref{a63} applied to $A$ and $B$.  Let $t$ be the number of $\pm 1$ coordinates in the vectors in $A \cup B$ (this quantity is the same for all vectors in $X$ due to Proposition \ref{a63}). First, we can assume without loss of generality that there are at most $n^{0.001}$ pairs $a \in A, b \in B$ that satisfy $\langle a, b \rangle = 0$, since otherwise by random sampling, we can find such a pair in $n^{1.99999} = o(n^2)$ time with high probability. By scaling the points by a factor $\sqrt{C \log n}$ for a sufficiently large constant $C \ge 1$, we have the following:

 \begin{enumerate}
        \item  For $a \in A$ and $b \in B$, if $\langle a, b \rangle = 0$, then the corresponding entry in $K'$ has value (note the plus)
    \[e^{ - C \log (n) \|a + b\|_2^2} = e^{-2C \log(n) t} = n^{-2Ct}.\] 
    \item On the other hand, if $\langle a, b \rangle \ge 1$, the corresponding entry in $K'$ has value 
    \[e^{-C \log(n) \|a+b\|_2^2 } \le e^{-C \log(n)(2t + 2)} \le n^{-(2t+2)C}.\]
    \item Furthermore, the $n+1$th column of $K$ is a constant vector since $\|x_i-0\|_2^2 = \|x_i\|_2^2 = \|x\|_1$ is the same value for all $x_i \in X$ due to Proposition \ref{a63}. Let $\Delta = e^{-C \log(n) t}$ be the constant value in the $n+1$th column.
    \item Finally for every pair $a \ne a' \in A$, the corresponding entry in $K'$ has value 
     \[e^{ - C \log (n) \|a + a'\|_2^2} \le e^{-C \log(n)(2t + 2)} \le n^{-(2t+2)C},\]
     and a similar calculation holds for a pair $b \ne b' \in B$.
    \end{enumerate}

     Consider the vector 
    \begin{equation}\label{a64}
         x = (1, \cdots, 1, -1/\Delta) \in \R^{n+1}.
    \end{equation}
 For $i \in [n]$, the $i$th entry of $K'x$ has value 
    \[ \sum_{j = 1}^n e^{-C \log(n) \|x_i + x_j\|_2^2} + \Delta \cdot \frac{-1}{\Delta} = \sum_{j = 1, j \ne i}^n e^{-C \log(n) \|x_i + x_j\|_2^2}.  \]
    In other words, the $i$th entry of $K'x$ has value equal to the sum of the $i$th row of $K'$, ignoring the diagonal (note: removing the diagonal is the main hurdle and the extra coordinate allows us to do this).

Now in the case that an orthogonal pair exists, item (1) above implies that some entry of $K'x$ has value at least $n^{-2Ct}$. In the case where no such pair exists, item (2) implies that every entry of $K'x$ is bounded by $n^{-(2t+1)C}$. Furthermore, by our assumption that there are at most $n^{0.001}$ orthogonal pairs in the yes case, we know that in \emph{both cases}, we can upper bound $\|K'x\|_2^2$ by
\[ \|K'x\|_2^2 \le n^{0.001} \cdot (n^{0.001} n^{-2Ct})^2 + n \cdot (n^{-(2t+2)C})^2 \le O(n^{-4Ct + 0.003}). \]
This implies that \emph{every} entry of $e$ is upper bounded by $\eps \|K'x\|_2 = O(\eps \cdot n^{-2Ct + 0.0015})$. Thus if $\eps \le n^{-0.0015}/C'$ for a sufficiently large constant $C' \ge 1$, then in the yes case, some coordinate of $K'x+e$ has value at least $n^{-2Ct}/100$. Conversely in the no case, all coordinates of $K'x+e$ are smaller than $n^{-(2t+1)C} + \eps \|K'x\|_2 \le n^{-2Ct}/1000$. Thus, computing $y = K'x+e$ satisfying $\|e\|_2 \le \eps \|K'x\|_2$ allows us to solve OV, and the conclusion follows.
\end{proof}

\subsection{Extension to Vector-Matrix-Vector Products}
We extend \cref{a21} to the case of computing $y^\top K'x$ for the same matrix $K'$ as above. 

\begin{theorem}\label{a65}
    Consider the matrix $K'$ described above. Any algorithm which takes as input $X$, vectors $x \in \R^{n+1}$ and $y \in \R^n$ such that the first $n$ coordinates of $x$ and $y$ agree, and returns a $n^{O(1)}$ multiplicative approximation to $y^\top K'x$ requires almost quadratic time, assuming SETH.
\end{theorem}
\begin{proof}
    We use a similar reduction from OV as in the proof of Theorem \ref{a21}. Recall from the proof of Theorem \ref{a21} the definition of vector $x$ (\cref{a64}) and the construction of the kernel matrix $K' \in \R^{n \times n+1}$. Let $y$ be the all ones vector.

 From the proof of Theorem \ref{a21}, we know that the $i$th entry of $K'x$ is equal to the sum of the $i$th row of $K'$, ignoring the diagonal. Thus, $y^\top K'x$ denotes the sum of the off diagonal entries of $K$. 
 
    Now continuing as in the proof of Theorem \ref{a21}, item (1) there implies that some entry of $K'x$ has value at least $n^{-2Ct}$ if an orthogonal pair exists, where $t$ is the (fixed) number of $1$ coordinates in $A \cup B$. In the case where no such pair exists, item (2) there implies that every entry of $K'x$ is bounded by $n^{-(2t+1)C}$. In other words, if an orthogonal pair exists, then $y^\top K'x$ is at least $n^{-2Ct}$, where as if no such pair exists, $y^\top Kx \le n^{-(2t+1)C + 1} = n^{-2tC} \cdot n^{-C + 1} \ll n^{-2Ct} $ by picking a large constant $C \ge 1$. Thus, any polynomial approximation to $y^\top K'x$ allows us to solve OV, proving the lower bound.    
\end{proof}

\section{Lower Bounds for Kernel Sums}\label{a38}

In this section, we prove lower bounds for calculating arbitrary vector-matrix-vector products, as well as computing on \emph{asymmetric} kernel matrices, where the rows and columns need not be indexed by the same point set.

\lbvkw*

\begin{proof}
    Let $A, B$ be the input to OV. Note that without loss of generality, we can assume every vector in $A$ and $B$ have the same number of $1$'s (say $t$). This is because we can pad $2d$ extra entries to the vectors, set the first $d$ new coordinates to all zeros for all vectors $A$ and the second set of $d$ to all zeros for all vectors in $B$. Then for vectors in $A$, we set the appropriate number of the second set of $d$ coordinates to $1$'s and vice-versa for $B$.

    Now let $X  = A \cup -B$ be the underlying dataset for the kernel matrix $K$ with kernel function $k(x,y) = e^{- C \log(n) \|x-y\|_2^2}$, let $v$ be the vector with all $1$'s in the first half and zeros in the second half, and $w$ be the vector with all $1$'s in the second half and zeros in the first half. 

    For $a \in A$ and $b \in B$, if $\langle a, b \rangle = 0$, then the corresponding entry in $K$ has value 
    \[e^{ - C \log (n) \|a + b\|_2^2} = e^{-2C \log(n) t}.\] 
    On the other hand, if $\langle a, b \rangle \ge 1$, the corresponding entry in $K$ has value 
    \[e^{-C \log(n) \|a+b\|_2^2 } \le e^{-C \log(n)(2t + 2)}.\]
   Thus, by our choice of $v$ and $w$ (see Figure \ref{a66}), we have that $v^\top Kw$ is exactly the sum of all the entries 
   \[S:= \sum_{a \in A, b \in B} e^{ - C \log (n) \|a + b\|_2^2} = e^{-C \log(n) \cdot 2t} \sum_{a \in A, b \in B} e^{-2C \log(n) \langle a, b \rangle}. \]
   If there is no orthogonal pair, then this sum is at most $e^{-C \log(n) \cdot (2t+2)} \cdot n^2$ and if there is an orthogonal pair, then this sum is at least $e^{-C \log(n) \cdot (2t+2)}  \cdot (n^2-1) + e^{-2C \log(n) t}$. The ratio of these values is at least 
   \[ \frac{e^{-C \log(n) \cdot (2t+2)}  \cdot (n^2-1) + e^{-2C \log(n) t}}{e^{-C \log(n) \cdot (2t+2)} \cdot n^2}  = \frac{n^{-2C+2} - n^{-2C} +  1}{ n^{-2C + 2}}  \ge \Omega(n^{2C-2}). \]
   By setting $C$ to be a sufficiently large constant, we have that any $o(n^{2C-2})$ approximation to $v^\top Kw$ decides if we have an orthogonal pair in $A,B$, solving the OV problem. 

\begin{figure}
\centering
\begin{tikzpicture}

\draw (0,0) rectangle (4,4);
\draw (0,2) -- (4,2);
\draw (2,0) -- (2,4);

\node at (1,3) {\Large $A$};
\node at (3,3) {\Large $-B$};
\node at (1,1) {\Large $-B$};
\node at (3,1) {\Large $A$};

\fill[blue!20,opacity=0.4] (2,2) rectangle (4,4);

\draw (-1,0) rectangle (-0.5,4);
\draw (-1,2) -- (-0.5,2);
\node at (-0.75,3) {\Large $1$};
\node at (-0.75,1) {\Large $0$};
\node at (-1.5,2) {\Large $v$};

\draw (0,4.5) rectangle (4,5);
\draw (2,4.5) -- (2,5);
\node at (1,4.75) {\Large $0$};
\node at (3,4.75) {\Large $1$};
\node at (2,5.5) {\Large $w^\top $};
\end{tikzpicture}
\caption{\label{a66} Representation of $v^\top Kw$.}
\end{figure}
\end{proof}

Ideas of the above lower bound can be extended to show hardness for computations on \emph{asymmetric} kernel matrices. Note that in the introduction, we defined a kernel matrix to be symmetric where the rows and columns are indexed by the same point set $X$. In practice, asymmetric kernel matrices are also popular, where an asymmetric kernel matrix $K \in \R^{n \times n}$ has its rows and columns indexed by possibly different point sets $X$ and $Y$: $K_{i,j} = k(x_i, y_j)$.

\begin{corollary}\label{a14}
    For any $d = \omega(\log n)$, computing $\mathbf1^\top K\mathbf1$ for an asymmetric kernel matrix $K \in \R^{n \times n}$  up to polynomial relative error requires almost quadratic time, assuming SETH.
\end{corollary}
\begin{proof}
    The proof uses the same construction as in \cref{a13}: let $A,B$ be the input to OV and consider the `asymmetric' kernel matrix $K_{A,-B}$. If we let the kernel function be $k(x,y) = e^{- C \log(n) \|x-y\|_2^2}$ then the same calculation as in \cref{a13} implies that the sum of the entries of $K$ is 
   \[\sum_{a \in A, b \in B} e^{ - C \log (n) \|a + b\|_2^2} = e^{-C \log(n) \cdot 2t} \sum_{a \in A, b \in B} e^{-2C \log(n) \langle a, b \rangle}. \]
   Thus to repeat the argument, if there is no orthogonal pair, then this sum is at most $e^{-C \log(n) \cdot (2t+2)} \cdot n^2$ and if there is an orthogonal pair, then this sum is at least $e^{-C \log(n) \cdot (2t+2)}  \cdot (n^2-1) + e^{-2C \log(n) t}$. The ratio of these values is at least $\Omega(n^{2C-2})$ as before, and by setting $C$ to be a sufficiently large constant, we have that any $o(n^{2C-2})$ approximation to $\mathbf1^\top K\mathbf1$ decides if we have an orthogonal pair in $A,B$, solving the OV problem. 
\end{proof}

\cref{a13} also implies that approximating $\lambda_1$ for an asymmetric kernel matrix requires almost quadratic time, assuming SETH.

\begin{corollary}\label{a15}
    For any $d = \omega(\log n)$, computing the top singular value of $K$ for an asymmetric kernel matrix $K$ up to polynomial relative error requires almost quadratic time, assuming SETH.
\end{corollary}
\begin{proof}
We use the same construction as in \cref{a13}. Note that the frobenius norm squared of the matrix $K$ is
\[ \sum_{a \in A, b \in B} e^{ - 2C \log (n) \|a + b\|_2^2} = e^{-2C \log(n) \cdot 2t} \sum_{a \in A, b \in B} e^{-4C \log(n) \langle a, b \rangle}. \]
If there is no orthogonal pair, then the sum is at most $n^2 e^{-4C \log(n) \cdot (t+1)}$, meaning the top singular value is at most $n \cdot e^{-2C \log(n) \cdot (t+1)} \le n^{-2Ct - 2C + 1}$. On the other hand, an orthogonal pair implies an entry with value at least $e^{-2C \log(n) \cdot t}= n^{-2Ct}$, which lower bounds the top singular value, e.g. by considering the basis vector corresponding to the column that this entry is in. Thus, any algorithm outputting a fixed polynomial factor approximation solves OV by appropriately increasing the value of the constant $C$.
\end{proof}

\begin{corollary}\label{a16}
For any $d=\omega(\log n)$, computing $K\mathbf1$ for an asymmetric kernel matrix $K$ up to polynomial relative error requires almost quadratic time, assuming SETH.
\end{corollary}
\begin{proof}Let $v$ be the computed value of $K\mathbf1$, which we suppose for the sake of contradiction can be expressed as
$v=K\mathbf1+n^\alpha\magn{K\mathbf1}u$ for some non-negative unit vector $u$.
The proof uses the same construction as in \Cref{a13}. Note that the $i$th entry of $K\mathbf1$ is
    \[
    \sum_{b\in B}e^{-2C\log(n)\magn{a_i+b}^2_2}
    =e^{-2C\log(n)2t}\sum_{b\in B} e^{-4C\log(n)\inr{a_i}b}.\]
If there are no orthogonal pairs, then
\[\magn{K\mathbf1}\le\sqrt ne^{-2C\log(n)(2t+2)}\]
so
\[\magn v\le(1+n^\alpha)\magn{K\mathbf1}\le e^{-2C\log(n)(2t+2)+(\alpha+\frac12)\log(n)+1}.\] On the other hand, if there is an orthogonal pair (say $a_i$ and $b_j$), then
\[(K\mathbf1)_i
=e^{-2C\log(n)2t}\sum_{b\in B} e^{-4C\log(n)\inr{a_i}b}
\ge e^{-2C\log(n)2t}\]
so
\[\magn v\ge e^{-2C\log(n)2t}.\]
When $C$ is large enough, this allows one to distinguish between the two cases.
\end{proof}

\section{Empirical Results}\label{a27}
While the main focus of our paper is on proving improved theoretical bounds, we also empirically evaluate our algorithm for approximating the top eigenvalue $\lambda_1$ of the underlying (symmetric) kernel matrix $K$ (\cref{a25}). It has strong theoretical guarantees, obtaining $\eps$ relative error in sub-quadratic time. Its analysis, given in \cref{a3}, is also subtle, even though \cref{a25} is itself quite simple, so it makes a perfect case to study the interplay of theory and practice. 

Our main goal is to show that our choice of the error of the noisy matrix vector product (see \cref{a17}), namely $\Theta(\eps)$ in \cref{a9}, \emph{is sufficient} in practice to obtain a $\Theta(\eps)$-relative error in the approximation of $\lambda_1$.  This is already proven in \cref{a3}, and here we demonstrate that the phenomenon can also be observed in experiments. Thus, we demonstrate that the parameter choices of prior work of \cite{b0}, which uses an MVP with approximation $\Theta(\eps^2)$ to get $\Theta(\eps)$ relative error for $\lambda_1$, \emph{is unnecessary}. This is important to demonstrate empirically for kernel matrices because an $\Theta(\eps^2)$ approximate MVP algorithm is much slower than than an $\Theta(\eps)$-approximate MVP, since KDE data structures (which are repeatedly called in an approximate MVP) already have an underlying $1/\eps^2$ scaling, meaning a quadratic blow up in the accuracy of the MVP can translate into a large overhead in the running time. 

Indeed, our improved parameter choice (with the matching analysis) is a main contributor to the $\text{poly}(1/\eps)$ decrease in running time that our \cref{a9} obtains over the prior state of the art of \cite{b0}. Even a modest $\eps$ value of say $\eps = 0.01$ can lead to an unnecessary huge overhead of orders of magnitude in the running time, if one closely follows the theoretical guidelines of \cite{b0} in practice. We refer to \cref{a35} for further discussions of the choice of the approximation error in the MVP and its impact in the error the top eigenvalue estimate. We also remark that for large $n$, exact eigenvalue computation requires at least $\Omega(n^2)$ time, which is prohibitive, necessitating the need for a fast approximation. 

\subsection{Empirical Setup}
We use the Exponential Kernel $k(x,y) = \exp(-\|x-y\|_2/\sigma)$ for our main experiments. This means we use the same approximate MVP algorithm of \cite{b0}, allowing us to study the effect of matrix-vector products on the relative error of the $\lambda_1$ estimate in isolation. We pick the bandwidth parameter $\sigma$ so that the average entry in the kernel matrix is approximately $10^{-3}$. This is a standard choice in experiments with KDE data structures \cite{b31}, and other choices produced similar results.

\paragraph{Implementation Details.} 

The phenomenon that we are studying (how does the approximation quality of matrix vector products affect the relative error of the $\lambda_1$ estimate) is agnostic to how exactly the approximate matrix vector products are implemented. We pick the most practically convenient method as detailed below. First, we describe our choice of the KDE algorithm. Note that our algorithms use KDE queries as a black box. Thus they are valid for any kernel admitting KDE data structures. We use random sampling as the underlying KDE datastructure for implementing the approximate non-negative matrix-vector product algorithm (described below). It can be checked that random sampling satisfies \cref{a39} with $p = 1$. While the data structures of \cref{a45} offer better theoretical guarantees, the state of the art Exponential Kernel KDE algorithm has not been implemented to our knowledge. On the other hand, random sampling is trivial to implement and is very fast in practice, as affirmed by a recent empirical study of \cite{b31}. Note that \cite{b31} also proposes another practical algorithm, called DEANN, which is competitive with random sampling. We chose not to use it in our final experiences since DEANN was still slower than random sampling in Python (even with Numba optimization). It also requires pre-building a nearest neighbor index, whereas random sampling does not require any preprocessing. However, we again emphasize that our algorithms has the advantage that any progress on KDE datastructures (whether in theory or practice) automatically translates into faster algorithms since KDE queries are used in a black box manner. For random sampling, the main parameter is how many points to sample when given $\eps$ as input in \cref{a39}. We set this to $\lceil \frac{1}{\eps^2} \rceil$ to align with the theoretical guarantees. 

Lastly to implement the $\eps$-approximate non-negative matrix-vector product for the Exponential Kernel, we use a practically optimized version of the algorithm of \cite{b0}: we partition the coordinates of the input (non-negative) unit vector into groups that only differ by $1+\eps$. For every coordinate of the output vector, the contribution of every group is just a KDE query, which we use random sampling for as above. 

We implement all algorithms using Python 3.9.7 on an M1 MacbookPro with 32GB of RAM. We use Numba to accelerate all the numerical Python code. In all experiments we repeat independently at least $20$ times and shade one standard deviation. 
\\

\noindent \textbf{Metrics.}  We measure accuracy by the relative error. For approximating top eigenvalue using a corresponding vector $v$, the relative error is $1-v^T K v/\lambda_1(K)$. $\lambda_1$ is computed by running the standard exact power method run until convergence. We also measure the accuracy of a matrix vector product approximation $u$ of $Kv$ using $\|u - Kv\|_2/\|Kv\|_2$. This value corresponds to how the error is defined in \cref{a17}.
\\

\noindent \textbf{Datasets.} $n$ denotes the number of data points, but we note that the underlying kernel matrix is of size $n^2$. We use datasets up to size $n = 10^4$ because we were not able to compute the exact matrix vector product or the exact $\lambda_1$ for larger datasets, since the exact computation scales quadratically. We use three real world datasets that have been previously used for evaluating KDE based algorithms: The Forest CoverType dataset \cite{b32}, used in \cite{b33,b11,b0}, MNIST, used in \cite{b0}, and CLIP embeddings of CIFAR10, used in \cite{b34}. For MNIST we select $n = 1000$ random data points, which are in dimension $784$. For CoverType, we select $n = 10^4$ random data points, which are in dimension $54$, and for CLIP embeddings we select 500 points in dimension 512 in our main experiments. Later we also select larger subsets of these datasets for wall-clock studies.

\subsection{Main Results}
We vary the $\eps$ approximation factor used in the matrix vector product algorithm and measure the accuracy of the output, which is an approximation to $\lambda_1$ in the power method of \cref{a25}. We set the number of iterations to a large value so that the power method has converged (in all of our experiments, we noticed that around $50$ iterations is ample for convergence). \cref{a9} predicts that the relationship between the relative error of the approximation and the error of the noisy matrix vector product should be linear, and indeed, we observe a linear relationship in all of our datasets as shown in \cref{a67}. This empirically corroborates our theoretical analysis: \textbf{using approximate matrix vector products of accuracy $\Theta(\eps)$ is the correct choice} if we want an $\eps$ relative error approximation of $\lambda_1$. This also empirically demonstrates that the prior analysis of \cite{b0} is loose as it instead predicts a square-root relationship: their analysis initializes approximate matrix vector products with the guarantee that for every non-negative $x$, they output a $y$ with $\|y- Kx\|_2 \le \eps^2 \|Kx\|_2$, to obtain a final relative error of $\Theta(\eps)$ for $\lambda_1$. Thus, \cref{a67} confirms that our tighter bounds translates to actual computational savings via tighter instantiation of parameters.

\begin{figure*}[!htbp]
\centering
\includegraphics[width=5.5cm]{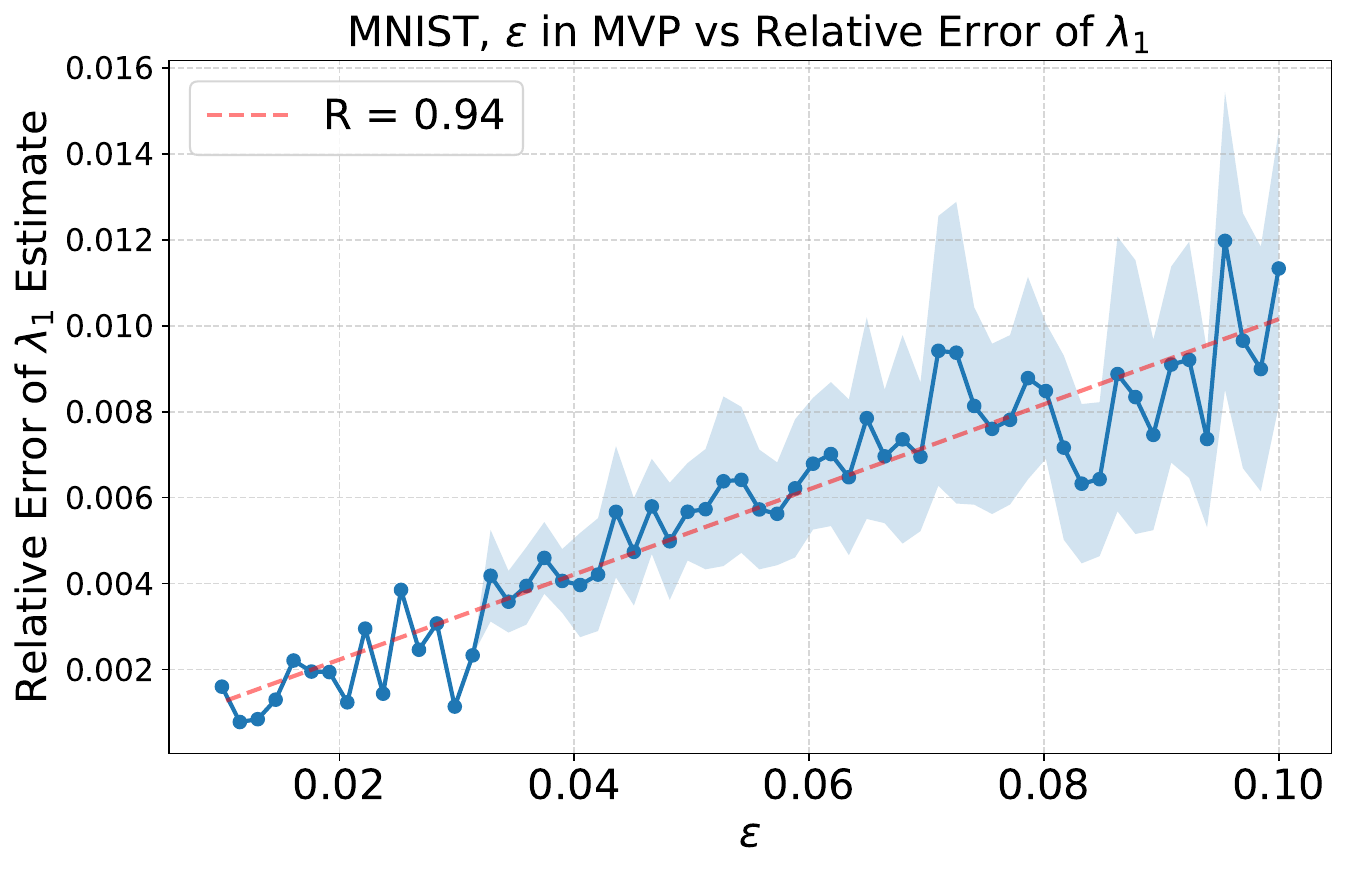}
\includegraphics[width=5.5cm]{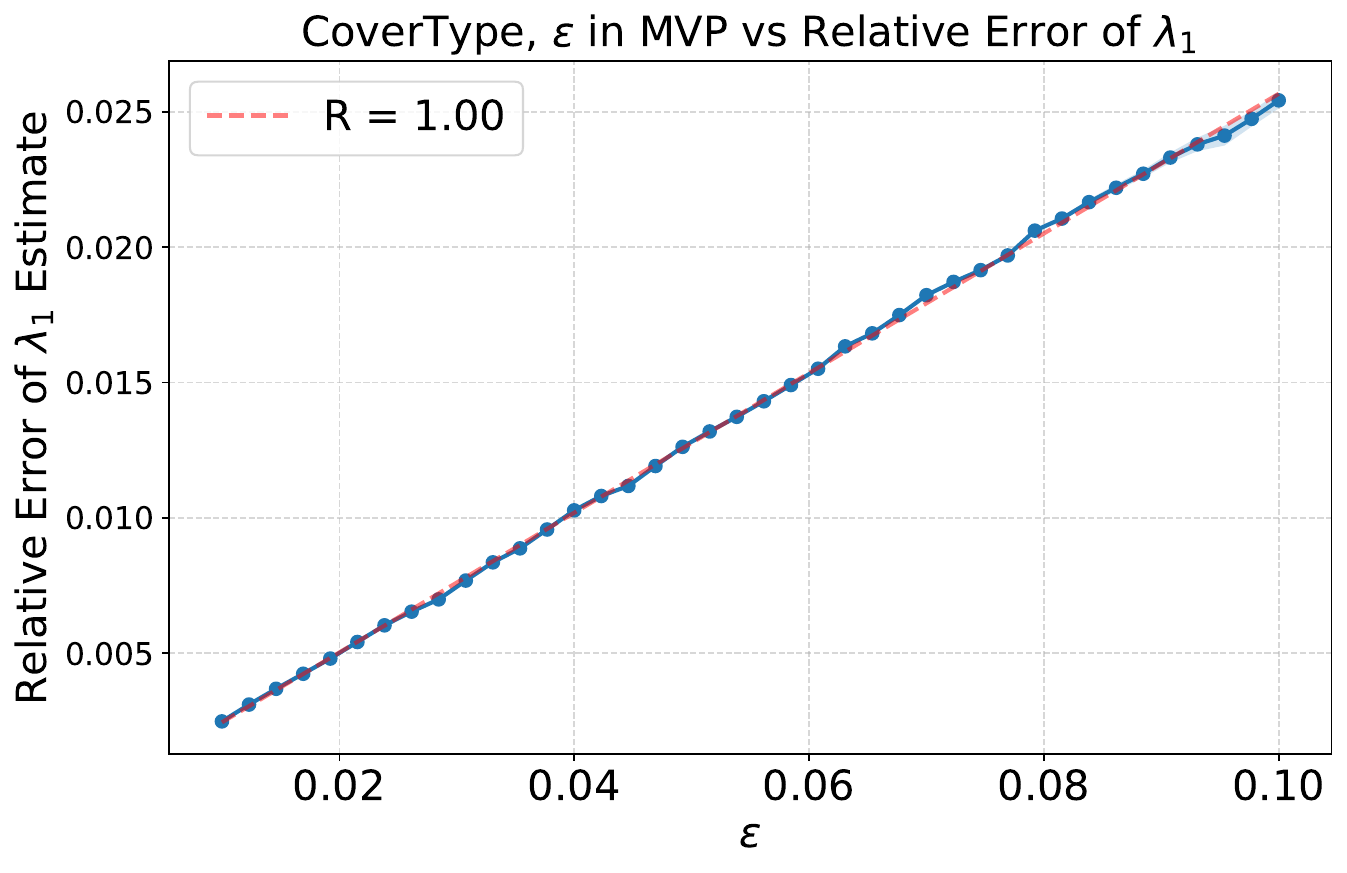}
\includegraphics[width=5.5cm]{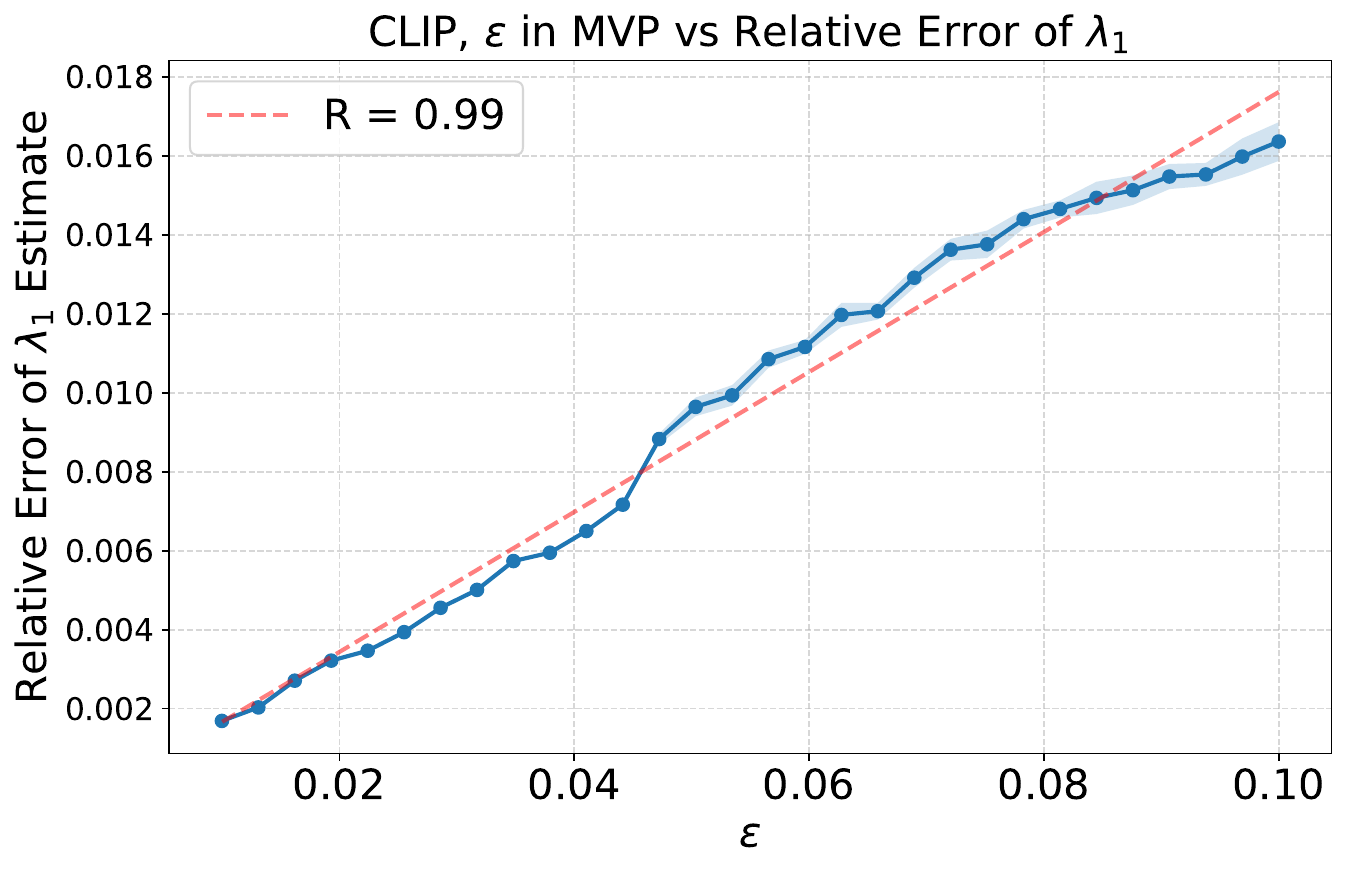}
\caption{ Plotting the relative error of the final $\lambda_1$ estimate, against the error $\eps$ of the matrix vector products used in \cref{a25}. The datasets are MNIST, CoverType, and CLIP embeddings from left to right.  \label{a67}}
\end{figure*}

A natural follow up question is if one can get \emph{even better} than linear scaling between the approximation used in the MVP step and the final relative error of $\lambda_1$ (i.e. if $\Theta(\eps)$-approximate matrix vector products can lead to $o(\eps)$ relative error). The only parameter that could potentially influence this is the number of iterations taken in the power method. Our result of \cref{a26} already proves that this cannot happen in the worst case, and indeed, we empirically observe this phenomenon in practice as well. \cref{a68} shows that even if we let the number of iterations become very large, the noise of the approximate matrix vector product dictates the final error of the relative error of $\lambda_1$ and sets a fundamental lower bound of how accurate the final error can be, even if we run power method for a very large number of iterations. Note that if one were to use exact matrix vector products (which require $\Omega(n^2)$ time, the error monotonically decreases. In the CLIP figures (third row), the very first iteration achieves around $0.04$ relative error because the starting vector of all ones already happens to obtain $0.04$ relative error. However, further iterations are required with sufficiently accurate matrix vector products to obtain even smaller relative error.

Lastly, we briefly note that the choice of the scale parameter does not affect our theoretical analysis and indeed, across other choices of bandwidth, we observe the same linear scaling as in \cref{a67}. We varied the bandwidth so that the average kernel matrix value was now $0.01$, which is another common choice in practice \cite{b31}, and observed qualitatively similar results as in \cref{a67} in \cref{a69}.

\begin{figure*}[!htbp]
\centering
\includegraphics[width=17cm]{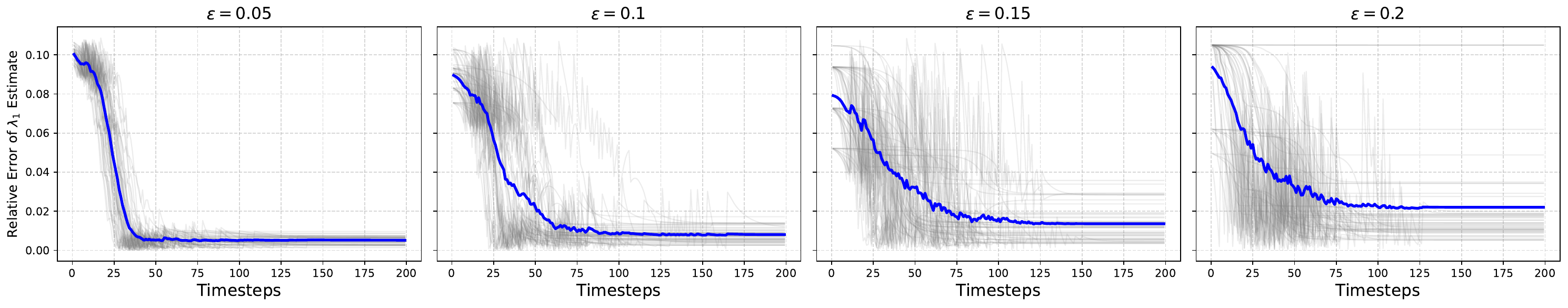}
\includegraphics[width=17cm]{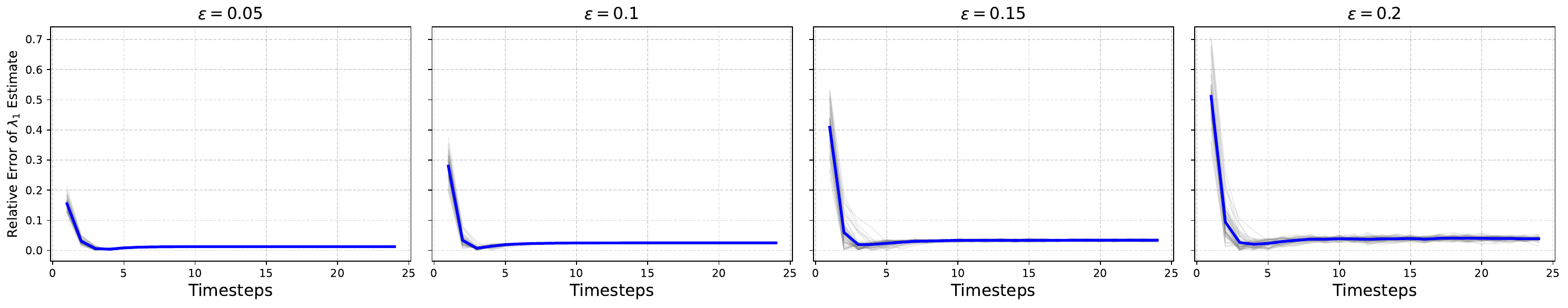}
\includegraphics[width=17cm]{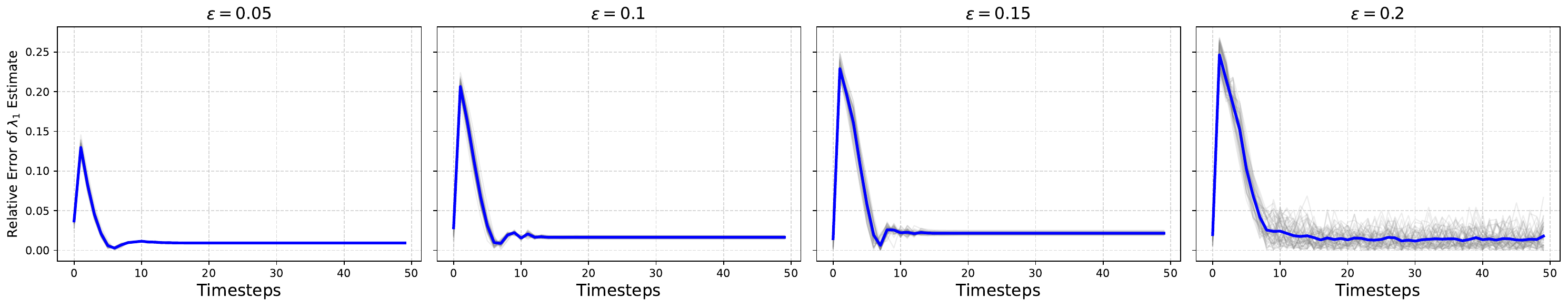}
\caption{Trajectories of the noisy power method (\cref{a25}) as we vary the number of power method iterations, as a function of the approximation of the matrix vector product. As the number of iterations grows, the final relative error of the top eigenvalue approximation converges to a value depending on the matrix vector product approximation. The top row corresponds to the MNIST dataset, the middle row corresponds to CoverType, and the last row is for CLIP embeddings. \label{a68}}
\end{figure*}

\begin{figure*}[!htbp]
\centering
\includegraphics[width=5.5cm]{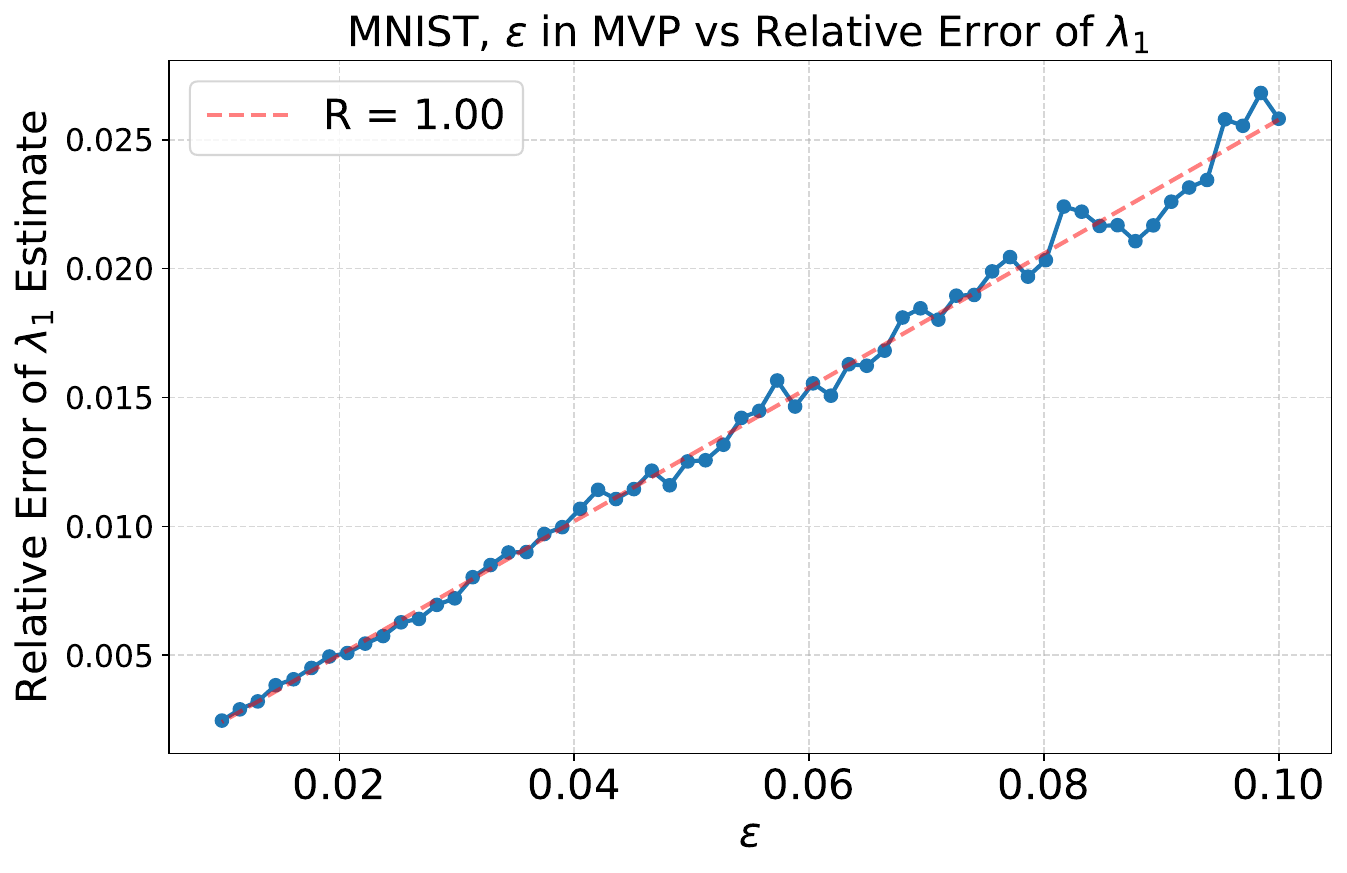}
\includegraphics[width=5.5cm]{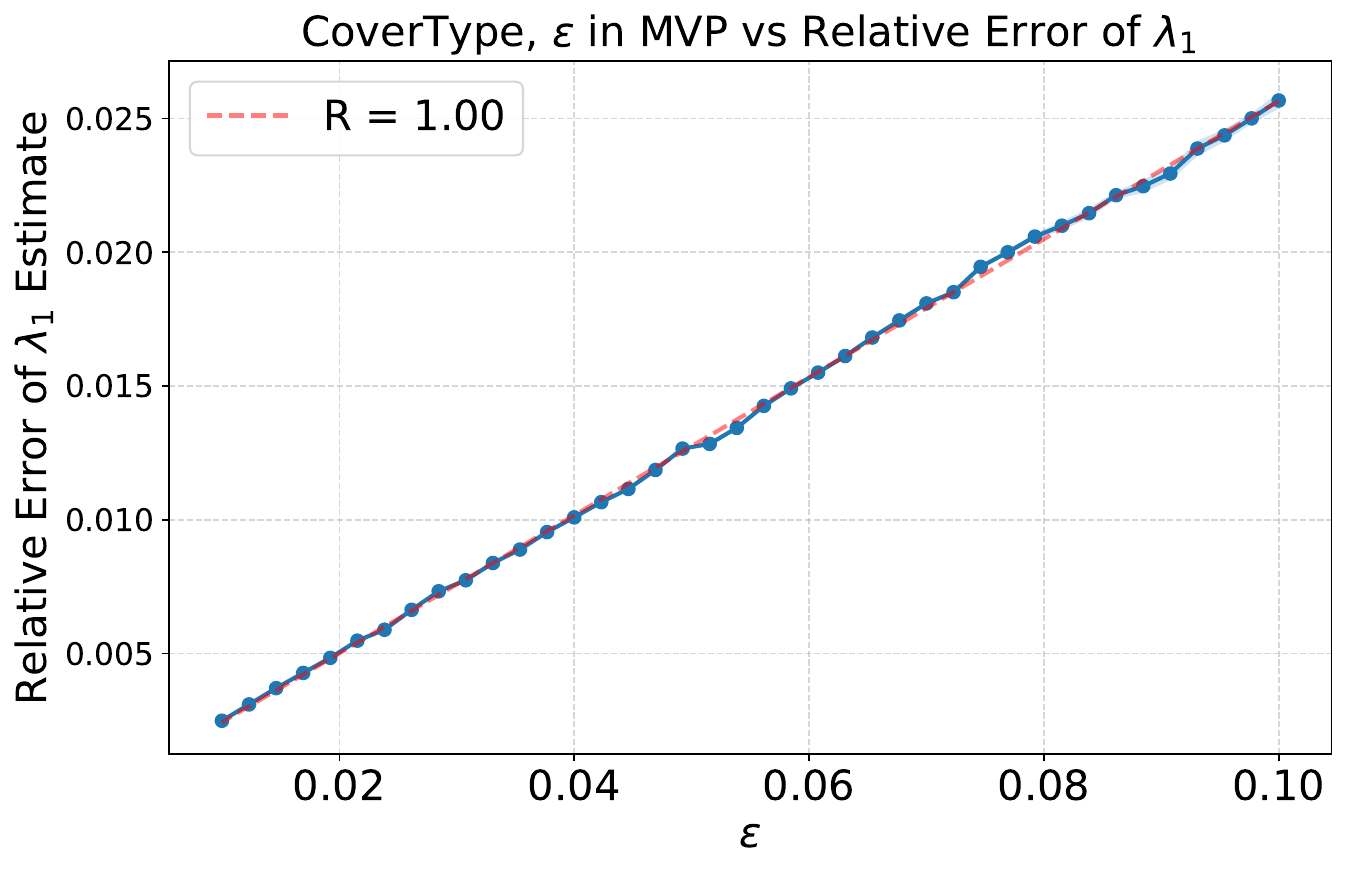}
\includegraphics[width=5.5cm]{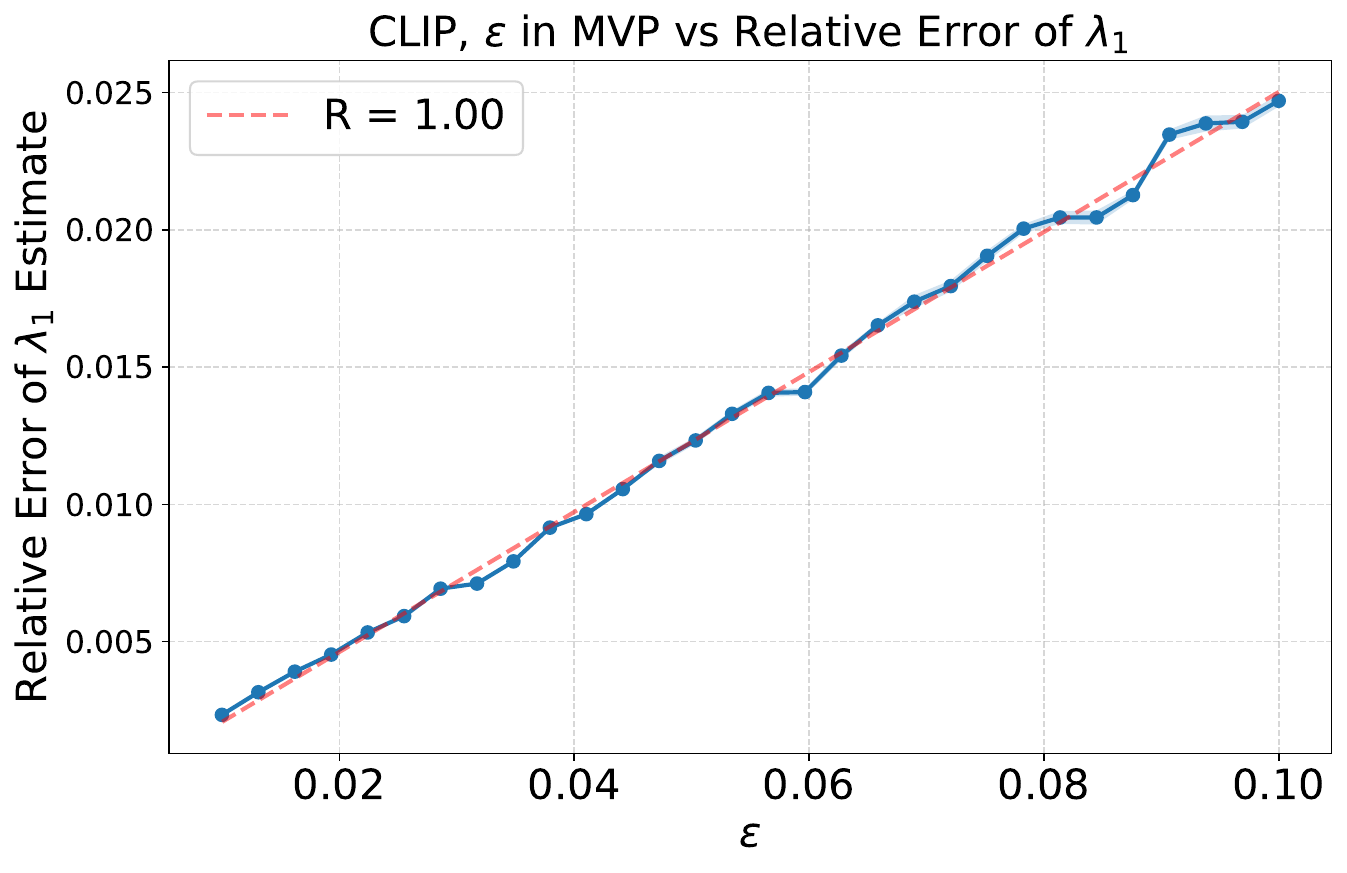}
\caption{An analogous plot to \cref{a67}, except we change the bandwidth parameter so that the average entry of the kernel matrix is $\approx 0.01$. The linear relationship of \cref{a67} is maintained.  \label{a69}}
\end{figure*}

\begin{figure*}[!htbp]
\centering
\includegraphics[width=5.5cm]{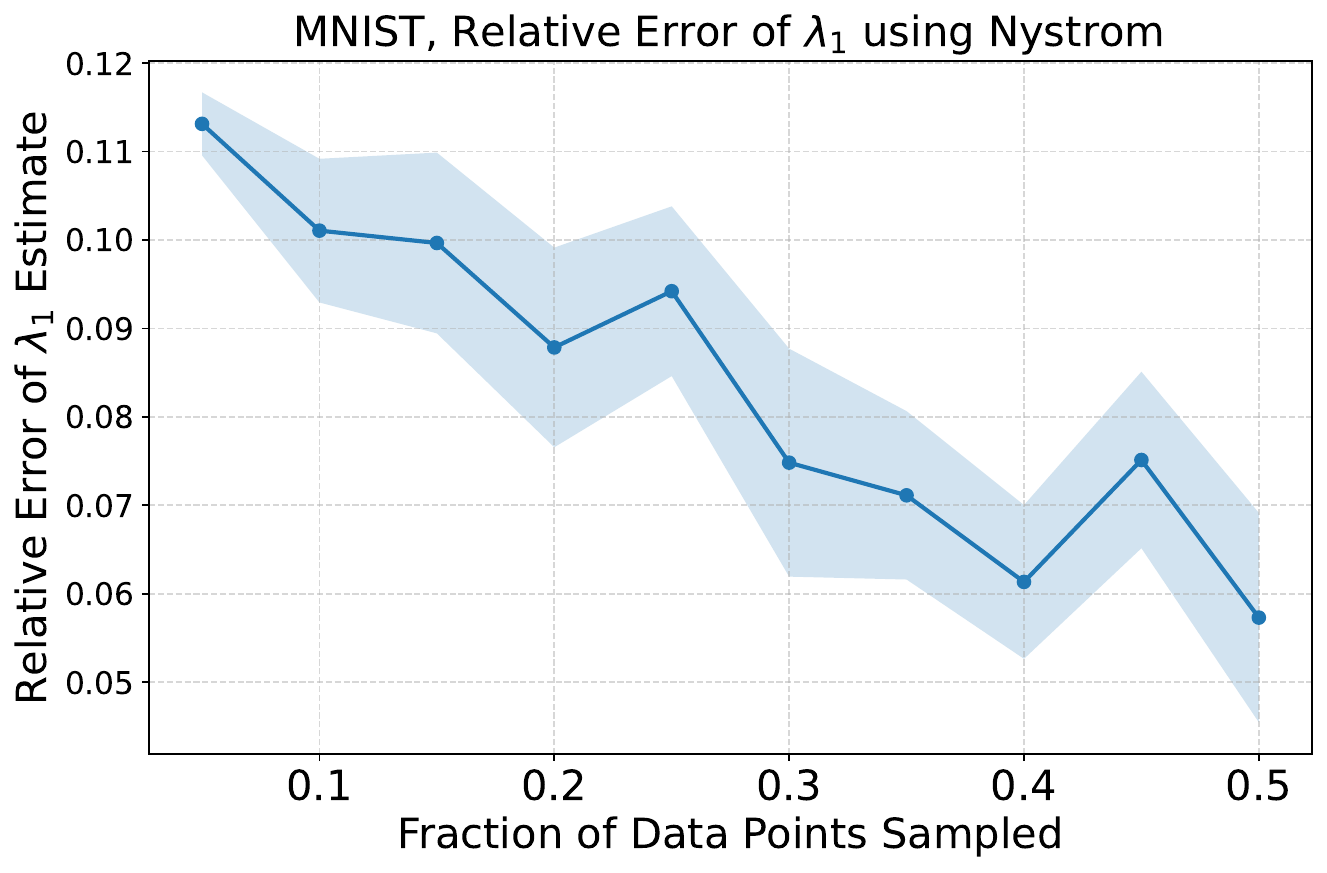}
\includegraphics[width=5.5cm]{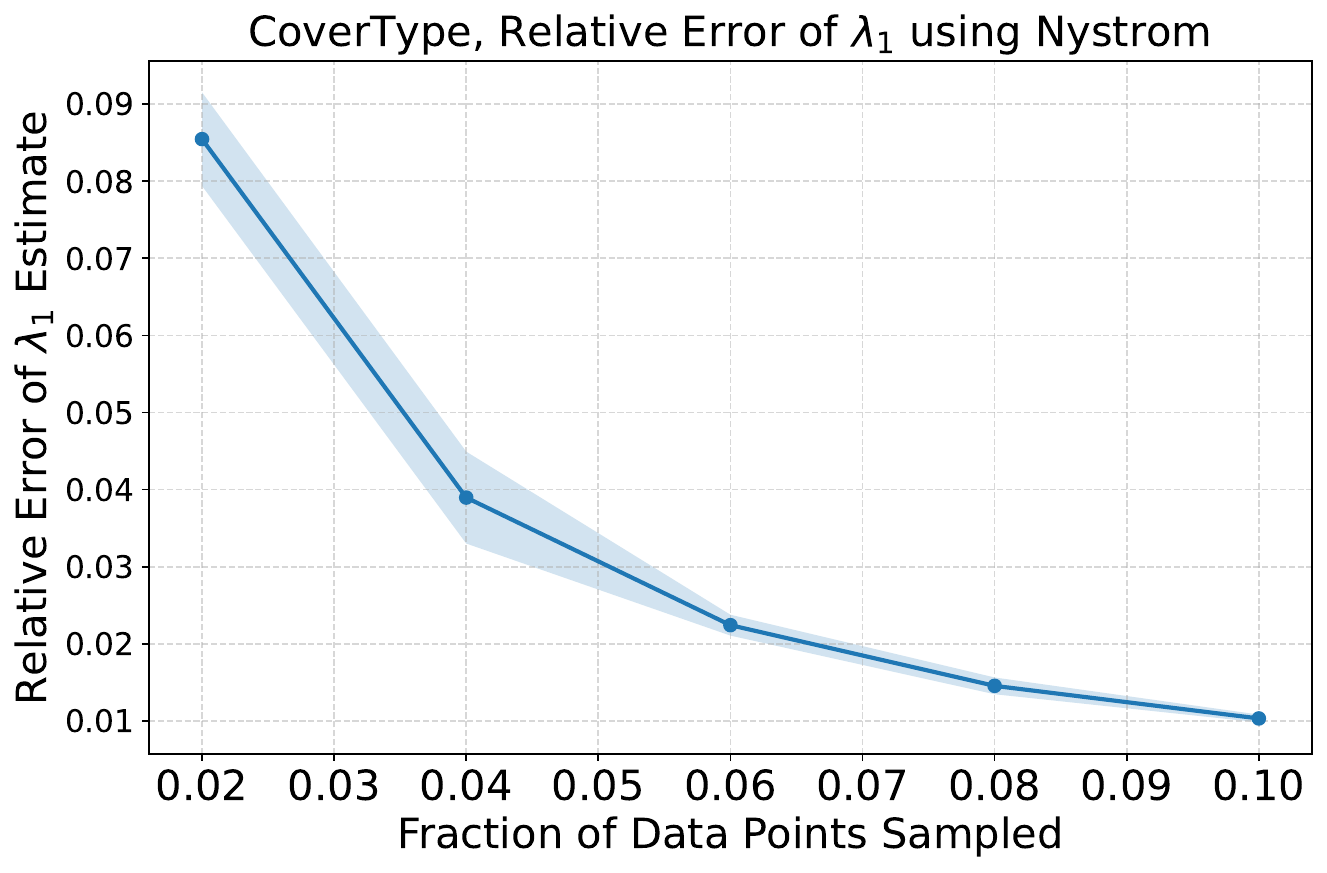}
\includegraphics[width=5.5cm]{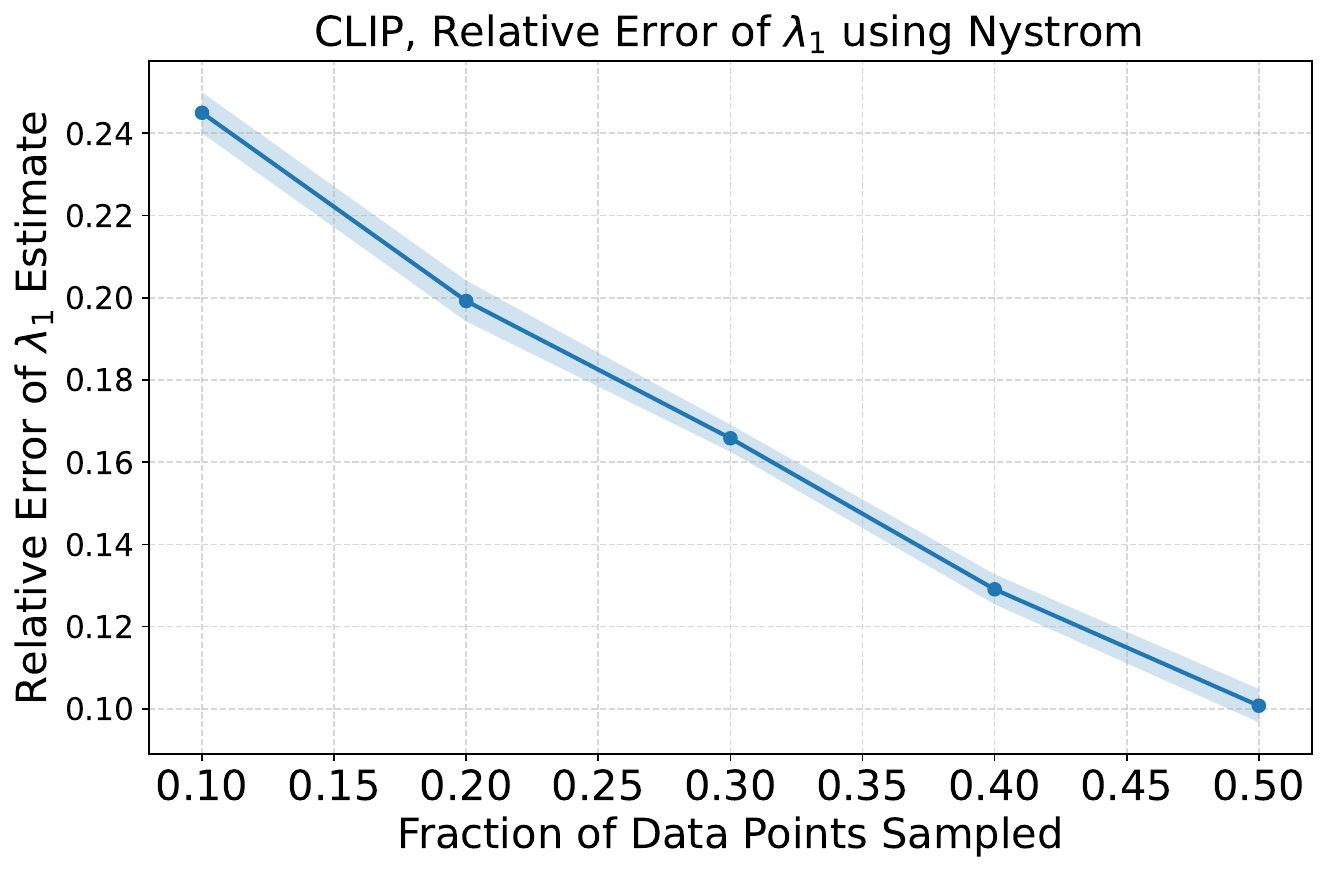}

\caption{ Relative error of approximating $\lambda_1$ as a function of number of samples taken by the Nystrom method. We can observe that a large constant fraction of the data points needs to be sampled to get relative error less than 0.1.\label{a70}}
\end{figure*}

\subsection{Comparison to Additive Error Methods} In our theoretical study and experiments, we focus on algorithms which obtain \emph{relative error} approximation to $\lambda_1$. That is, we require the  unit vector $u$ returned by the algorithm to satisfy
\[ u^\top Ku\ge(1-\eps)\lambda_1(K).\]
In comparison, algorithms which obtain additive error, e.g. those based on row or column sub-sampling, give the weaker  guarantee of approximating the top eigenvalue up to additive factor, e.g. the sampling algorithm of \cite{b21} gives a $O(\eps n)$ additive factor by sampling a $1/\eps \times 1/\eps$ sub-matrix. Additive error guarantees of this form are strictly weaker than relative error since $\lambda_1\le n$ for any real symmetric matrix with entries bounded by $1$ in absolute value, which is the case for our kernel matrices. The family of row or column sampling algorithms also includes the well studied Nystrom method \cite{b35}. We perform further experiments on our datasets (on the exponential kernel with the same scale as in \cref{a67}) to show that the Nystrom method is not sufficient for obtaining $\eps$-relative error, unless a prohibitive number ($\approx$ linear) of points are sampled. We vary the number of data points sampled by the Nystrom method (we use a standard scikit-learn implementation \cite{b36}) and calculate the top eigenvalue of the resulting Nystrom approximation. We then measure the relative error of the top eigenvalue of the Nystrom approximation compared to $\lambda_1$ of $K$, the true kernel matrix. 

The results are shown in \cref{a70}, where we see that a large constant factor of the entire dataset must be sampled by Nystrom to get small relative error; e.g for MNIST we are sampling half of the dataset to get $0.05$ relative error. However, in the case that $\Omega(n)$ data points are sampled, this offers no asymptotic advantage over the naive quadratic time algorithm of just instantiating the kernel matrix exactly. Thus we believe our result highlights that row or column sub-sampling based methods such as Nystrom may not be well-suited for high-precision approximations of $\lambda_1$,\footnote{These methods are of course very useful for many other downstream tasks; see \cite{b37}.}, unlike our version of the power method of \cref{a19} (instantiated with approximate matrix vector products), which allow for better relative error as one increases the accuracy of the noisy matrix vector products (see \cref{a67}). This also directly has a direct measurable impact on the running time. For example, in the MNIST dataset, using $\eps = 0.1$ as input in the noisy matrix vector product and then running for only $10$ iterations already gives smaller than $0.03$ relative error. This took on average $455 ms$. On the other hand, Nystrom with sampling $50\%$ of the data points is only able to give $0.05$ relative error, but takes more than $30$ seconds, which is more than one magnitude slowdown, for a worse approximation. Note that the reason row or column sampling methods struggle with high accuracy is that their additive error typically scales with large factors such as $n$ (in the guarantees of \cite{b21}) or $\|K\|_F$ (in guarantees such as sampling by row norms \cite{b29}). Driving down the additive error factor to be comparable to relative error thus requires a large number of samples.

\subsection{When Should One Use Approximate Matrix-Vector Products?} We now discuss when it is appropriate to use an exact matrix vector product for non-negative vectors over using our practical approx. MVP algorithm for the Exponential Kernel. Our focus will be on the wall-clock running time in our specific hardware (see above). Note that theoretically, the approximation algorithms for MVPs are much faster and have a sub-quadratic scaling in $n$, e.g. for the exponential kernel these algorithm have a $n^{1.1}$ dependency (see \cref{a45}). On the other hand, an exact matrix vector product takes $\Omega(n^2)$ time. Since these algorithms gives an asymptotic improvement, we wish to better understand when the `asymptotics kick in' in the real world datasets of our experiments. We note that a similar study was conducted in \cite{b0}, but no actual running times were given there. Instead, \cite{b0} focused on number of kernel evaluations performed by exact versus approximate methods. Our goal is to extend their study and provide meaningful parameter ranges when one can expect wall-clock speedups. 

Clearly, the answer is a function of the dataset size $(n,d)$ and the desired approximation factor $\eps$. Keeping $\eps$ fixed, as $n$ increases then  sub-quadratic methods become comparatively much faster than the naive quadratic time method. Fixing $n$, as $\eps$ decreases, these algorithms have a polynomial in $1/\eps$ dependency, so they become slower as we compute a high precision answer. 

We can observe these trends in practice. In the MNIST dataset (see the top row of \cref{a71}), we plot the running time of our practical $\eps$-approximate matrix vector product as a function of $\eps$, as well as the running time of an exact computation, shown as a dashed red line. We do this for three increasing point sets: $n = 10^4, 2 \cdot 10^4$, and $3 \cdot 10^4$. Clearly, a smaller $\eps$ requires a longer running time for a single matrix-vector computation. As the dataset size grows (left to right), the `cut-off' point in our implemented $\eps$-approximate matrix-vector product becomes more favorable. Indeed, when $n = 3 \cdot 10^4$, any $\eps \ge 0.01$ is faster than the exact computation. Furthermore, the approximate computations become \emph{increasingly} faster: for $n = 10^4,$ an $\eps = 0.1$ computation is approximately $1.5$x faster than an exact computation. But for $n = 3 \cdot 10^4$, it is $> 4$x faster. We observe qualitatively similar trends in our other two datasets as well. 

The results of \cref{a71} thus lead to the (expected) conclusions that smaller $\eps$ makes an exact matrix vector computation more favorable, whereas a larger $n$ makes an exact computation less favorable, with the `cutoff' in our experiments being around $n \approx 10^4$. One may wonder then how one should choose the right $\eps$. Unfortunately, but expectedly, the answer here is less universal and depends on the desired application at hand. However, it is true that using a smaller $\eps$ in our approximate mat-vec implementation leads to more accurate downstream result: indeed \cref{a72} demonstrates that across our three datasets, smaller values of $\eps$ lead to more accurate computation of the top eigenvalue (all using $75$ power method iterations), even if we vary the scale parameter of the kernel (which is represented as the average kernel value on the x-axis). For our datasets and for our task of top eigenvector computation, with setting $\eps = 0.1$ with a resulting $\approx 0.03$ relative error for our $\lambda_1$ approximation seems to be a reasonable choice. This would be more than a $3$x speedup in all of our datasets for $n = 3 \cdot 10^4$ compared to exact computation, if we fix the number of power method iterations. But again we emphasize that the right choice of $\eps$ will depend on the downstream application at hand, and a larger value of the dataset, $n$, is needed to justify using a smaller $\eps$ in practice, over exact computation. 

Note that other considerations as the type of hardware (e.g. running on a GPU) may affect the specifics of these results, but we envision the \emph{trends} to stay consistent; we leave this as an interesting direction for future work. At larger $n$, other KDE algorithms in practice may also be preferable, such as DEANN; this is also another interesting direction for future work.

\begin{figure}
\centering
\includegraphics[width=5.5cm]{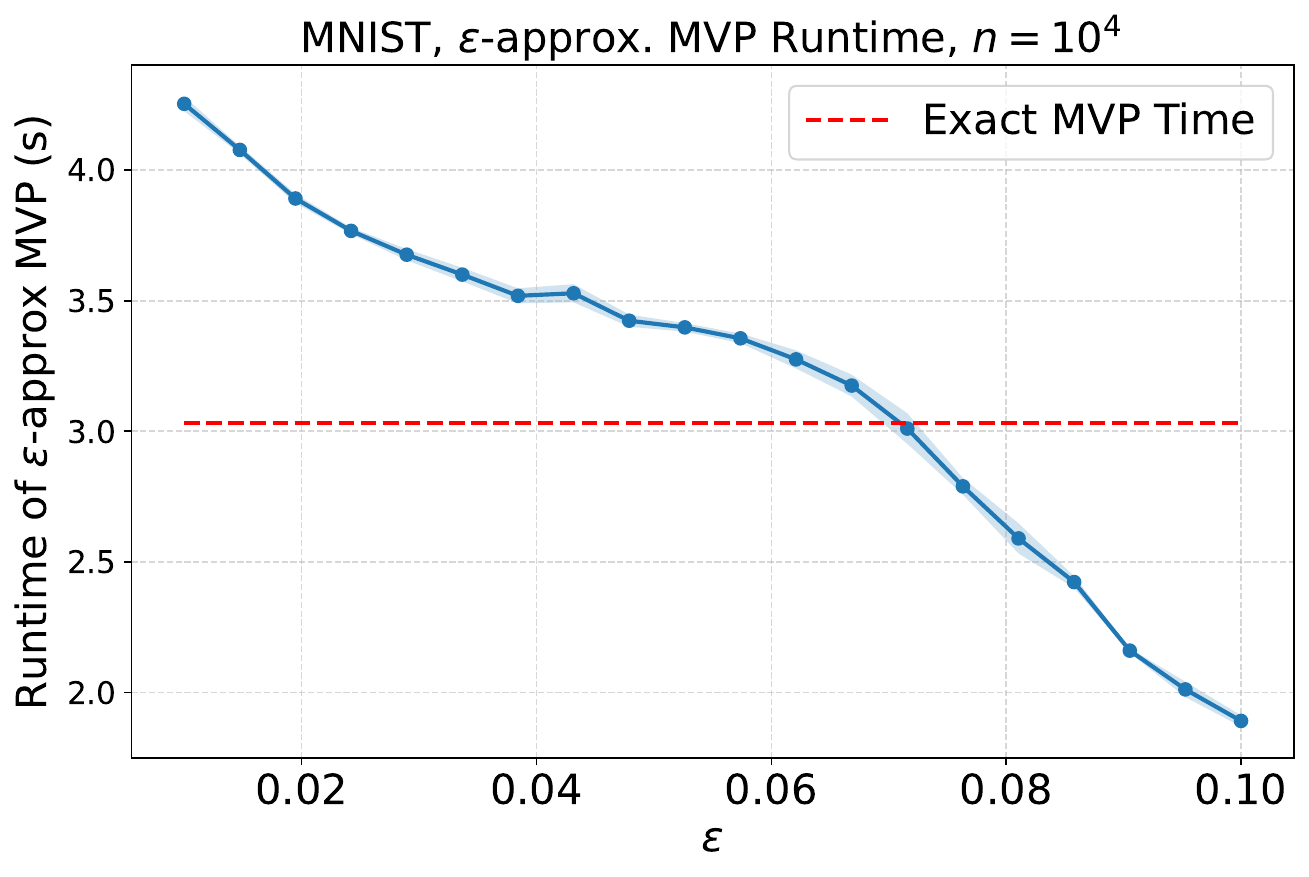}
\includegraphics[width=5.5cm]{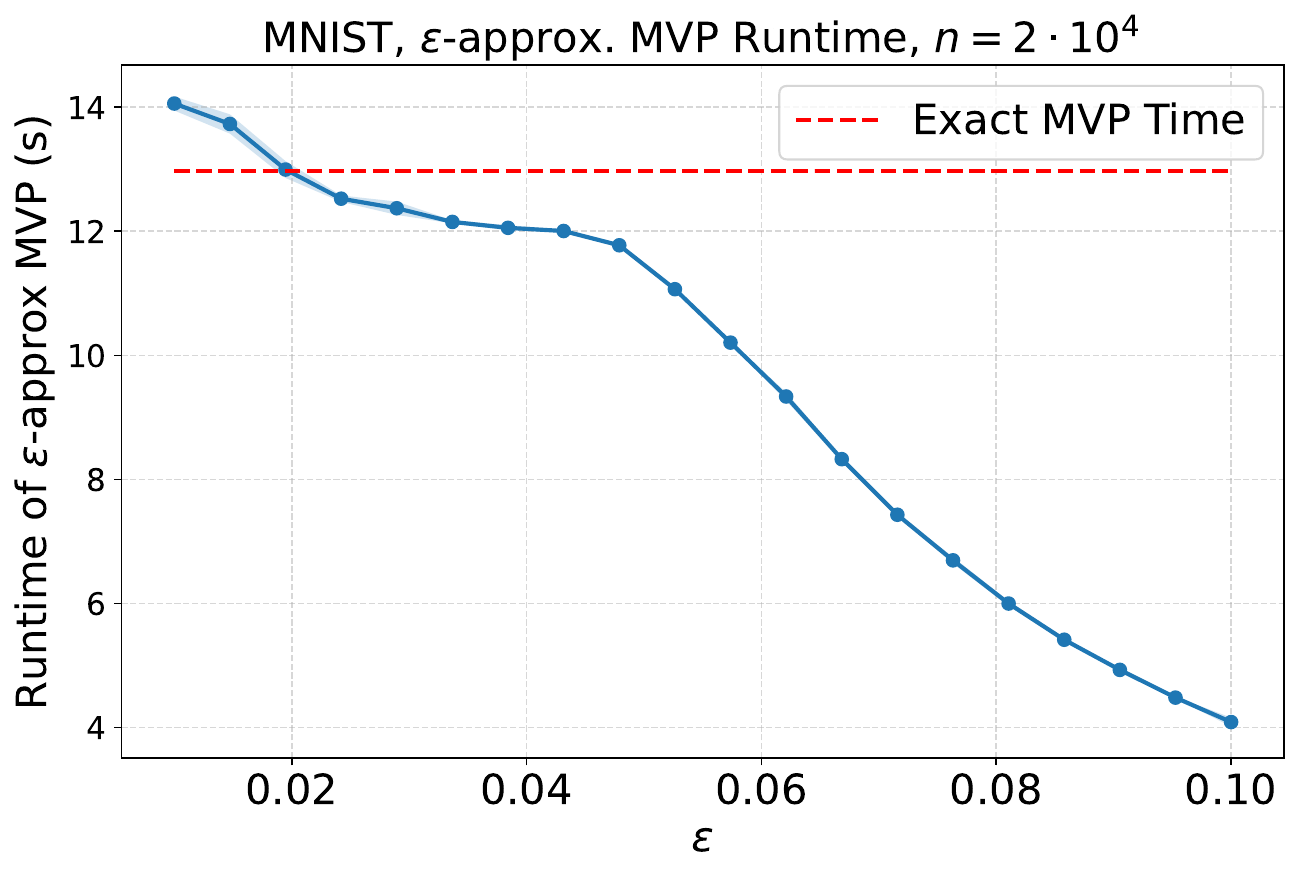}
\includegraphics[width=5.5cm]{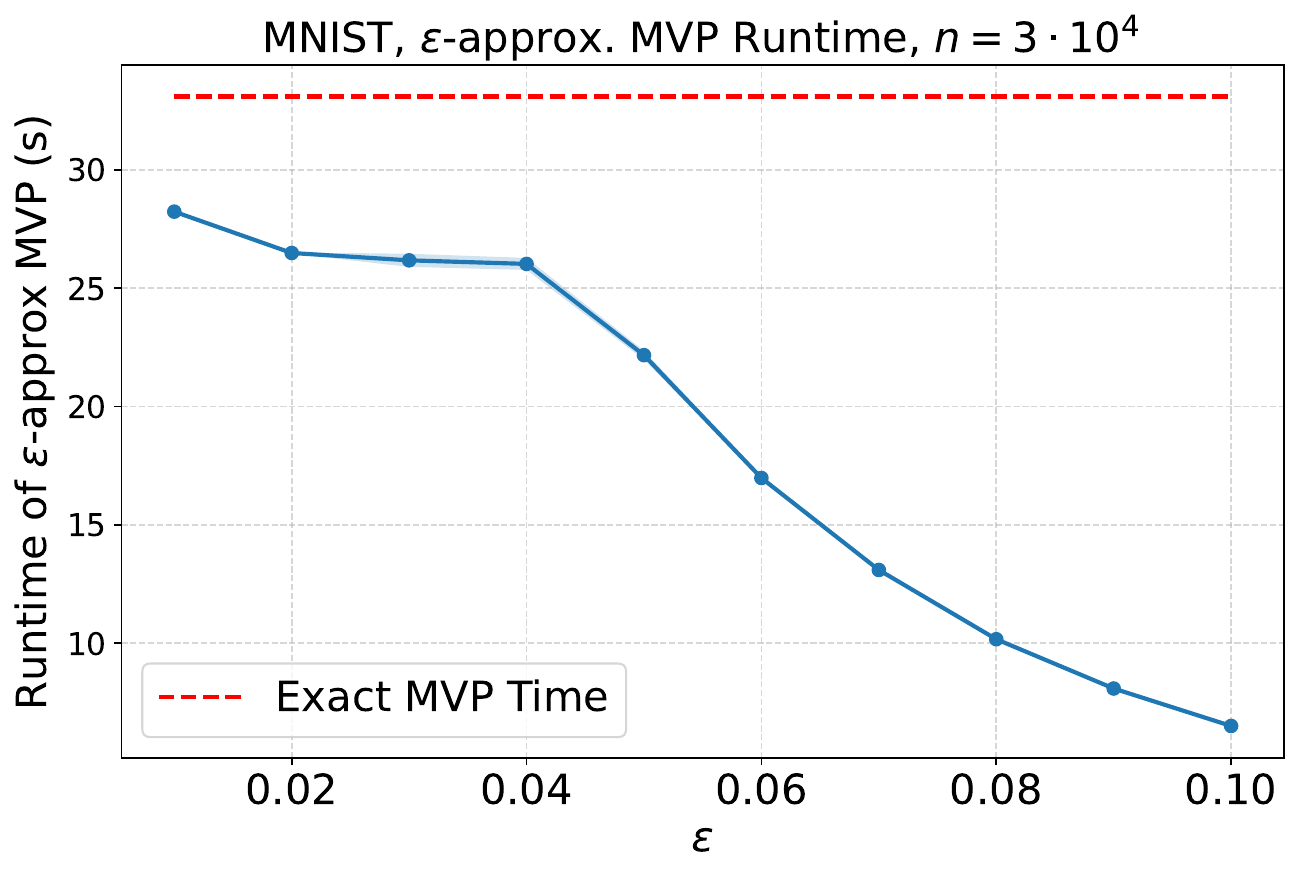}
\\
\includegraphics[width=5.5cm]{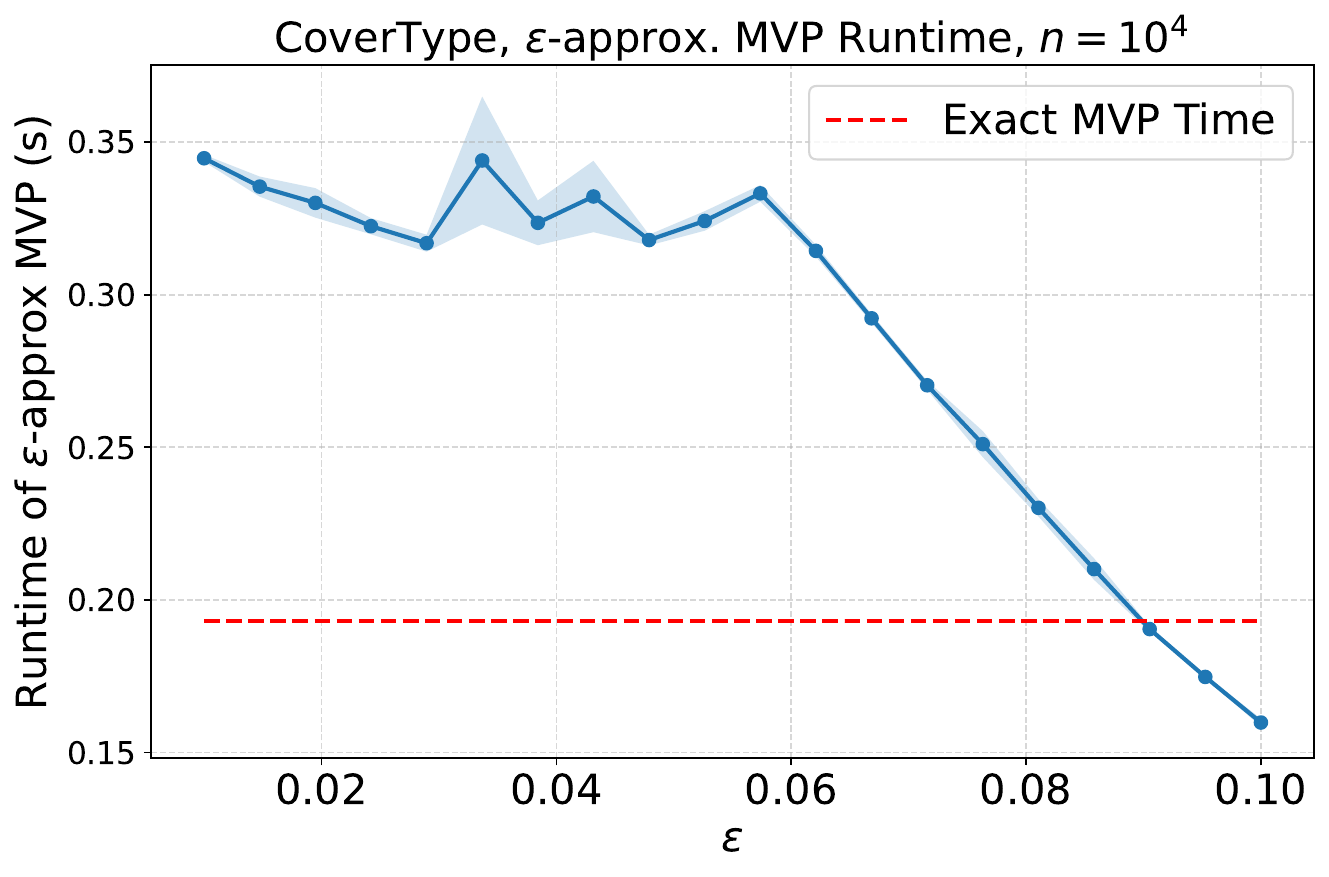}
\includegraphics[width=5.5cm]{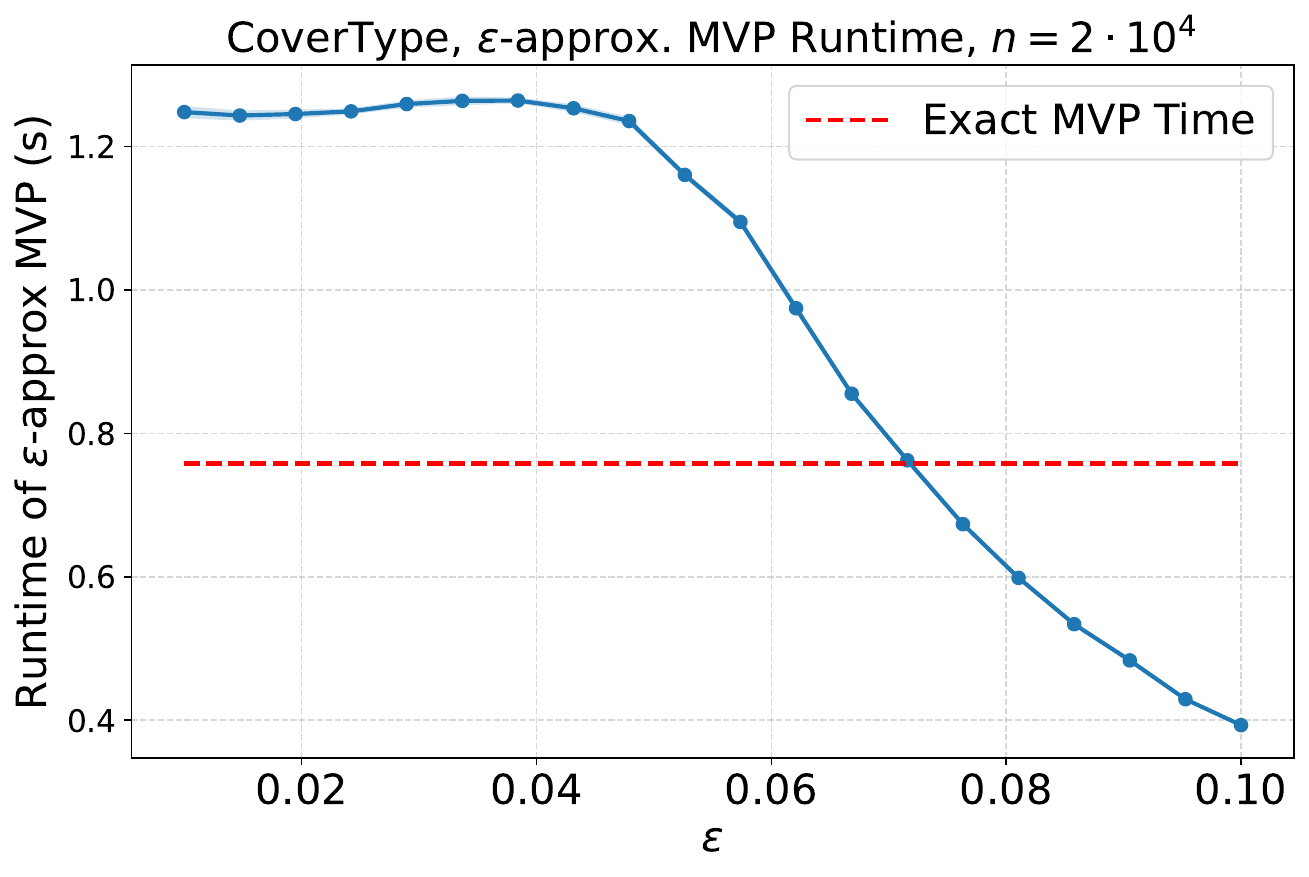}
\includegraphics[width=5.5cm]{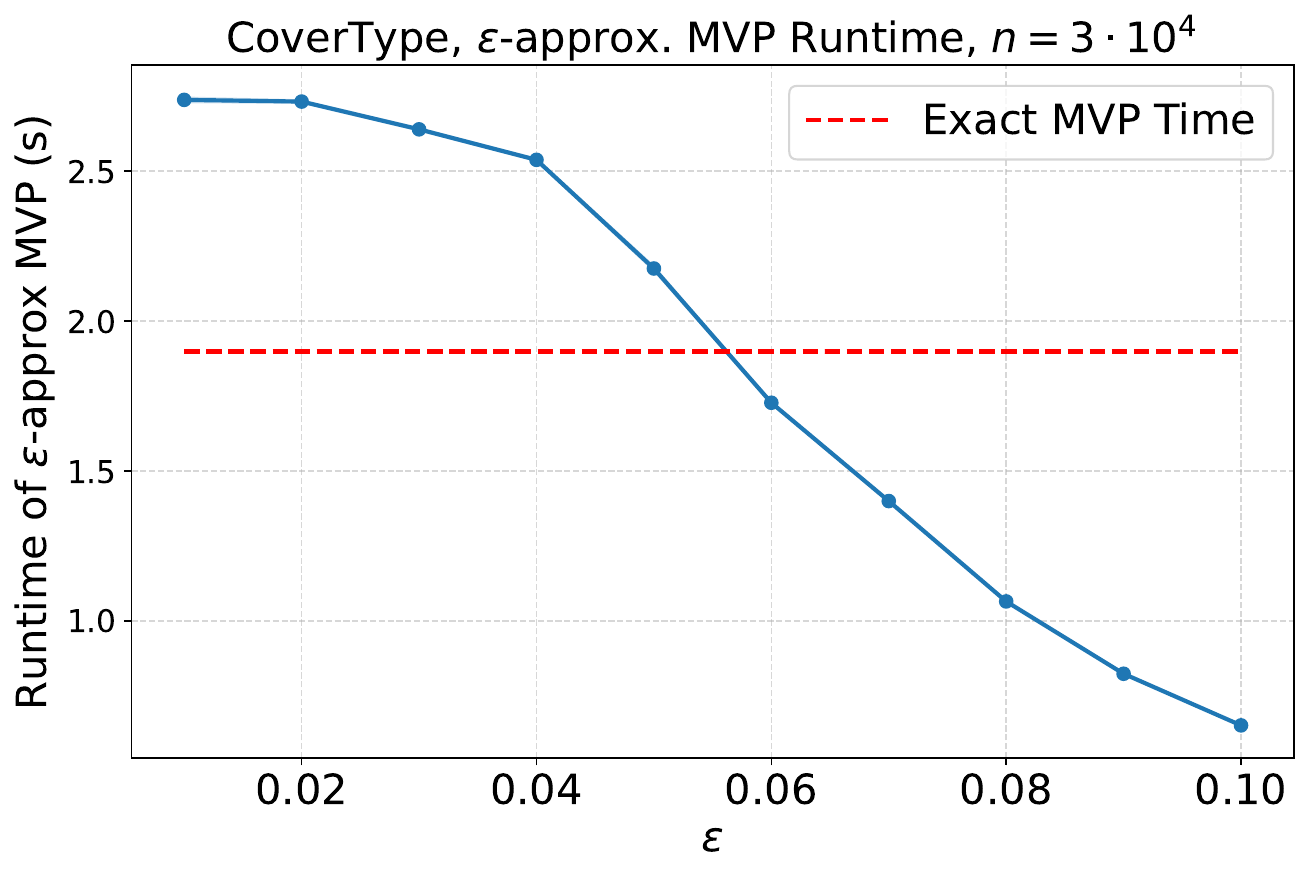}
\\
\includegraphics[width=5.5cm]{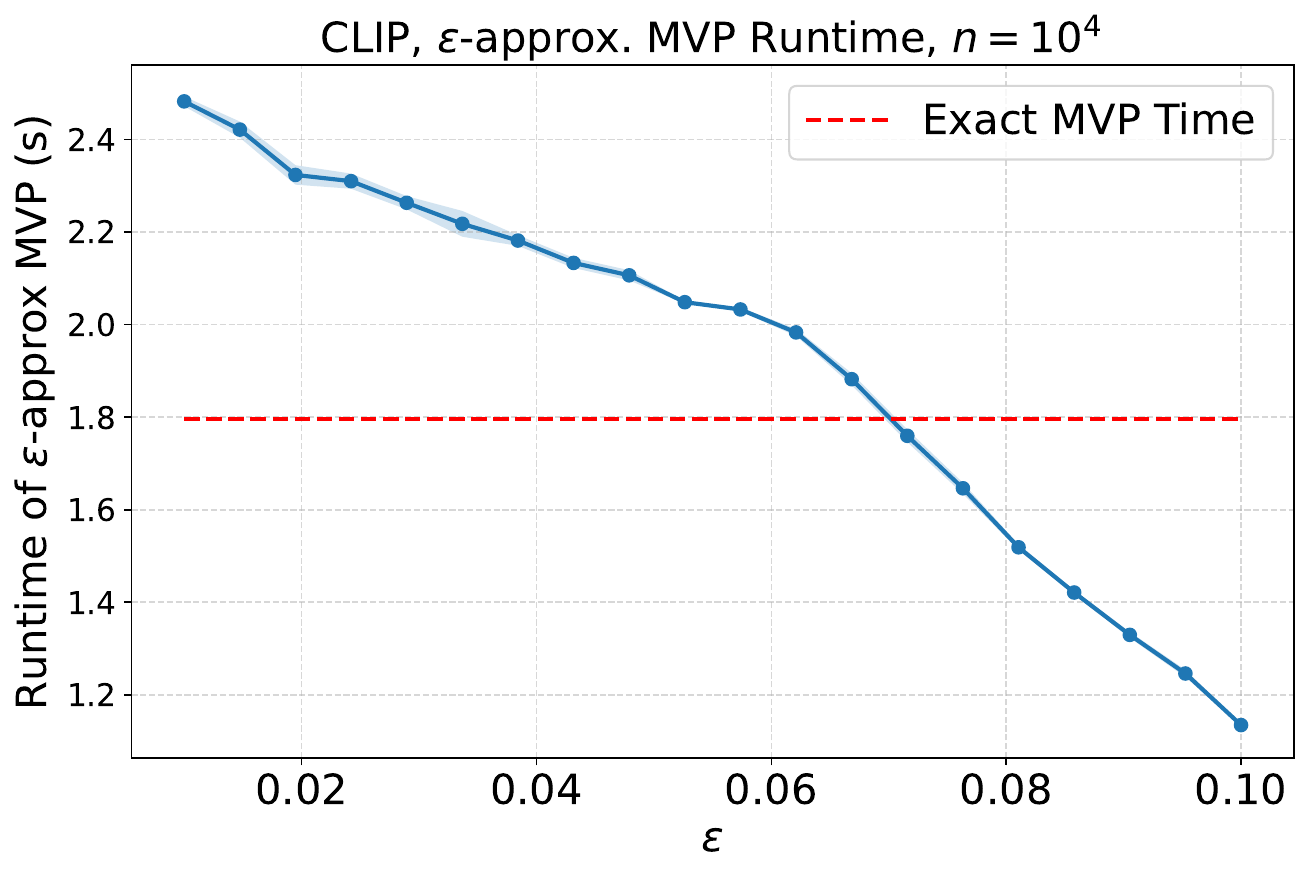}
\includegraphics[width=5.5cm]{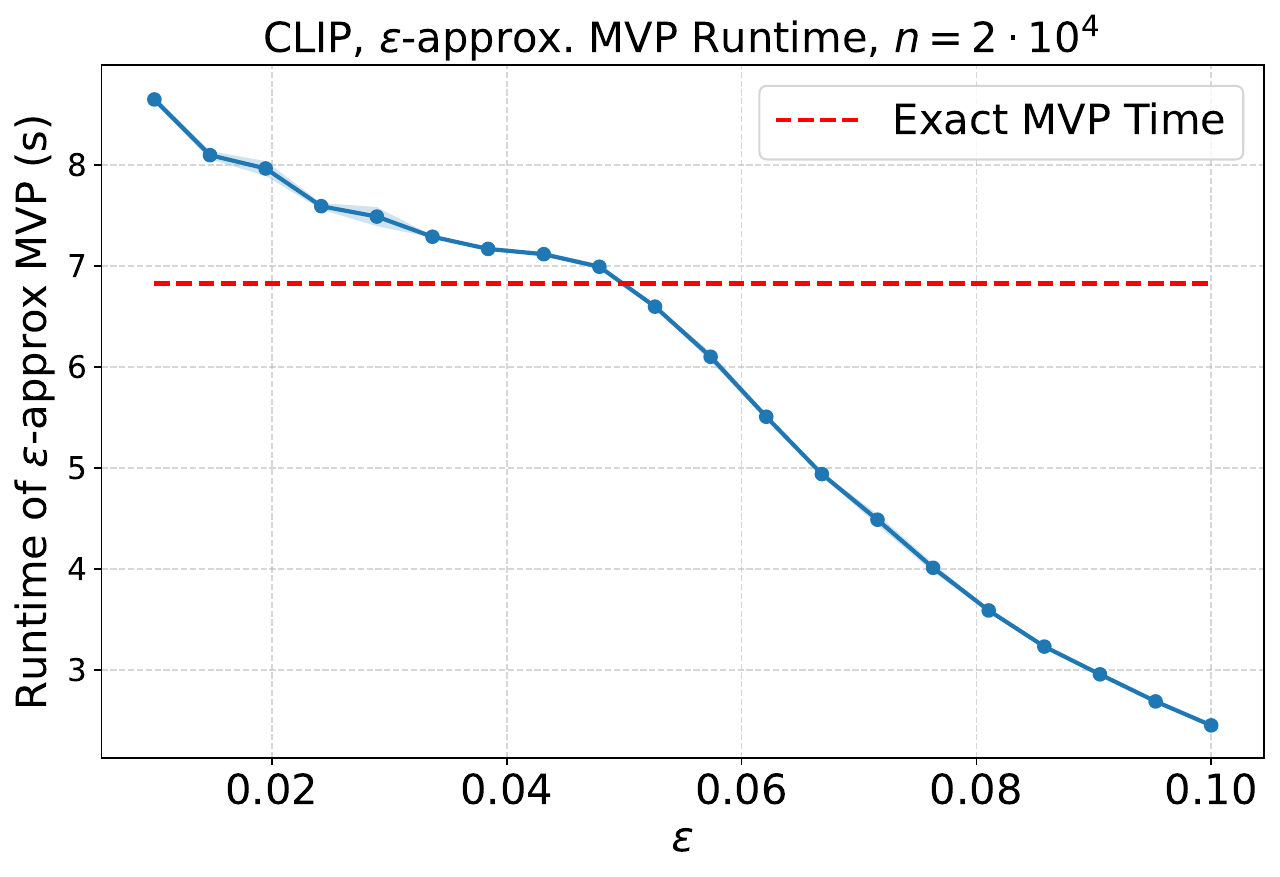}
\includegraphics[width=5.5cm]{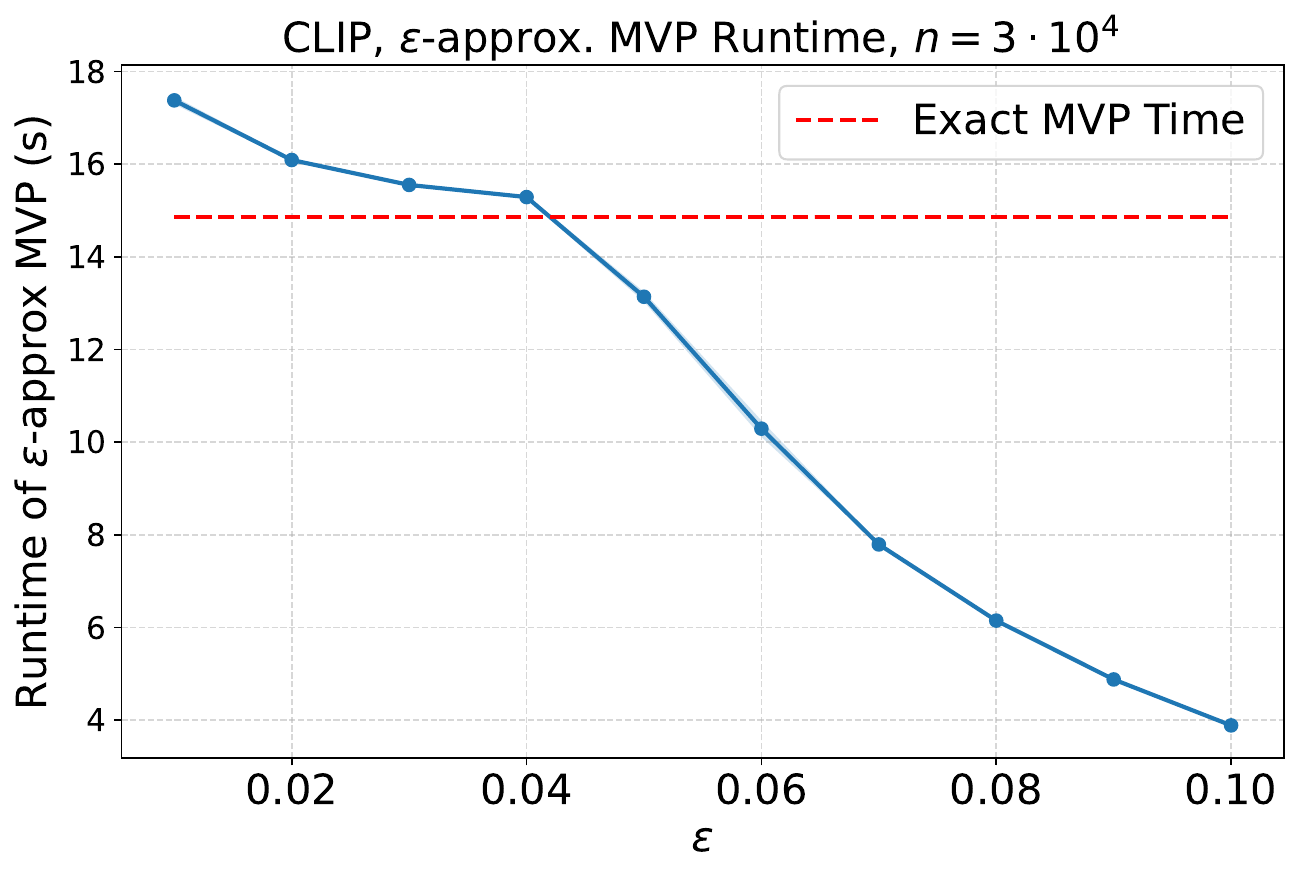}
\caption{We vary the dataset sizes as well as $\eps$ in the approximate matrix vector computation, and plot running time compared to an exact computation.\label{a71}}
\end{figure}

\begin{figure}
    \centering
    \includegraphics[width=5.5cm]{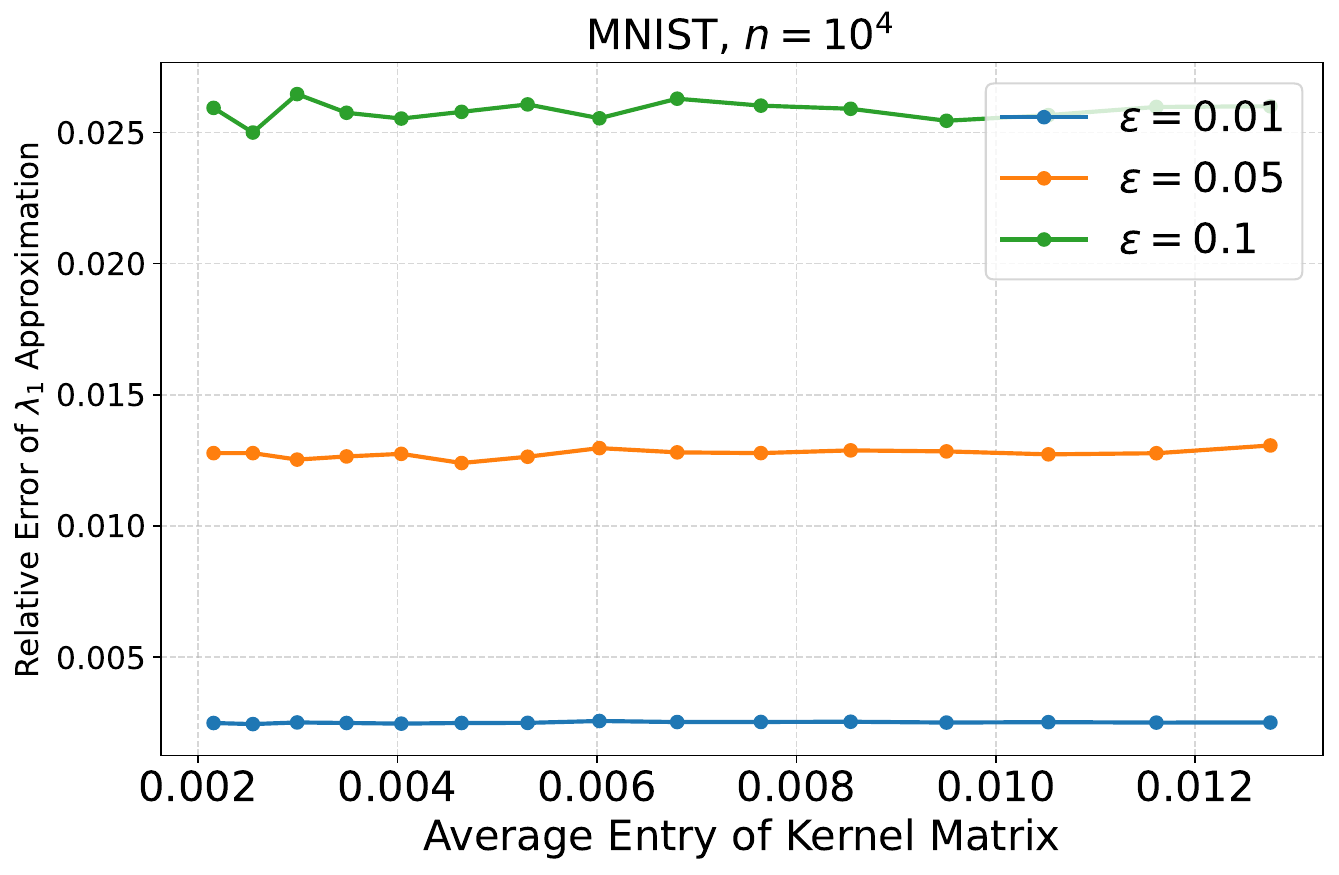}
    \includegraphics[width=5.5cm]{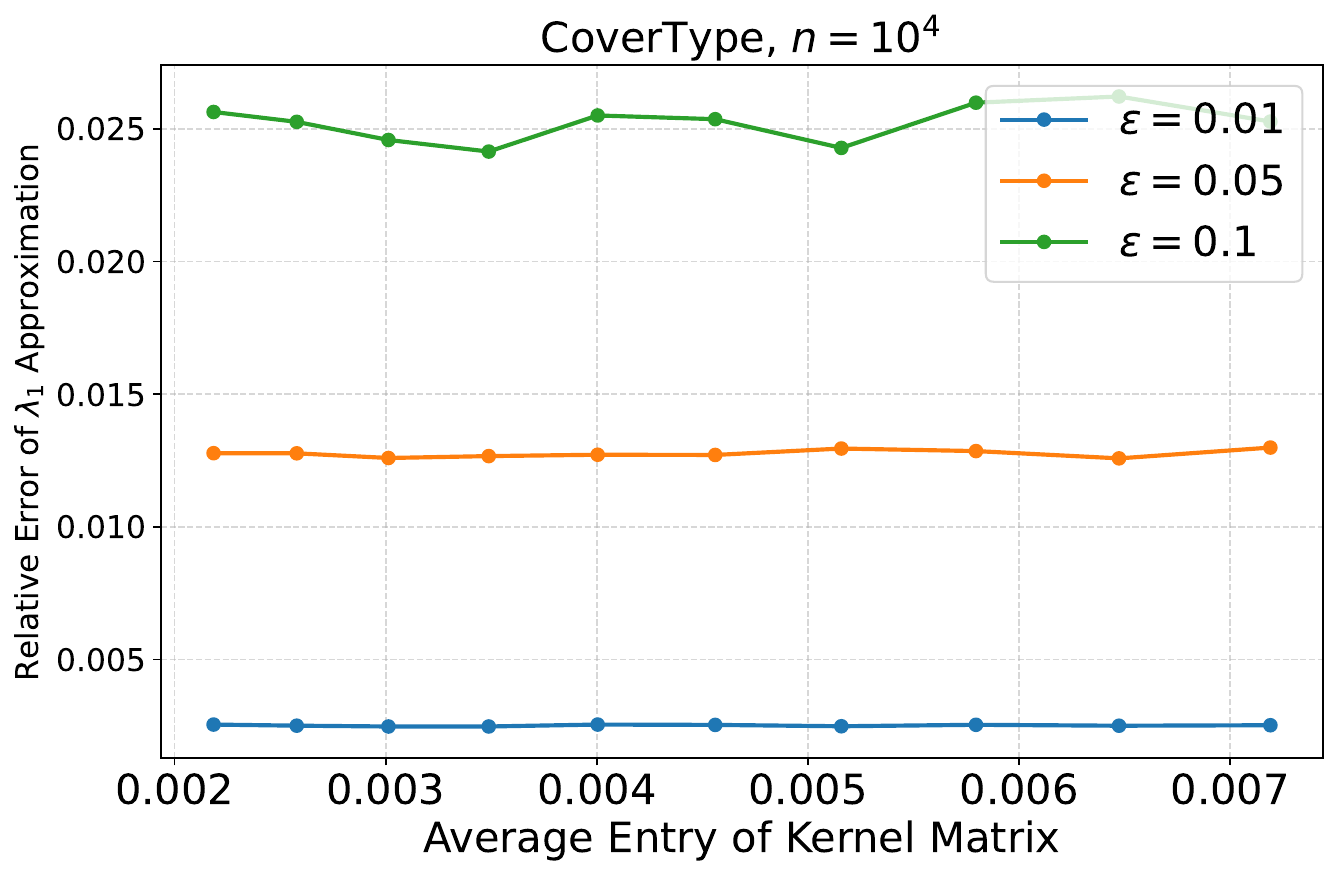}
    \includegraphics[width=5.5cm]{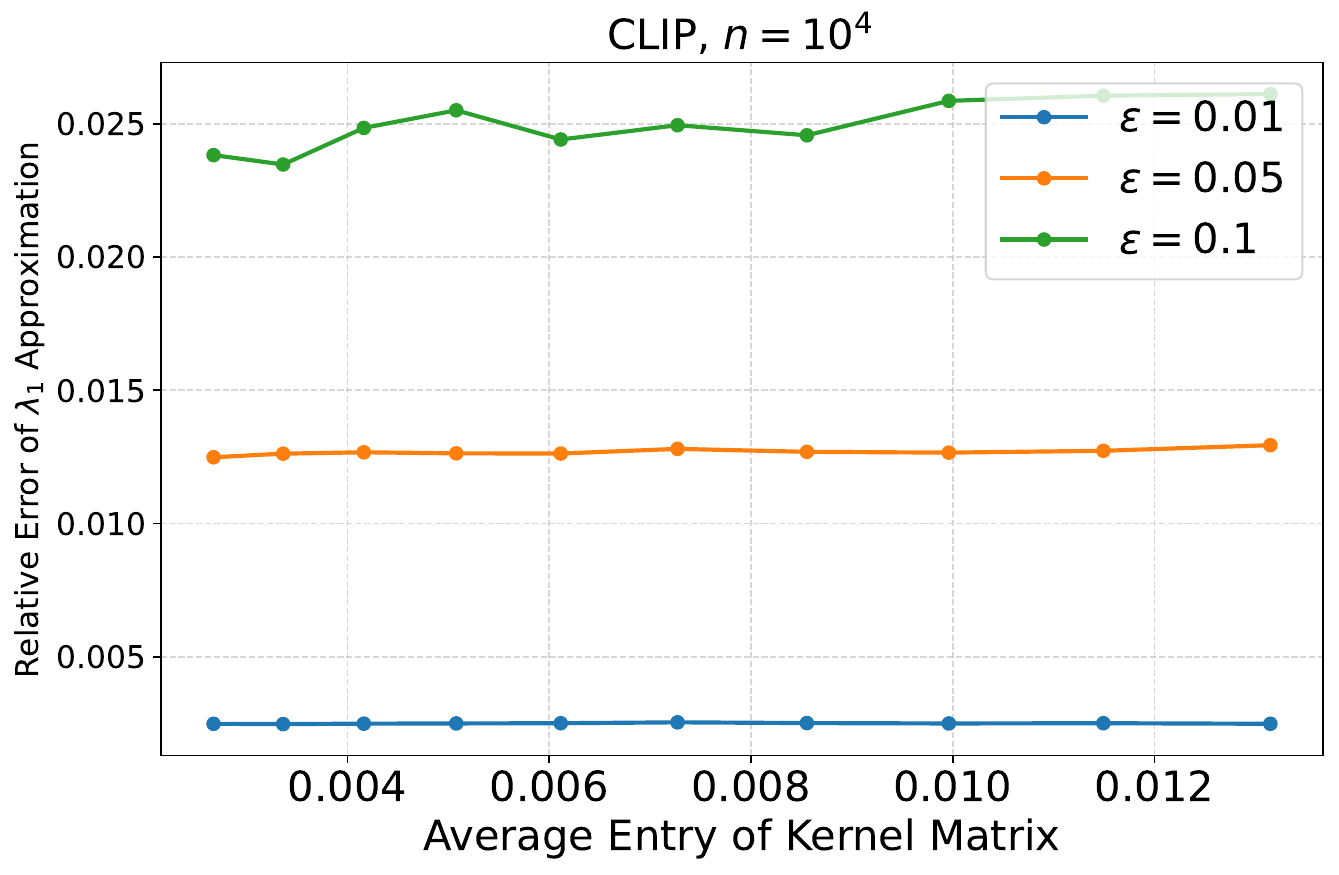}
    \caption{We vary the average kernel matrix entry (by varying the scale $\sigma$), and plot the relative error of computing $\lambda_1$ using $75$ power method iterations, across three choices of $\eps$. \label{a72}}
\end{figure}

\subsection{Comparing Our Gaussian MVP Computational Cost to That of \cite{b0}}
Lastly, we demonstrate the efficiency of our MVP approach for the Gaussian kernel (see \cref{a18}) versus that of \cite{b0}, by comparing their running times in practice. First, we note that algorithms (e.g. the ones in \cref{a45}) that answer KDE queries have an \emph{inverse polynomial} dependence of $\mu$, the additive error. This means that the smaller additive error we require, the slower the KDE query is, as expected. 

Now we note that in our algorithm for \cref{a18}, we always use a \emph{much larger} value of $\mu$ than the corresponding algorithm of \cite{b0}: our value is at least $1/\eps$ times larger. Furthermore, the runtimes of both our algorithm and that of \cite{b0} is dominated by KDE calls. This  means that to demonstrate that our algorithm is more computationally efficient, it suffices to compare the total number of KDE queries made by our algorithm and theirs (in fact such a comparison even gives an advantage to \cite{b0} since we use larger $\mu$ values). Thus, we plot the exact number of KDE queried used by both algorithms for different values of $n$ as a function of $\eps$. For both algorithms, this is exactly equal to the number of buckets or partitions of the coordinates of the query vector that the algorithms initialize, since both methods perform one KDE call per partition. 

\cref{a73} illustrates the number of queries for our method (given by roughly $\log_2(n^{1.5}/\epsilon)$, shown in blue with markers), versus the number of queries required by \cite{b0} (given by roughly $\frac{\log_2(n^{1.5}/\epsilon)}{\log_2(1+\epsilon)}$, shown in dashed red). As evident from the plot, our method requires significantly fewer queries. For example, with $n=1000$ and $\epsilon=0.1$, our method performs approximately \textbf{$7$x fewer queries}. Indeed, we observed at least this fraction reduction in the running time when we implemented our versus their $0.1$-approximate MVP, using the optimized random sampling as described before for the underlying KDE method.

Thus, no matter what the underlying KDE query implementation is, either the theoretically SOTA in \cref{a45} or a heuristic implementation, for realistic values if $n$ and $\eps$, we use a substantially fewer number of KDE queries (and each of our KDE queries is less costly as well!). 

\begin{figure}
    \centering
\includegraphics[width=\linewidth]{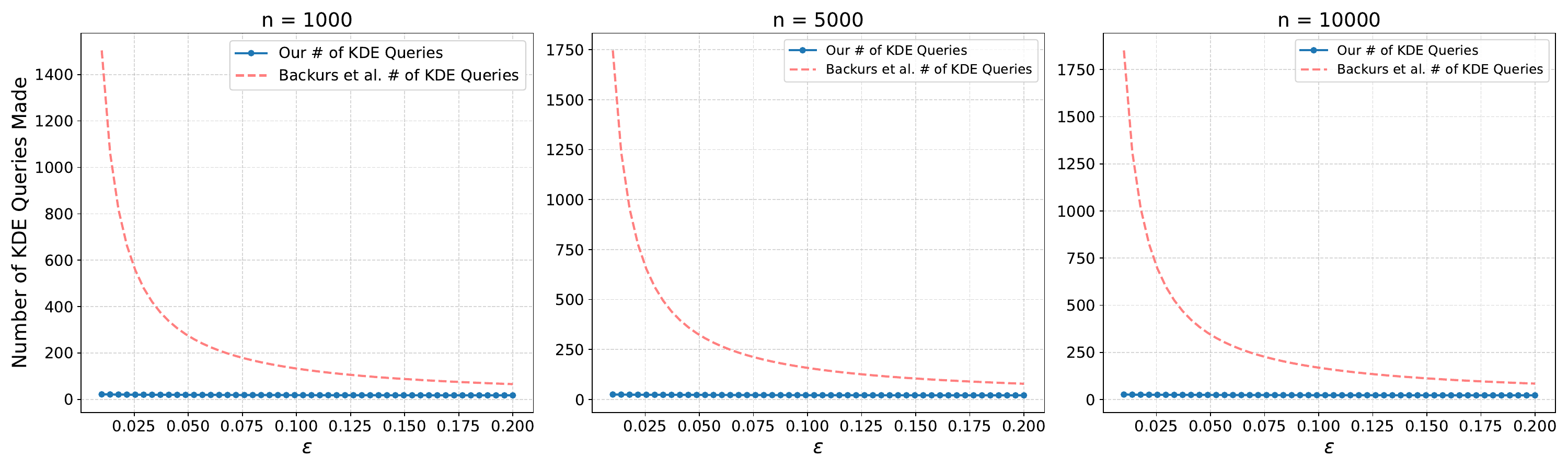}
    \caption{Comparing the number of KDE queries made.}
    \label{a73}
\end{figure}

\bibliographystyle{alpha}
\bibliography{outbib}

@inproceedings{b0,
  title={Faster kernel matrix algebra via density estimation},
  author={Backurs, Arturs and Indyk, Piotr and Musco, Cameron and Wagner, Tal},
  booktitle={International Conference on Machine Learning},
  pages={500--510},
  year={2021},
  organization={PMLR}
}

@article{b1,
author = {Thomas Hofmann and Bernhard Sch{\"o}lkopf and Alexander J. Smola},
title = {{Kernel methods in machine learning}},
volume = {36},
journal = {The Annals of Statistics},
number = {3},
publisher = {Institute of Mathematical Statistics},
pages = {1171 -- 1220},
keywords = {graphical models, machine learning, reproducing kernels, Support vector machines},
year = {2008},
doi = {10.1214/009053607000000677},
URL = {https://doi.org/10.1214/009053607000000677}
}

@book{b2,
  title={Machine learning: a probabilistic perspective},
  author={Murphy, Kevin P},
  year={2012},
  publisher={MIT press}
}

@inproceedings{b3,
 author = {Backurs, Arturs and Indyk, Piotr and Schmidt, Ludwig},
 booktitle = {Advances in Neural Information Processing Systems},
 editor = {I. Guyon and U. Von Luxburg and S. Bengio and H. Wallach and R. Fergus and S. Vishwanathan and R. Garnett},
 pages = {},
 publisher = {Curran Associates, Inc.},
 title = {On the Fine-Grained Complexity of Empirical Risk Minimization: Kernel Methods and Neural Networks},
 url = {https://proceedings.neurips.cc/paper_files/paper/2017/file/635440afdfc39fe37995fed127d7df4f-Paper.pdf},
 volume = {30},
 year = {2017}
}

@article{b4,
  title={Attention is all you need},
  author={Vaswani, Ashish and Shazeer, Noam and Parmar, Niki and Uszkoreit, Jakob and Jones, Llion and Gomez, Aidan N and Kaiser, Lukasz and Polosukhin, Illia},
  journal={Advances in neural information processing systems},
  volume={30},
  year={2017}
}

@inproceedings{b5,
  title={Kdeformer: Accelerating transformers via kernel density estimation},
  author={Zandieh, Amir and Han, Insu and Daliri, Majid and Karbasi, Amin},
  booktitle={International Conference on Machine Learning},
  pages={40605--40623},
  year={2023},
  organization={PMLR}
}

@article{b6,
  title={Fast attention requires bounded entries},
  author={Alman, Josh and Song, Zhao},
  journal={Advances in Neural Information Processing Systems},
  volume={36},
  pages={63117--63135},
  year={2023}
}

@inproceedings{b7,
title={HyperAttention: Long-context Attention in Near-Linear Time},
author={Insu Han and Rajesh Jayaram and Amin Karbasi and Vahab Mirrokni and David Woodruff and Amir Zandieh},
booktitle={The Twelfth International Conference on Learning Representations},
year={2024},
url={https://openreview.net/forum?id=Eh0Od2BJIM}
}

@inproceedings{b8,
title={Improved Algorithms for Kernel  Matrix-Vector Multiplication Under Sparsity Assumptions},
author={Piotr Indyk and Michael Kapralov and Kshiteej Sheth and Tal Wagner},
booktitle={The Thirteenth International Conference on Learning Representations},
year={2025},
url={https://openreview.net/forum?id=wLnls9LS3x}
}

@inproceedings{b9,
  title={Hashing-based-estimators for kernel density in high dimensions},
  author={Charikar, Moses and Siminelakis, Paris},
  booktitle={2017 IEEE 58th Annual Symposium on Foundations of Computer Science (FOCS)},
  pages={1032--1043},
  year={2017},
  organization={IEEE}
}

@inproceedings{b10,
  title={Efficient density evaluation for smooth kernels},
  author={Backurs, Arturs and Charikar, Moses and Indyk, Piotr and Siminelakis, Paris},
  booktitle={2018 IEEE 59th Annual Symposium on Foundations of Computer Science (FOCS)},
  pages={615--626},
  year={2018},
  organization={IEEE}
}

@article{b11,
  title={Space and time efficient kernel density estimation in high dimensions},
  author={Backurs, Arturs and Indyk, Piotr and Wagner, Tal},
  journal={Advances in neural information processing systems},
  volume={32},
  year={2019}
}

@inproceedings{b12,
  title={Kernel density estimation through density constrained near neighbor search},
  author={Charikar, Moses and Kapralov, Michael and Nouri, Navid and Siminelakis, Paris},
  booktitle={2020 IEEE 61st Annual Symposium on Foundations of Computer Science (FOCS)},
  pages={172--183},
  year={2020},
  organization={IEEE}
}

@inproceedings{b13,
  title={Algorithms and hardness for linear algebra on geometric graphs},
  author={Alman, Josh and Chu, Timothy and Schild, Aaron and Song, Zhao},
  booktitle={2020 IEEE 61st Annual Symposium on Foundations of Computer Science (FOCS)},
  pages={541--552},
  year={2020},
  organization={IEEE}
}

@inproceedings{b14,
  author       = {Ainesh Bakshi and
                  Piotr Indyk and
                  Praneeth Kacham and
                  Sandeep Silwal and
                  Samson Zhou},
  title        = {Subquadratic Algorithms for Kernel Matrices via Kernel Density Estimation},
  booktitle    = {The Eleventh International Conference on Learning Representations,
                  {ICLR} 2023, Kigali, Rwanda, May 1-5, 2023},
  publisher    = {OpenReview.net},
  year         = {2023},
  url          = {https://openreview.net/forum?id=74A-FDAyiL},
  timestamp    = {Wed, 24 Jul 2024 16:50:33 +0200},
  biburl       = {https://dblp.org/rec/conf/iclr/BakshiIKSZ23.bib},
  bibsource    = {dblp computer science bibliography, https://dblp.org}
}

@inproceedings{b15,
  title={A quasi-monte carlo data structure for smooth kernel evaluations},
  author={Charikar, Moses and Kapralov, Michael and Waingarten, Erik},
  booktitle={Proceedings of the 2024 Annual ACM-SIAM Symposium on Discrete Algorithms (SODA)},
  pages={5118--5144},
  year={2024},
  organization={SIAM}
}

@article{b16,
  title={A randomized algorithm for principal component analysis},
  author={Rokhlin, Vladimir and Szlam, Arthur and Tygert, Mark},
  journal={SIAM Journal on Matrix Analysis and Applications},
  volume={31},
  number={3},
  pages={1100--1124},
  year={2010},
  publisher={SIAM}
}

@article{b17,
  title={Randomized block krylov methods for stronger and faster approximate singular value decomposition},
  author={Musco, Cameron and Musco, Christopher},
  journal={Advances in neural information processing systems},
  volume={28},
  year={2015}
}

@inproceedings{b18,
  title={Testing positive semi-definiteness via random submatrices},
  author={Bakshi, Ainesh and Chepurko, Nadiia and Jayaram, Rajesh},
  booktitle={2020 IEEE 61st Annual Symposium on Foundations of Computer Science (FOCS)},
  pages={1191--1202},
  year={2020},
  organization={IEEE}
}

@inproceedings{b19,
  title={Optimal eigenvalue approximation via sketching},
  author={Swartworth, William and Woodruff, David P},
  booktitle={Proceedings of the 55th Annual ACM Symposium on Theory of Computing},
  pages={145--155},
  year={2023}
}

@article{b20,
  title={Sublinear time eigenvalue approximation via random sampling},
  author={Bhattacharjee, Rajarshi and Dexter, Gregory and Drineas, Petros and Musco, Cameron and Ray, Archan},
  journal={Algorithmica},
  volume={86},
  number={6},
  pages={1764--1829},
  year={2024},
  publisher={Springer}
}

@inproceedings{b21,
  title={Tight Sampling Bounds for Eigenvalue Approximation},
  author={Swartworth, William and Woodruff, David P},
  booktitle={Proceedings of the 2025 Annual ACM-SIAM Symposium on Discrete Algorithms (SODA)},
  pages={489--516},
  year={2025},
  organization={SIAM}
}

@inproceedings{b22,
  author       = {Cyrus Rashtchian and
                  David P. Woodruff and
                  Hanlin Zhu},
  editor       = {Jaroslaw Byrka and
                  Raghu Meka},
  title        = {Vector-Matrix-Vector Queries for Solving Linear Algebra, Statistics,
                  and Graph Problems},
  booktitle    = {Approximation, Randomization, and Combinatorial Optimization. Algorithms
                  and Techniques, {APPROX/RANDOM} 2020, August 17-19, 2020, Virtual
                  Conference},
  series       = {LIPIcs},
  volume       = {176},
  pages        = {26:1--26:20},
  publisher    = {Schloss Dagstuhl - Leibniz-Zentrum f{\"{u}}r Informatik},
  year         = {2020},
  url          = {https://doi.org/10.4230/LIPIcs.APPROX/RANDOM.2020.26},
  doi          = {10.4230/LIPICS.APPROX/RANDOM.2020.26},
  timestamp    = {Wed, 21 Aug 2024 22:46:00 +0200},
  biburl       = {https://dblp.org/rec/conf/approx/RashtchianWZ20.bib},
  bibsource    = {dblp computer science bibliography, https://dblp.org}
}

@article{b23,
  title={On kernel-target alignment},
  author={Cristianini, Nello and Shawe-Taylor, John and Elisseeff, Andre and Kandola, Jaz},
  journal={Advances in neural information processing systems},
  volume={14},
  year={2001}
}

@article{b24,
  title={Stein’s method meets computational statistics: A review of some recent developments},
  author={Anastasiou, Andreas and Barp, Alessandro and Briol, Fran{\c{c}}ois-Xavier and Ebner, Bruno and Gaunt, Robert E and Ghaderinezhad, Fatemeh and Gorham, Jackson and Gretton, Arthur and Ley, Christophe and Liu, Qiang and others},
  journal={Statistical Science},
  volume={38},
  number={1},
  pages={120--139},
  year={2023},
  publisher={Institute of Mathematical Statistics}
}

@article{b25,
  title={A new algorithm for optimal 2-constraint satisfaction and its implications},
  author={Williams, Ryan},
  journal={Theoretical Computer Science},
  volume={348},
  number={2-3},
  pages={357--365},
  year={2005},
  publisher={Elsevier}
}

@inproceedings{b26,
  title={Consequences of faster alignment of sequences},
  author={Abboud, Amir and Williams, Virginia Vassilevska and Weimann, Oren},
  booktitle={Automata, Languages, and Programming: 41st International Colloquium, ICALP 2014, Copenhagen, Denmark, July 8-11, 2014, Proceedings, Part I 41},
  pages={39--51},
  year={2014},
  organization={Springer}
}

@inproceedings{b27,
  title={Tight hardness results for LCS and other sequence similarity measures},
  author={Abboud, Amir and Backurs, Arturs and Williams, Virginia Vassilevska},
  booktitle={2015 IEEE 56th Annual Symposium on Foundations of Computer Science},
  pages={59--78},
  year={2015},
  organization={IEEE}
}

@article{b28,
  title={Randomized algorithms for matrices and data},
  author={Mahoney, Michael W and others},
  journal={Foundations and Trends{\textregistered} in Machine Learning},
  volume={3},
  number={2},
  pages={123--224},
  year={2011},
  publisher={Now Publishers, Inc.}
}

@article{b29,
  title={Sketching as a tool for numerical linear algebra},
  author={Woodruff, David P and others},
  journal={Foundations and Trends{\textregistered} in Theoretical Computer Science},
  volume={10},
  number={1--2},
  pages={1--157},
  year={2014},
  publisher={Now Publishers, Inc.}
}

@inproceedings{b30,
  author       = {Michael B. Cohen and
                  Jelani Nelson and
                  David P. Woodruff},
  editor       = {Ioannis Chatzigiannakis and
                  Michael Mitzenmacher and
                  Yuval Rabani and
                  Davide Sangiorgi},
  title        = {Optimal Approximate Matrix Product in Terms of Stable Rank},
  booktitle    = {43rd International Colloquium on Automata, Languages, and Programming,
                  {ICALP} 2016, July 11-15, 2016, Rome, Italy},
  series       = {LIPIcs},
  volume       = {55},
  pages        = {11:1--11:14},
  publisher    = {Schloss Dagstuhl - Leibniz-Zentrum f{\"{u}}r Informatik},
  year         = {2016},
  url          = {https://doi.org/10.4230/LIPIcs.ICALP.2016.11},
  doi          = {10.4230/LIPICS.ICALP.2016.11},
  timestamp    = {Tue, 11 Feb 2020 15:52:14 +0100},
  biburl       = {https://dblp.org/rec/conf/icalp/CohenNW16.bib},
  bibsource    = {dblp computer science bibliography, https://dblp.org}
}

@inproceedings{b31,
  title={Deann: Speeding up kernel-density estimation using approximate nearest neighbor search},
  author={Karppa, Matti and Aum{\"u}ller, Martin and Pagh, Rasmus},
  booktitle={International Conference on Artificial Intelligence and Statistics},
  pages={3108--3137},
  year={2022},
  organization={PMLR}
}

@article{b32,
  title={Comparative accuracies of artificial neural networks and discriminant analysis in predicting forest cover types from cartographic variables},
  author={Blackard, Jock A and Dean, Denis J},
  journal={Computers and electronics in agriculture},
  volume={24},
  number={3},
  pages={131--151},
  year={1999},
  publisher={Elsevier}
}

@inproceedings{b33,
  title={Rehashing kernel evaluation in high dimensions},
  author={Siminelakis, Paris and Rong, Kexin and Bailis, Peter and Charikar, Moses and Levis, Philip},
  booktitle={International Conference on Machine Learning},
  pages={5789--5798},
  year={2019},
  organization={PMLR}
}

@inproceedings{b34,
title={Efficiently Computing Similarities to Private Datasets},
author={Arturs Backurs and Zinan Lin and Sepideh Mahabadi and Sandeep Silwal and Jakub Tarnawski},
booktitle={The Twelfth International Conference on Learning Representations},
year={2024},
url={https://openreview.net/forum?id=HMe5CJv9dQ}
}

@article{b35,
  title={Revisiting the Nystr{\"o}m method for improved large-scale machine learning},
  author={Gittens, Alex and Mahoney, Michael W},
  journal={The Journal of Machine Learning Research},
  volume={17},
  number={1},
  pages={3977--4041},
  year={2016},
  publisher={JMLR. org}
}

@article{b36,
  title={Scikit-learn: Machine Learning in {P}ython},
  author={Pedregosa, F. and Varoquaux, G. and Gramfort, A. and Michel, V.
          and Thirion, B. and Grisel, O. and Blondel, M. and Prettenhofer, P.
          and Weiss, R. and Dubourg, V. and Vanderplas, J. and Passos, A. and
          Cournapeau, D. and Brucher, M. and Perrot, M. and Duchesnay, E.},
  journal={Journal of Machine Learning Research},
  volume={12},
  pages={2825--2830},
  year={2011}
}

@article{b37,
  title={Using the Nystr{\"o}m method to speed up kernel machines},
  author={Williams, Christopher and Seeger, Matthias},
  journal={Advances in neural information processing systems},
  volume={13},
  year={2000}
}

\end{document}